\setlist[itemize]{noitemsep, topsep=0pt}
\setlist[enumerate]{itemsep=5pt, topsep=5pt, leftmargin=25pt}
\definecolor{DarkRed}{rgb}{0.5,0.1,0.1}
\definecolor{DarkBlue}{rgb}{0.1,0.1,0.5}
\newtheorem{theorem}{Theorem}
\definecolor{verylightblue}{rgb}{0.7,0.8,1}
  {\begin{mdframed}[backgroundcolor=verylightblue]\begin{theorem}}%
  {\end{theorem}\end{mdframed}}
\definecolor{verylightgray}{gray}{0.95}
  {\begin{mdframed}[backgroundcolor=verylightgray]\begin{proof}}%
  {\end{proof}\end{mdframed}}
\newtheorem{lemma}{Lemma}
\definecolor{verylightred}{rgb}{1,0.8,0.8}
  {\begin{mdframed}[backgroundcolor=verylightred]\begin{lemma}}%
  {\end{lemma}\end{mdframed}}
\newtheorem{proposition}{Proposition}
  {\begin{mdframed}[backgroundcolor=verylightblue]\begin{proposition}}%
  {\end{proposition}\end{mdframed}}
\newtheorem{definition}{Definition}
\theoremstyle{remark}
\newtheorem*{rep@theorem}{\rep@title}
\newcommand{\newreptheorem}[2]
{\newenvironment{rep#1}[1]
{\def\rep@title{#2 \ref{##1}} \begin{rep@theorem}}%
 {\end{rep@theorem}}}
\DeclareMathOperator*{\argmax}{arg\,max}
\DeclareMathOperator*{\argmin}{arg\,min}
\newcommand{\PP}[1]{\mathbb{P}\!\left\{{#1}\right\}} % Probability
\newcommand{\E}{\mathbb{E}}
\newcommand{\PPst}[2]{\mathbb{P}\!\left\{{#1}\ \middle| \ {#2}\right\}} % Conditional probability
\def\O{\mathcal{O}}
\newcommand{\ignore}[1]{}
\newcommand{\eps}{\epsilon}
\let\emptyset\varnothing
\newcommand{\A}{\mathcal{A}}
\date{\thedate}
\author{\theauthor}
\title{\thetitle}
\def\P{\mathcal{P}}
\def\D{\mathcal{D}}
\def\G{\mathcal{G}}
\def\ci{\mathrm{CI}}
\newcommand{\dotfrac}[2]{
\mathchoice
{\ooalign{$\genfrac{}{}{0pt}{0}{#1}{#2}$\cr\leavevmode\cleaders\hb@xt@ .22em{\hss $\displaystyle\cdot$\hss}\hfill\kern\z@\cr}}
{\ooalign{$\genfrac{}{}{0pt}{1}{#1}{#2}$\cr\leavevmode\cleaders\hb@xt@ .22em{\hss $\textstyle\cdot$\hss}\hfill\kern\z@\cr}}
{\ooalign{$\genfrac{}{}{0pt}{2}{#1}{#2}$\cr\leavevmode\cleaders\hb@xt@ .22em{\hss $\scriptstyle\cdot$\hss}\hfill\kern\z@\cr}}
{\ooalign{$\genfrac{}{}{0pt}{3}{#1}{#2}$\cr\leavevmode\cleaders\hb@xt@ .22em{\hss $\scriptscriptstyle\cdot$\hss}\hfill\kern\z@\cr}}
}
\long\def\@makecaption#1#2{
        \vskip 0.8ex
        \setbox\@tempboxa\hbox{\small {\bf #1:} #2}
        \parindent 1.5em  %% How can we use the global value of this???
        \dimen0=\hsize
        \advance\dimen0 by -3em
        \ifdim \wd\@tempboxa >\dimen0
                \hbox to \hsize{
                        \parindent 0em
                        \hfil 
                        \parbox{\dimen0}{\def\baselinestretch{0.96}\small
                                {\bf #1.} #2
                                %%\unhbox\@tempboxa
                                } 
                        \hfil}
        \else \hbox to \hsize{\hfil \box\@tempboxa \hfil}
        \fi
        }
\newcommand{\thedate}{}
\newcommand{\theauthor}{Paula Gradu*$^1$,~ Tijana Zrnic*$^1$,~ Yixin Wang$^2$,~ Michael I. Jordan$^1$\\ \\
$^1$University of California, Berkeley\\
$^2$University of Michigan}
\newcommand{\thetitle}{Valid Inference After Causal Discovery}
\begin{document}

\maketitle

\begin{abstract}
Causal discovery and causal effect estimation are two fundamental tasks in causal inference. While many methods have been developed for each task individually, statistical challenges arise when applying these methods jointly: estimating causal effects after running causal discovery algorithms on the same data leads to ``double dipping,'' invalidating the coverage guarantees of classical confidence intervals. To this end, we develop tools for valid post-causal-discovery inference.
Across empirical studies, we show that a naive combination of causal discovery and subsequent inference algorithms leads to highly inflated miscoverage rates; on the other hand, applying our method provides reliable coverage while achieving more accurate causal discovery than data splitting.
\end{abstract}

\section{Introduction}
\label{sec:intro}

\emph{Causal discovery} and \emph{causal estimation} are
fundamental tasks in causal reasoning and decision-making. Causal
discovery aims to identify the underlying structure of the causal
problem, often in the form of a graphical representation that makes
explicit which variables causally influence which other variables, while causal estimation
aims to quantify the magnitude of the effect of one variable on another.
These two goals frequently go hand in hand: quantifying causal
effects requires adjustments that rely on either assuming or
discovering the underlying graphical structure.

Methodologies for causal discovery and causal estimation have mostly
been developed separately, and the statistical challenges that arise when
solving these problems jointly have largely been overlooked.
Indeed, a naive black-box combination of causal discovery
algorithms and standard inference methods for causal effects suffers
from ``double dipping.''  That is, classical confidence intervals, such as those
used for linear regression coefficients, need no longer cover the
target estimand if the causal structure is not fixed a priori but is
estimated on the same data used to compute the intervals.

Consider the following example. Suppose we collect a data set with
measurements of completely independent variables.  Since all the variables
are independent, the effect of any variable on another variable is of zero magnitude. However, if
the number of measured variables is sufficiently large and the sample
size is finite, it is likely that, purely by chance, there will be two
variables that seem sufficiently correlated, leading the causal
discovery algorithm to believe there is a causal link between
them. The problem is then compounded---the estimated effect along
this link will likely be biased away from zero because, after all, that is what drew
the algorithm to assert the existence of a causal link in the first place.

More generally, \emph{asserting the existence of a causal relationship
biases the estimated effect size toward significance}.
This phenomenon, whereby model selection can lead to a seemingly
significant relationship between a predictor and an outcome even if
they are perfectly independent, is often known as Freedman's
paradox~\citep{freedman1983note}.

% We illustrate the phenomenon numerically
% in \figref{freedman}.

% \begin{figure}[t]
% \center
% \includegraphics[width=0.5\linewidth]{validity_empty_graph_intro.pdf}
% \caption{Based on a simulated data set of i.i.d.\ standard normal variables, we learn a causal graph via an off-the-shelf causal discovery method (GES~\citep{chickering2002optimal})
% and subsequently estimate a causal effect along a discovered edge (if such an edge exists).
% We plot the probability of miscoverage of the true effect (equal to zero) resulting from naive
% t-intervals at target error level $0.05$. We see that even for $d=5$ variables, we require over
% $2000$ data points to eliminate the bias due to double dipping, and in the higher-dimensional
% settings (with $d\in\{10,15\}$ variables) there is a severe violation of the guarantee even with
% 10000 data points. Full details of the experiment are given in Section~\ref{sec:experimental}.}
% \label{fig:freedman}
% \end{figure}

More formally, suppose we are given a fixed causal graph $G$. Let
$\beta_{G}$ denote a causal parameter of
interest within $G$, which will typically correspond to an effect of
one variable on another. Standard statistical methods take a data set
$\D$ and produce a confidence interval $\ci_G(\alpha;\D)$ such that
\begin{equation}
\label{eq:classical_validity}
\PP{\beta_{G} \not\in \ci_G(\alpha;\D)} \leq \alpha,
\end{equation}
where $\alpha\in(0,1)$ is a pre-specified error level. For example, if
the variables in $G$ follow a normal distribution, $\ci_G(\alpha;\D)$
can be obtained via classical t-statistics. However, if we
\emph{estimate} the causal graph $\widehat G$ from $\D$, this
guarantee breaks down; that is,
% \begin{equation}
% \label{eq:validity_def}
there is \emph{no} guarantee that $\PP{\beta_{\widehat G}
\not\in \ci_{\widehat G}(\alpha;\D)} \leq \alpha.$
% \end{equation}
This issue arises due to the coupling between the estimand
$\beta_{\widehat G}$ and the data used for inference, since $\widehat
G$ implicitly depends on $\D$.

To address this failure of naive inference, we develop tools for valid
statistical inference after causal discovery. We build on concepts introduced
in the literature on adaptive data analysis~\citep{dwork2015preserving,
dwork2015generalization} and post-selection inference \citep{berk2013valid} and develop causal discovery algorithms that
allow the computing of downstream confidence intervals with rigorous coverage
guarantees. Our key observation is that \emph{randomizing} causal discovery mitigates
the bias due to data reuse.
In particular, we show that, for a level $\tilde \alpha\leq \alpha$
depending on the level of randomization, naive intervals in the
sense of Eq.~\eqref{eq:classical_validity} satisfy $$\PP{\beta_{\widehat G}
\not\in \ci_{\widehat G}(\tilde \alpha;\D)} \leq \alpha,$$ where
$\widehat G$ is a causal structure estimated via a noisy
causal discovery algorithm. Randomization leads to a quantifiable
tradeoff between the quality of the discovered causal model and the
statistical power of downstream inferences: higher levels of
randomization imply lower model quality, but at the same time
allow tighter confidence intervals; that is, $\tilde \alpha$ is not much
smaller than the target error level $\alpha$. Moreover, we show
empirically that the proposed randomization schemes are not vacuous:
classical confidence intervals for causal effects indeed vastly
undercover the target causal effect when computed after model selection based on standard, noiseless causal discovery algorithms.

% A key contribution of our work is \emph{\textsc{noisy-ges}}, a noisy
% version of the classical \emph{greedy equivalence search}~(GES)
% algorithm~\citep{chickering2002optimal}. We show that
% \textsc{noisy-ges} inherits consistency of usual GES, but
% at the same time enables a valid correction of
% classical confidence intervals in the learned causal graph, due to a
% carefully chosen randomization scheme. In addition, the
% randomization principle directly extends to the counterpart of GES
% that uses both interventional and observational data, called
% greedy interventional equivalence search
% (GIES)~\citep{hauser2015jointly}.

% \paragraph{Organization.}
% In the remainder of
% this section we give an overview of related work. Then, in
% \Cref{sec:preliminaries}, we formally set up the problem of valid inference after causal
% discovery and review existing theory from adaptive
% data analysis that will supply the core ingredients of our correction. In
% \Cref{sec:noisy_selection} and \Cref{sec:ges} we provide randomization
% schemes for score-based causal discovery when the causal models can be
% exhaustively scored and when they are scored via greedy search,
% respectively, together with accompanying theory for rigorous confidence
% interval construction. In \Cref{sec:experimental} we evaluate our methods empirically. Finally, we end with a brief discussion in \Cref{sec:discussion}.

\subsection{Related Work}

% The last several decades have witnessed significant progress in
% the area of graph-based causal reasoning. We build on this line of
% work, using standard graph-theoretic tools for identifying causal
% quantities and establishing observational equivalence of causal models
% \citep{pearl2009causality,
% verma1990causal,andersson1997characterization}. In addition to these
% graph-theoretic characterizations, there has also 

There has been steady progress in
providing formal statistical guarantees for causal discovery and
causal estimation~\citep[see, e.g.,][]{maathuis2009estimating,maathuis2010predicting,nandy2018high,maathuis2021graphical}.
Most existing work, however, deals with statistical uncertainty
arising from the two stages of causal reasoning separately, an
approach that leads to Freedman's paradox. Indeed, when discussing estimation via the IDA
algorithm~\citep{maathuis2009estimating,maathuis2010predicting},
\citet{witte2020efficient} acknowledge this shortcoming of existing
tools, saying: ``\emph{When the graph is estimated on the same data as
used for IDA, the naive standard errors from the adjusted linear
regressions are invalid. Although considerable progress has been made
in the area of post-selection inference [\dots], no method has been
proposed specifically for estimating standard errors of causal effect
estimates after causal search.}''

In this work, we focus precisely on this challenge, aiming to provide
valid statistical inference after \emph{score-based} graph selection.
Our scope includes exhaustive scoring of all considered graphs and
selection of the one with the top score, as well as greedy equivalence
search (GES)~\citep{meek1997graphical, chickering2002optimal}. 
% Other popular graph estimation
% methods include the PC algorithm~\citep{spirtes2000causation} and
% continuous-optimization-based approaches~\citep{zheng2018dags}; for an
% overview of causal discovery algorithms, see
% \citet{kalisch2014causal,heinze2018causal,vowels2021d}. 
Although the
focus in this work is on score-based selections, we believe that many
of the principles we will introduce can be extended to other graph
estimation strategies.
% \citep[e.g.,][]{jaber2020causal,bhattacharya2021differentiable,malinsky2018causal,heinze2018causal,drton2017structure,colombo2012learning,colombo2014order,zhang2009causality,zhang2012identifiability,spirtes1991algorithm,shimizu2006linear,spirtes2016causal,hoyer2008nonlinear,chen2019causal,squires2020permutation,triantafillou2015constraint,glymour2019review}.

The technical tools in our work build upon those introduced in the
literature on \emph{differential privacy}~\citep{dwork2006calibrating}
and \emph{adaptive data
analysis}~\citep{dwork2015preserving, dwork2015generalization, bassily2016algorithmic,
jung2019new}. The core idea in adaptive data analysis is to use
randomization as a means of mitigating overfitting that arises from
double dipping. In particular, we rely on the concept of
\emph{max-information}---first introduced by
\citet{dwork2015generalization} and subsequently studied by \citet{rogers2016max}---and its relationship to differential
privacy.

% One of our main technical ingredients is a differentially private
% adaptation of the GES algorithm for causal discovery. Related work
% in this vein includes the work of \citet{wang2020towards}, who propose a
% differentially private PC algorithm (with privacy as the ultimate goal
% and not in service of statistical validity).  Their approach
% comes, however, with two downsides that are problematic for statistical
% validity.  One is that the
% sensitivity of the test statistic used to conduct the necessary
% conditional independence tests is proportional to $\frac{1}{\sqrt{n}}$
% when replacing a single data point, where $n$ is the sample size. This
% sensitivity is too high to allow both consistent graph estimation and statistically
% powerful inference, as we will see in later
% sections. The other issue is that the PC algorithm exhaustively
% searches through a large number of conditional independence tests to
% identify the edges. This requires invoking privacy composition over a
% large number of queries, which results in the overall privacy
% parameters having a high dependence on the number of variables in the
% graph. Greedy equivalence search and related score-based methods rely
% on a smaller number of queries with lower sensitivity, which
% permits significantly higher statistical power.

Our work is also closely related to work on \emph{post-selection
inference}, but the specific tools developed in the existing
literature fall short of solving the casual inference
problems that are our focus. Indeed, 
existing solutions for achieving validity in the
presence of selection are generally either simultaneous over all
possible selections~\citep[e.g.,][]{berk2013valid,
bachoc2020uniformly, kuchibhotlavalid} or require a tractable
characterization of possible selection
events~\citep[e.g.,][]{fithian2014optimal, lee2014exact, lee2016exact,
tibshirani2016exact}). In our problem setting, the former approach
would be highly conservative statistically, especially when the number
of considered graphs is large, and would generally be
intractable computationally except when the number of variables is small. The latter
approach is restricted to selection algorithms that admit an explicit
characterization of the data conditional on a given selection, and can typically be applied only to parametric data distributions. For
causal discovery, the former would require understanding the data
distribution conditional on the graph that was selected. Given the
complexity of graph estimation algorithms, it is not clear how to
obtain such a characterization for popular causal discovery
algorithms. Furthermore, our correction principle is entirely nonparametric. Within post-selection
inference, our work is most closely related to a thread of research that involves randomizing selection
rules~\citep{tian2018selective,zrnic2020post, rasines2021splitting, leiner2022data, neufeld2023data}.

% Finally, it is worth noting a body of work in causal inference that aims to mitigate overfitting or biased causal estimation~\citep[e.g.,][]{bloniarz2016lasso,padilla2021causal,zhao2022selective,ninomiya2021selective,dwivedi2020stable,wang2020debiased}, however these works solve complementary problems; the considered causal graphs are fixed, not discovered in a data-driven fashion.

\section{Problem Formulation and Preliminaries}
\label{sec:preliminaries}

We formalize the problem of inference after
causal discovery and discuss the meaning of statistical
validity in this context. \Cref{sec:graph_prelims} revisits standard causal concepts. \Cref{sec:proj_targets} discusses targets of inference in the context of causal graphs. \Cref{sec:rnd_targets} formalizes
what constitutes valid inference and provides a high-level
overview of our randomization-based approach. \Cref{sec:ada} overviews
the key results from adaptive data analysis that lie at the core of
the subsequently developed algorithms.

\subsection{Causal Preliminaries}
\label{sec:graph_prelims}

We consider the problem of performing inference based on a causal graph. A
\emph{causal} graph is a directed acyclic graph (DAG) $G=(V,E)$, where
$V=(X_1, \ldots, X_d)$ is the set of vertices and $E$ is
the set of edges. We denote by $\mathbf{Pa}_j^G\subseteq[d]$ the set
of parents of node $X_j$ in graph $G$. In addition to capturing
conditional independence relationships, a causal graph represents the
causal relations in the data: the existence of an edge from $X_i$ to
$X_j$ implies a possible causal effect from $X_i$ to $X_j$.

Our theory also applies to causal discovery
methods that return an \emph{equivalence class} of DAGs, namely
a \emph{completed partially directed acyclic graph} (CPDAG). A CPDAG
is a partially directed graph with the same skeleton as the graphs in
the equivalence class, where directed edges represent arrows that are
common to all DAGs in the equivalence class, and the undirected edges
correspond to edges that are directed one way in some DAGs and the
other way in other DAGs in the equivalence class. We will use the notation $G$, as well as the term ``causal graph,'' to
refer to both DAGs and CPDAGs, given that our tools are largely agnostic to whether the
causal discovery criterion is applied to a set of possible DAGs or
CPDAGs.

\subsection{Targets of Inference in Causal Graphs}
\label{sec:proj_targets}

What makes post-selection inference conceptually challenging is the specification and interpretation of the inferential target. Indeed, the statistician may arrive at different causal graphs under different realizations of the data and, crucially, different graphs lead to different causal questions, different adjustment sets, and different identification formulas, implying in turn different \emph{targets of inference} in different graphs. Here, a ``target of inference'' is the population-level quantity that standard causal estimators aim to approximate. It is thus natural ask whether inference---and specifically its target---is meaningful if the discovered graph is not the graph underlying the data-generating process.

One perspective that resolves this issue is the view that different models provide different \emph{approximations} to the truth, some better than others, and should not be thought of as true data-generating processes \citep{berk2013valid, buja2019models1, buja2019models2}. We build upon this perspective in this work, accepting that although a causal graph is rarely a perfect representation of the truth, it can nevertheless serve as a useful working model. For instance, given the complexity of any real-world system, some relevant factors will almost inevitably be missing from the graph used in the analysis. This is true not only when the graph is estimated algorithmically, but also when it is provided by a domain expert.

Treating models as approximations leads to the following practical way of conceptualizing targets of inference. Fix the causal estimator that the statistician wishes to use once they have a causal graph (e.g., least-squares regression with a backdoor adjustment chosen based on the graph). Then, whether or not the graph is correct, there is an underlying population-level quantity that the estimator approximates, typically equal to its large-sample limit (which we assume always exists for simplicity). In the least-squares example, the target of inference is given by:
\begin{equation}
\label{eq:causal_effect}
\beta_G^{(i\rightarrow j)} = \left(\argmin_{\beta} \E_{(X_1,\dots,X_d)\sim\P} \left(X_j - \sum_{s \in A_G^{(i\rightarrow j)}\cup i} \beta_s X_s\right)^2\right)_{X_i},
\end{equation}
where $\P$ is the underlying data distribution, $A_G^{(i\rightarrow j)}\subseteq [d]$ is a valid adjustment set
in $G$, meaning that conditioning on $X_A$ blocks all backdoor paths
from $X_i$ to $X_j$ \citep{pearl2009causality}, and the subscript $X_i$ of the outermost
parentheses denotes taking the coefficient corresponding to
$X_i$. The parameter $\beta_G^{(i\rightarrow j)}$ exactly answers the causal
query $\frac{\partial}{\partial x}\E[{X_j \,|\,
\mathrm{do}(X_i=x)}]$ when $\P$ is a
multivariate normal distribution and $G$ is the true underlying
DAG. However, even when $G$ is not the true DAG, $\beta_G^{(i\rightarrow j)}$ is a meaningful target as it can be seen as a ``projection'' of the true data-generating process onto the working model $G$ with linear functional relationships. In general we will use $\beta_G^{(i\rightarrow j)}$ to denote the target of inference in graph $G$ when the statistician asks for the effect of $X_i$ on $X_j$, relying on some estimation strategy.

% The perspective of treating models as approximations has led to the concept of \emph{regression functionals} as targets of inference. Rather than targeting a ``true'' model parameter, which would require perfect specification of the causal model and functional relationships between variables, inference is targeted at some mapping from a model and data distribution to a numeric quantity. This accords with the way in which causal inference proceeds in practice---the analyst starts with a target causal quantity and a causal graph (which may have been estimated from data itself), and then uses an estimator with adjustments based on the graph to approximate the quantity of interest. Whether or not the graph is correct, there is an underlying population-level quantity that the estimator approximates (typically its large-sample limit). Therefore, there is an implicit target defined by the causal model and data distribution.

This perspective---closely related to the concept of regression functionals \citep{buja2019models2}---allows us to talk about valid statistical inference, regardless of whether the working causal graph is perfect or the functional form of the structural relationships among variables is well specified. This is true because the large-sample limit of a causal estimator can be defined for \emph{any} input graph.

% Going beyond linearity, regression functionals $\beta_G^{(i\rightarrow j)}$ can be defined as solving a general risk minimization problem. If the risk-minimization problem is a maximum-likelihood problem, then the regression functional is a projection in KL-divergence of the truth onto the space of considered models \citep{buja2019models2}.

We note that, when $G$ is a CPDAG, the target of inference
$\beta_G^{(i\rightarrow j)}$ should typically be thought of as
denoting a set of targets for each DAG in the equivalence class.

% In order to keep this section focused on the problem setup, we return to the discussion of regression functionals and inference under misspecification in Section~\ref{sec:inf_misspecification}.

\subsection{Statistical Validity}
\label{sec:rnd_targets}

To perform a causal analysis, we work with a data set, $\D =
\{X^{(k)}\}_{k=1}^n \equiv \{(X_1^{(k)},\dots,X^{(k)}_d)\}_{k=1}^n$,
of~$n$ data points drawn independently from a distribution $\P$, where $X_j^{(k)}$
denotes the $j$-th variable in data point~$k$. 
%With a slight abuse of
%notation, we will also denote by $(X_1,\dots,X_d)\sim\P$ a random
%sample from distribution~$\P$.
With only finite data, valid inference is ensured by
constructing \emph{confidence intervals} around an estimator, often by
relying on the estimator's (asymptotic) normality. See
\citet{imbens2004nonparametric} for an overview of standard confidence
interval constructions. 
For example, for the least-squares target in Eq.~\eqref{eq:causal_effect}, a standard estimator is obtained by solving the
empirical version of problem~\eqref{eq:causal_effect}.
% :
% \begin{equation*}
% \widehat \beta_G^{(i\rightarrow j)} = \left(\argmin_{\beta} \sum_{k=1}^n \left(X_j^{(k)} - \sum_{s\in A_G^{(i\rightarrow j)}\cup i}\beta_s X_s^{(k)}\right)^2\right)_{X_i}.
% \end{equation*}

We study settings in which the causal graph $G$ is not
given a priori but is learned from $\D$ via causal discovery
algorithms. Denote by $\widehat G$ the graph over $X_1,\dots,X_d$ obtained in a data-driven way. 
Our main technical result can be summarized as follows: whenever we have a way of constructing valid confidence intervals for a causal quantity of interest when the causal graph $G$ is \emph{fixed}, we can adapt the respective method to produce valid confidence intervals when the causal graph $\widehat G$ is \emph{learned from data}. In the following paragraphs we make this statement more precise.

% To ensure statistical validity after causal discovery, the main technical
% challenge is to control the deviations of $\widehat
% \beta_{\widehat G}^{(i\rightarrow j)}$ from $\beta_{\widehat G}^{(i\rightarrow j)}$.

We will denote by $\mathcal{I}_{\widehat G}$ a set of pairs $(i,j)\subseteq [d]\times [d]$ that determines the causal queries of interest. We allow $\mathcal{I}_{\widehat G}$ to depend on the discovered graph $\widehat G$. Therefore, the set of targets is the set
$\{\beta_{\widehat G}^{(i\rightarrow
j)}\}_{(i,j)\in\mathcal{I}_{\widehat G}}$. In the simplest case,
$\mathcal{I}_{\widehat G}$ is a singleton and we are interested in a
single effect. Importantly, $\widehat G$ is random and thus
$\beta_{\widehat G}^{(i\rightarrow j)}$ is a \emph{random inferential
target}.

What makes inferring the effects $\beta_{\widehat G}^{(i\rightarrow j)}$ challenging is the fact that the randomness in the target $\beta_{\widehat G}^{(i\rightarrow j)}$ is coupled with the randomness in the data $\D$ used to perform inference. This issue arises because we use the data twice: once to estimate the causal model $\widehat G$ and another time to perform causal estimation. This double-dipping phenomenon creates a bias: the estimator $\widehat \beta_{\widehat G}^{(i\rightarrow j)}$ can be further from $\beta_{\widehat G}^{(i\rightarrow j)}$ than predicted by classical theory.

To correct this bias, we propose a way to quantify the error of
``naive'' confidence intervals due to double dipping. In particular,
consider a family of confidence intervals
$\mathrm{CI}_{G}^{(i\rightarrow j)}(\alpha; \D)$ that satisfies
\begin{equation}
\label{eq:naive_ints}
\PP{\exists (i,j)\in \mathcal{I}_{G}: \beta_{G}^{(i\rightarrow j)} \not\in \mathrm{CI}_{G}^{(i\rightarrow j)}(\alpha; \D)} \leq \alpha,
\end{equation}
for all $G$ and $\alpha\in(0,1)$. Importantly,
since $G$ is fixed, the target estimand is trivially independent of
the data $\D$. The guarantee \eqref{eq:naive_ints} does \emph{not}
hold when $\widehat G$ is estimated from $\D$.

In the sequel we will show how to make $\mathrm{CI}_{\widehat
G}^{(i\rightarrow j)}(\alpha;\D)$ be \emph{approximately} valid
via randomization, despite the dependence
between $\widehat G$ and $\D$. Specifically, we will compute a corrected error level
$\tilde \alpha$ such that $$\PP{\exists (i,j)\in
\mathcal{I}_{\widehat G}: \beta_{\widehat G}^{(i\rightarrow j)}
\not\in \mathrm{CI}_{\widehat G}^{(i\rightarrow j)}(\tilde \alpha;\D)}
\leq \alpha.$$ Throughout the paper we will use $\mathrm{CI}_{\widehat
G}^{(i\rightarrow j)}(\alpha) \equiv \mathrm{CI}_{\widehat
G}^{(i\rightarrow j)}(\alpha;\D)$ to denote ``standard'' intervals,
which, if $\D$ is independent of $\widehat G$, satisfy the
high-probability guarantee of Eq.~\eqref{eq:naive_ints}.

One simple choice of $\tilde \alpha$ that ensures validity
is obtained via a Bonferroni correction, even if there is no randomization in
the selection. Formally, if $\G$ is the set of all candidate graphs,
then we can write 
\begin{align*}
\PP{\exists (i,j)\in \mathcal{I}_{\widehat G}:
\beta_{\widehat G}^{(i\rightarrow j)} \not\in \mathrm{CI}_{\widehat
G}^{(i\rightarrow j)}(\tilde \alpha;\D)} &\leq \sum_{G\in\mathcal{G}}
\PP{\exists (i,j)\in \mathcal{I}_{ G}: \beta_{ G}^{(i\rightarrow j)}
\not\in \mathrm{CI}_{ G}^{(i\rightarrow j)}(\tilde \alpha;\D)}\\
&\leq
|\mathcal{G}|\tilde\alpha.
\end{align*}
Thus if we set the target miscoverage
probability to be $\tilde \alpha = \frac{\alpha}{|\mathcal{G}|}$, the
miscoverage probability after selection is guaranteed to be at most
$\alpha$. This strategy has a clear drawback of diminishing
statistical power as the number of graphs in $\G$ grows. Our randomization-based proposal can be seen as a more powerful alternative to a Bonferroni correction that likewise comes with distribution-free, finite-sample guarantees. As we will show, our correction yields a choice of $\tilde
\alpha$ independent of the number of candidate graphs.

Another approach to ensuring validity is to perform data splitting: use a fraction of the data for causal discovery and the remaining data for inference. This alternative has the downside of using fewer data points for both graph estimation and inference. We provide a careful theoretical and empirical comparison to data splitting in Section \ref{sec:experimental}. The key takeaway is that our randomization-based approach consistently outperforms data splitting whenever the dimensionality of the data is non-trivial relative to the sample size---which corresponds exactly to the settings in which a naive combination of causal discovery and classical inference leads to inflated type I error.

% \begin{remark}
% We note that commonly used confidence intervals often have only \emph{asymptotic} guarantees; that is, they satisfy
% \begin{equation}
% \label{eq:asymptotic_ints}
% \limsup_{n\rightarrow\infty} \PP{\exists (i,j)\in \mathcal{I}_{G}: \beta_{G}^{(i\rightarrow j)} \not\in \mathrm{CI}_{G}^{(i\rightarrow j)}(\alpha; \D)} \leq \alpha.
% \end{equation}
% Importantly, this guarantee does \emph{not} imply asymptotic validity when the graph $\widehat G$ is chosen based on data. However, our tools and results immediately carry over to asymptotically valid intervals. Similarly as before, our corrected error level $\tilde \alpha$ will satisfy
% \begin{equation*}
% \limsup_{n\rightarrow\infty} \PP{\exists (i,j)\in \mathcal{I}_{\widehat G}: \beta_{\widehat G}^{(i\rightarrow j)} \not\in \mathrm{CI}_{\widehat G}^{(i\rightarrow j)}(\alpha; \D)} \leq \alpha,
% \end{equation*}
% whenever $\mathrm{CI}_G^{(i\rightarrow j)}$ are asymptotically valid intervals in the sense of Eq.~\eqref{eq:asymptotic_ints}.
% \end{remark}

\subsection{Correcting Inferences via Max-Information}
\label{sec:ada}

We next discuss the key technical tools that we rely on to choose the
corrected level $\tilde \alpha$. 
The basic idea behind our correction is that randomizing the graph selection criterion serves to bound the degree of dependence between the data $\D$
and the learned graph $\widehat G$, which ameliorates the effect of
selection on the validity of subsequent inference. This degree of dependence is formalized via \emph{max-information}.

%To correct for valid statistical inference after causal discovery, we
%rely on a fundamental idea in adaptive data analysis, namely
%perturbing the data-dependent query selection to prevent the choice of
%query from overfitting to the data used for inference. Stated in the
%context of causal estimation after causal discovery, this idea
%suggests that, by randomizing the graph selection criterion, we are
%essentially bounding the degree of dependence between the data $\D$
%and the learned graph $\widehat G$, which ameliorates the effect of
%selection on the validity of subsequent inference.

%The key technical tool to quantify this dependence between the data
%$\D$ and the learned graph $\widehat G$ is \emph{max-information}.

\begin{definition}[Max-information~\citep{dwork2015generalization}]
 Fix a parameter $\gamma\in(0,1)$. We define the \emph{$\gamma$-approximate
 max-information} between $\D$ and $\widehat G$ as
\begin{equation*}
I^\gamma_\infty (\widehat G; \D) := \max_\O \log \frac{\PP{(\widehat G,\D)\in \O} - \gamma}{\PP{(\widehat G,\tilde\D)\in \O}},
\end{equation*}
where $\tilde\D$ is an i.i.d.\ copy of $\D$ and $\O$ is maximized over all measurable sets.
\end{definition}

A bound on $I^\gamma_\infty (\widehat G; \D)$ provides a way of
bounding the probability of miscoverage when $\widehat G$ is estimated
from $\D$, as long as we can control the same notion of error in
\emph{fixed} graphs $G$. To see this, let $\mathrm{Err}(\alpha)$ denote the
set of graph/data set pairs for which classical intervals miscover:
$\mathrm{Err}(\alpha) = \{(G,\D): \beta_{G}^{(i\rightarrow j)} \not\in
\mathrm{CI}_{G}^{(i\rightarrow j)}(\alpha;\D)\}$. Then, by the
definition of $I^\gamma_\infty (\widehat G; \D)$, we have
\begin{align*}
\label{eq:correction-macro}
\PP{(\widehat G,\D) \in \mathrm{Err}(\alpha)} &\leq \exp\left(I^\gamma_\infty (\widehat G; \D)\right) \PP{(\widehat G,\tilde\D) \in \mathrm{Err}(\alpha)}  + \gamma\\
 &= \exp\left(I^\gamma_\infty (\widehat G; \D)\right) \E\left[\PPst{(\widehat G,\tilde\D)\in \mathrm{Err}(\alpha)}{\widehat G}\right]  + \gamma.
\end{align*}
Since $\tilde\D$ is a fresh sample independent of $\widehat G$, there is no ``double dipping'' and the right-hand side is bounded by $\exp\left(I^\gamma_\infty (\widehat G; \D)\right) \alpha + \gamma$. Thus, if we want coverage at level $1-\alpha^*$ for some $\alpha^*\in (0,1)$, by the previous argument we see that $\mathrm{CI}_{G}^{(i\rightarrow j)}((\alpha^*-\gamma)\exp(-I^\gamma_\infty (\widehat G; \D));\D)$ will have at least $1-\alpha^*$ coverage. In other words, if we aim naively at an error probability equal to
$(\alpha^*-\gamma)\exp\left(-I^\gamma_\infty (\widehat G; \D)\right)$,
then the error probability \emph{after} data-driven graph selection
can be at most $\alpha^*$. Therefore, if we provide a bound on the
approximate max-information between the selected graph $\widehat G$
and the data $\D$, then it suffices to construct intervals
at a more conservative error level to obtain a rigorous finite-sample
correction.

We note that commonly used confidence intervals often have only asymptotic guarantees, meaning $\limsup_n\PP{\beta_{G}^{(i\rightarrow j)} \not\in
\mathrm{CI}_{G}^{(i\rightarrow j)}(\alpha;\D)}\leq \alpha$; our tools and results immediately apply to such intervals as well. Indeed, by the reverse version of Fatou's lemma, we have 
\begin{align*}
\limsup_n \PP{(\widehat G,\D) \in \mathrm{Err}(\alpha)} &\leq \exp\left(I^\gamma_\infty (\widehat G; \D)\right) \E\left[\limsup_n\PPst{(\widehat G,\tilde\D)\in \mathrm{Err}(\alpha)}{\widehat G}\right]  + \gamma\\
&\leq \exp\left(I^\gamma_\infty (\widehat G; \D)\right) \alpha  + \gamma,
\end{align*}
and thus the same argument as above applies.

It remains to understand how to bound the
max-information between $\widehat G$ and $\D$. One approach
studied in the literature on adaptive data analysis is to make the
causal discovery procedure \emph{differentially private}
\citep{dwork2006calibrating}. Roughly speaking, differential privacy
requires that the output of a statistical analysis be randomized in a
way that makes it insensitive to the replacement of a single data
point. In the following, a ``randomized'' algorithm is any algorithm that is allowed to employ a source of randomness independent of the input data in its computations.

\begin{definition}[Differential privacy~\citep{dwork2006calibrating}]
A randomized algorithm $\A$ is $\epsilon$-differentially private for
some $\epsilon\geq0$ if for any two fixed data sets $\D$ and $\D'$ differing in at most one entry and any measurable set $\O$, we
have $\PP{\A(\D)\in \O} \leq e^\epsilon \PP{\A(\D')\in \O}$,
where the probabilities are taken over the randomness of the
algorithm.
\end{definition}

To translate differential privacy into a bound on the max-information,
we apply the following key result due to
\citet{dwork2015generalization}.

\begin{proposition}[\citet{dwork2015generalization}]
\label{prop:max-info}
    Suppose that algorithm $\A$ is $\epsilon$-differentially private, and fix any $\gamma\in(0,1)$. Then, we have
$I^\gamma_\infty(\A(\D);\D) \leq \frac{n}{2}\epsilon^2 + \epsilon \sqrt{n\log(2/\gamma)/2}$.
\end{proposition}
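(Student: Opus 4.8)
The plan is to bound the $\gamma$-approximate max-information $I^\gamma_\infty(\A(\D);\D)$ by exploiting a well-known consequence of $\epsilon$-differential privacy: although the guarantee is stated for datasets differing in one entry, it propagates under \emph{group privacy}, and more importantly, it controls how much the joint law of $(\A(\D),\D)$ can differ from the product-like law of $(\A(\D),\tilde\D)$ where $\tilde\D$ is an independent copy. The key object is the log-likelihood ratio (``privacy loss'') random variable $Z = \log \frac{\PP{\A(\D)=a}}{\PP{\A(\tilde\D)=a}}$ evaluated along a coupled draw, and the heart of the argument is a concentration bound for $Z$.

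Concretely, I would proceed as follows. First, reduce the statement about arbitrary measurable sets $\O$ to a statement about the privacy-loss random variable: for any event $\O$, decompose $\PP{(\A(\D),\D)\in\O}$ by conditioning on $\D$, and compare the conditional probability $\PP{\A(\D)\in\O_\D}$ (with $\O_\D$ the $\D$-slice) against $\PP{\A(\tilde\D)\in\O_\D}$ after integrating out an independent $\tilde\D$. One shows that $\PP{(\A(\D),\D)\in\O} \le e^{\beta}\,\PP{(\A(\D),\tilde\D)\in\O} + \PP{Z > \beta}$ for any threshold $\beta$, by splitting on whether the privacy loss exceeds $\beta$; this is the standard ``$(\beta,\delta)$-indistinguishability implies max-info bound'' argument, where $\delta = \PP{Z>\beta}$. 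Hence it suffices to exhibit a $\beta$ with $\PP{Z>\beta}\le\gamma$ and $\beta \le \frac{n}{2}\epsilon^2 + \epsilon\sqrt{n\log(2/\gamma)/2}$.

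Second, control the privacy loss $Z$ by a change-one-at-a-time telescoping argument over the $n$ coordinates of $\D$: writing $\D$ and $\tilde\D$ as $n$ i.i.d.\ coordinates and interpolating, the total privacy loss is a sum of $n$ per-coordinate loss terms, each bounded in magnitude by $\epsilon$ (by $\epsilon$-DP) and each, crucially, having \emph{conditional expectation} bounded by $\frac{\epsilon^2}{2}$ — this last fact follows because for any pair of $\epsilon$-indistinguishable distributions $P,Q$ one has $\mathrm{KL}(P\|Q)\le \frac{\epsilon^2}{2}$ (an elementary inequality comparing KL to the worst-case log-ratio, e.g.\ via $\mathrm{KL}(P\|Q)\le \mathrm{KL}(P\|Q)+\mathrm{KL}(Q\|P)\le \epsilon(e^\epsilon-1)/\ldots$, or more sharply the bound $\tfrac{\epsilon^2}{2}$). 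So $Z$ is a sum of $n$ bounded-difference-martingale increments with mean $\le \epsilon^2/2$ each, giving $\E[Z]\le n\epsilon^2/2$; applying Azuma–Hoeffding to $Z - \E[Z]$ with increments in a range of width $O(\epsilon)$ yields $\PP{Z > n\epsilon^2/2 + t} \le \exp(-t^2/(2n\epsilon^2))$ (up to the precise constant). Setting the right side equal to $\gamma$ gives $t = \epsilon\sqrt{2n\log(1/\gamma)}$ up to constants, and combining produces the stated bound $\beta = \tfrac{n}{2}\epsilon^2 + \epsilon\sqrt{n\log(2/\gamma)/2}$.

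The main obstacle — and the step requiring care rather than routine manipulation — is making the martingale/coupling construction rigorous: one must set up the interpolating sequence of datasets so that the per-coordinate privacy-loss increments genuinely form a martingale difference sequence with respect to the right filtration (this is where the i.i.d.\ structure of $\D$ and the independence of $\tilde\D$ are essential), and one must nail down the exact constants so that the Azuma bound delivers precisely $\epsilon\sqrt{n\log(2/\gamma)/2}$ rather than a slightly weaker constant. Since this is exactly the content of \citet{dwork2015generalization}'s theorem relating DP to max-information, I would cite their argument for the sharp constants and present the coupling/concentration skeleton above as the conceptual proof.
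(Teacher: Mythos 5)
This proposition is not proved in the paper at all: it is imported verbatim from \citet{dwork2015generalization}, and the paper simply cites that reference. Your sketch is a faithful reconstruction of the argument in the cited work: the reduction from arbitrary events $\O$ to a tail bound on the privacy-loss random variable (so that $\PP{Z>\beta}\le\gamma$ yields $I^\gamma_\infty\le\beta$), the coordinate-wise telescoping of $Z$ into $n$ martingale increments each bounded by $\epsilon$ via differential privacy, the bound of $\epsilon^2/2$ on each conditional mean, and Azuma--Hoeffding on the centered sum --- this is exactly the structure of their proof, so in that sense you have taken ``the same approach'' as the source the paper relies on. One place where your justification is loose rather than wrong: the per-increment mean bound $\E[Z_i\mid \text{past}]\le\epsilon^2/2$ does \emph{not} follow from the classical inequality $\mathrm{KL}(P\Vert Q)\le\epsilon(e^\epsilon-1)$ that your parenthetical gestures at (that constant is too weak); the sharp constant comes from applying Hoeffding's lemma to the pointwise-bounded log-likelihood ratio, using $\E[e^{-Z_i}\mid\text{past}]=1$ and $|Z_i|\le\epsilon$ to get $0=\log\E[e^{-Z_i}]\le-\E[Z_i]+\epsilon^2/2$. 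Similarly, getting $\epsilon\sqrt{n\log(2/\gamma)/2}$ rather than a constant-factor-worse deviation term requires care with the effective range of the increments in the Azuma step. You correctly flag both of these as the delicate points and defer to the citation for the exact constants, which is consistent with how the paper itself treats this result.
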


Putting everything together, it suffices to perform causal discovery in a differentially private manner in order to perform valid statistical inference downstream. 
We thus reduce the problem of valid inference after causal discovery to one of developing algorithms for
differentially private causal discovery.

\section{Noisy Causal Discovery}
\label{sec:noisy_selection}

% First we study inference after causal discovery in a setting where the set of candidate graphs is small enough that we can
% exhaustively enumerate and individually score all of them. The following section
% extends our theory to the setting with many candidate graphs.

Suppose we have a candidate set $\G$ of causal graphs that captures
our uncertainty about which data-generating model to choose. To select a graph from $\G$, we specify a score function, $S(G,\D)$, which takes as input a graph
$G$ and data set $\D$, and we select the graph with the maximum
score:
\begin{equation}
\label{eqn:score_sel}
\widehat G_* = \argmax_{G\in\G} S(G,\D).
\end{equation}
The score function $S(G,\D)$ is typically formulated as some measure of
compatibility between $G$ and the relationships suggested by the data
$\D$, such as the Bayesian information criterion (BIC). Note that $\widehat G_*$ depends on the data $\D$ and is thus random.

To enable valid statistical inference after graph selection, we rely on a randomized version of the selection rule \eqref{eqn:score_sel}.
The key step is to prove that the randomized selection rule is
differentially private. To accomplish this, one needs to
consider the \emph{sensitivity} of the score.  The
amount of necessary randomization is proportional to the
score sensitivity.

\begin{definition}[Score sensitivity]
\label{def:sensitivity}
    A score function $S(G,\D)$ is \emph{$\tau$-sensitive} if for any graph $G\in\mathcal{G}$ and data sets $\D,\D'$ differing in at most one entry, we have
    $|S(G,\D) - S(G,\D')| \leq \tau$.
\end{definition}

Roughly speaking, score sensitivity bounds the influence that any single data point can have on the choice of the best-scoring graph within the uncertainty set.

\begin{algorithm}[tb]
\SetAlgoLined
\SetKwInOut{Input}{input}
\Input{data set $\D$, set of graphs $\G$, privacy parameter $\epsilon$, $\tau$-sensitive score function $S$}
\textbf{output:} causal graph $\widehat G$\newline
For all $G\in\G$, sample $\xi_G \stackrel{\mathrm{i.i.d.}}{\sim} \mathrm{Lap}\left(\frac{2\tau}{\epsilon}\right)$\newline
Set $\widehat G \leftarrow \argmax_{G\in\G} S(G,\D) + \xi_G$\newline
Return $\widehat G$
\caption{\textsc{noisy-select}}
\label{alg:score-based-sel}
\end{algorithm}

We present our \textsc{noisy-select} method in Algorithm~\ref{alg:score-based-sel}, and
state its privacy guarantee in the following lemma. All proofs can be found in Appendix~\ref{sec:noisy_discovery_proofs}.
\begin{lemma}
\label{lemma:dp_noisy_sel}
The \textsc{noisy-select} algorithm (\Cref{alg:score-based-sel}) is
$\epsilon$-differentially private.
\end{lemma}

Combined with \Cref{prop:max-info}, \Cref{lemma:dp_noisy_sel} implies
a correction in the form of a discounted error level for confidence
interval construction---conceptually similar to a Bonferroni
correction---that ensures valid inference for the causal
effects estimated from~$\widehat G$.

\begin{theorem}
\label{thm:score_fn}
Suppose $\widehat G$ is selected via \textsc{noisy-select} (\Cref{alg:score-based-sel}). Then, for any causal graph
$G\in\mathcal{G}$, we have $$\PP{\exists (i,j)\in\mathcal{I}_{G} :
\beta_{G}^{(i\rightarrow j)} \not\in \mathrm{CI}_{G}^{(i\rightarrow
j)}(\tilde \alpha),~\widehat G = G} \leq \alpha,$$ where $\tilde
\alpha = (\alpha - \gamma) \exp\left(-\frac{n}{2}\epsilon^2 - \epsilon
\sqrt{n\log(2/\gamma)/2}\right)$, for any $\gamma\in(0,\alpha)$.
Consequently, $$\PP{\exists (i,j)\in\mathcal{I}_{\widehat G} :
\beta_{\widehat G}^{(i\rightarrow j)} \not\in \mathrm{CI}_{\widehat
G}^{(i\rightarrow j)}(\tilde \alpha)} \leq \alpha.$$
\end{theorem}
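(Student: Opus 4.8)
The plan is to instantiate the max-information correction sketched in \Cref{sec:ada} with the error level $\tilde\alpha$ in place of $\alpha$. First I would note that the output $\widehat G$ of \Cref{alg:score-based-sel} is a randomized function of $\D$ alone (its only extra randomness being the Laplace draws $\{\xi_G\}$, which are independent of the data), so by \Cref{prop:dp_noisy_sel} it is $\epsilon$-differentially private and \Cref{prop:max-info} applies verbatim: for the fixed $\gamma\in(0,\alpha)$,
$$I^\gamma_\infty(\widehat G;\D)\ \le\ B\ :=\ \tfrac{n}{2}\epsilon^2 + \epsilon\sqrt{n\log(2/\gamma)/2}.$$
I would also record at the outset the elementary identity that makes everything close: since $\tilde\alpha=(\alpha-\gamma)e^{-B}$, we have $e^{B}\tilde\alpha+\gamma=\alpha$.

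For the per-graph claim, fix $G\in\mathcal{G}$, let $\tilde\D$ be an i.i.d.\ copy of $\D$ (independent of the algorithm's internal randomness, hence of $\widehat G$), and consider the measurable set $\mathcal{O}_G=\{(G',\D'):G'=G,\ \exists(i,j)\in\mathcal{I}_{G}:\ \beta_{G}^{(i\rightarrow j)}\notin\mathrm{CI}_{G}^{(i\rightarrow j)}(\tilde\alpha;\D')\}$. Applying the definition of $\gamma$-approximate max-information to $\mathcal{O}_G$ together with the bound above gives $\PP{(\widehat G,\D)\in\mathcal{O}_G}\le e^{B}\PP{(\widehat G,\tilde\D)\in\mathcal{O}_G}+\gamma$. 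The fresh-sample term is bounded by dropping the constraint $\widehat G=G$: $\PP{(\widehat G,\tilde\D)\in\mathcal{O}_G}\le\PP{\exists(i,j)\in\mathcal{I}_G:\beta_G^{(i\rightarrow j)}\notin\mathrm{CI}_G^{(i\rightarrow j)}(\tilde\alpha;\tilde\D)}\le\tilde\alpha$, where the last step is exactly the naive validity guarantee \eqref{eq:naive_ints} at level $\tilde\alpha$, legitimate because $G$ is fixed and $\tilde\D$ is i.i.d.\ from $\P$. Combining with $e^{B}\tilde\alpha+\gamma=\alpha$ gives the first displayed inequality.

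For the aggregate claim I would run the identical argument with the union error set $\mathcal{O}=\{(G',\D'):\exists(i,j)\in\mathcal{I}_{G'}:\beta_{G'}^{(i\rightarrow j)}\notin\mathrm{CI}_{G'}^{(i\rightarrow j)}(\tilde\alpha;\D')\}$, so that $\{(\widehat G,\D)\in\mathcal{O}\}$ is precisely the miscoverage event in the conclusion. Max-information again yields $\PP{(\widehat G,\D)\in\mathcal{O}}\le e^{B}\PP{(\widehat G,\tilde\D)\in\mathcal{O}}+\gamma$, and the crucial point is that the fresh-sample term still does not pick up a factor of $|\mathcal{G}|$: conditioning on $\widehat G$ and using $\tilde\D\independent\widehat G$,
$$\PP{(\widehat G,\tilde\D)\in\mathcal{O}}=\E\!\left[\PPst{\exists(i,j)\in\mathcal{I}_{\widehat G}:\beta_{\widehat G}^{(i\rightarrow j)}\notin\mathrm{CI}_{\widehat G}^{(i\rightarrow j)}(\tilde\alpha;\tilde\D)}{\widehat G}\right]\ \le\ \tilde\alpha,$$
because for every fixed realization of $\widehat G$ the inner conditional probability is at most $\tilde\alpha$ by \eqref{eq:naive_ints}, and the bound survives the outer expectation. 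One last use of $e^{B}\tilde\alpha+\gamma=\alpha$ finishes the proof.

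I expect the only genuinely delicate step to be the last one: ensuring that replacing the single-graph error set by the union over all candidate graphs does not inflate the bound by $|\mathcal{G}|$. This is exactly where max-information beats a Bonferroni correction—the union is swallowed inside the fresh-sample expectation, which stays at $\tilde\alpha$ thanks to the independence of $\tilde\D$ and $\widehat G$, rather than being split into $|\mathcal{G}|$ separate accountings. The remaining ingredients are routine bookkeeping: checking that $\widehat G$ is a bona fide $\epsilon$-DP function of $\D$ so that \Cref{prop:max-info} is applicable, and the arithmetic identity relating $\tilde\alpha$, $B$, $\gamma$, and $\alpha$.
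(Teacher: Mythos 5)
Your proof is correct and follows essentially the same route as the paper's: combine \Cref{prop:dp_noisy_sel} with \Cref{prop:max-info} to bound the max-information, then unwind the definition of $\gamma$-approximate max-information against the naive validity guarantee \eqref{eq:naive_ints} at level $\tilde\alpha$. If anything, your write-up is slightly more careful than the paper's, which omits the additive $+\gamma$ term in its displayed chain (the arithmetic still closes because $e^{B}\tilde\alpha=\alpha-\gamma$) and compresses your union-event argument into the phrase ``marginalizing over all graphs $G$.''
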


% \begin{remark}
% The optimal $\beta$ is one that the error level is the least
% conservative, $\tilde \alpha = \max_{\beta\in(0,\alpha)} (\alpha -
% \beta) \exp\left(-\frac{n}{2}\epsilon^2 - \epsilon
% \sqrt{n\log(2/\beta)/2}\right)$. This choice minimizes the size of
% $\mathrm{CI}_{\widehat G}^{(i\rightarrow j)}(\tilde \alpha)$ under
% reasonable regularity conditions.
% \end{remark}

Notably, the correction in Theorem \ref{thm:score_fn} depends only on
$\epsilon$ (essentially, the noise level) and the sample size $n$; it does not
depend on $|\G|$. One principled way to choose $\gamma$ is so as to maximize $\tilde \alpha$, since this minimizes the size of
$\mathrm{CI}_{\widehat G}^{(i\rightarrow j)}(\tilde \alpha)$.

We note that typically the score sensitivity $\tau$ is a decreasing function of $n$, which implies that $\epsilon$ can be chosen as a decreasing function of $n$ in order to keep the noise in Algorithm~\ref{alg:score-based-sel} at a constant level. This in turn allows achieving $\tilde \alpha\rightarrow \alpha$ as $n\rightarrow \infty$ (assuming that $\gamma$ is also tuned so that $\gamma\rightarrow 0$).

We next quantify the suboptimality of the
randomized selection $\widehat G$ relative to the ideal
selection $\widehat G_*$.

\begin{proposition}
\label{prop:noisy_discovery_utility}
Fix $\delta\in(0,1)$. Then, for any graph $G\in\mathcal{G}$ with $S(G,\D) \leq S(\widehat G_*,\D) - \frac{4\tau}{\epsilon}\log(2/\delta)$, \textsc{noisy-select} outputs $G$ with probability at most $\delta$.
\end{proposition}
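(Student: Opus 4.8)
The plan is to translate the event that the perturbed argmax equals a low-scoring graph $G$ into a tail event for the difference of two independent Laplace perturbations, and then bound that tail directly. Throughout I treat $\D$ as fixed, so that the gap hypothesis $S(G,\D) \leq S(\widehat G_*,\D) - \frac{4\tau}{\epsilon}\log(2/\delta)$ is a deterministic condition and the probability $\PP{\widehat G = G}$ is over the Laplace noise injected by Algorithm~\ref{alg:score-based-sel}.

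First I would note that, since $\delta \in (0,1)$ gives $\log(2/\delta) > 0$ and $\tau,\epsilon > 0$, the gap hypothesis implies $S(G,\D) < S(\widehat G_*,\D)$, so $G \neq \widehat G_*$; hence $\xi_G$ and $\xi_{\widehat G_*}$ are two distinct, and therefore independent, draws from $\mathrm{Lap}(2\tau/\epsilon)$. I would also observe that $\widehat G_* = \argmax_{G'\in\G} S(G',\D)$ is a deterministic function of $\D$, hence independent of all the $\xi$'s. By the definition of $\widehat G$, the event $\{\widehat G = G\}$ forces $S(G,\D) + \xi_G \geq S(\widehat G_*,\D) + \xi_{\widehat G_*}$ (this holds regardless of the tie-breaking rule, and ties have probability zero anyway), i.e.
\[
\xi_G - \xi_{\widehat G_*} \;\geq\; S(\widehat G_*,\D) - S(G,\D) \;\geq\; \tfrac{4\tau}{\epsilon}\log(2/\delta) \;=:\; t .
\]
Thus it suffices to bound $\PP{\xi_G - \xi_{\widehat G_*} \geq t}$.

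For the tail bound I would use a union bound, $\{\xi_G - \xi_{\widehat G_*}\geq t\}\subseteq \{\xi_G \geq t/2\}\cup\{-\xi_{\widehat G_*}\geq t/2\}$, together with the standard Laplace tail: if $\xi\sim\mathrm{Lap}(b)$ then $\PP{\xi\geq s} = \PP{\xi\leq -s} = \tfrac12 e^{-s/b}$ for $s\geq 0$. Here $b = 2\tau/\epsilon$ and $t/2 = \tfrac{2\tau}{\epsilon}\log(2/\delta) = b\log(2/\delta)$, so each of the two terms equals $\tfrac12 e^{-\log(2/\delta)} = \delta/4$, giving $\PP{\widehat G = G} \leq \delta/2 \leq \delta$.

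The argument is short, and I do not anticipate a genuine difficulty; the only point that needs care is the independence book-keeping — namely that the gap hypothesis forces $G\neq\widehat G_*$ so that $\xi_G\independent\xi_{\widehat G_*}$, and that $\widehat G_*$ carries no randomness from the noise — and the (harmless) treatment of ties in the randomized argmax.
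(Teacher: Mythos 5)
Your proposal is correct and follows essentially the same route as the paper's proof: reduce the event $\{\widehat G = G\}$ to the tail event $\xi_G - \xi_{\widehat G_*} \geq \frac{4\tau}{\epsilon}\log(2/\delta)$, split it by a union bound into two one-sided Laplace tails at level $\frac{2\tau}{\epsilon}\log(2/\delta)$, and bound each by $\delta/4$. Your write-up is in fact slightly more careful than the paper's (correct sign on the $\xi_{\widehat G_*}$ tail, explicit independence and tie-breaking bookkeeping, and the accurate final constant $\delta/2$ rather than the paper's stated $\delta$).
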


One immediate consequence of \Cref{prop:noisy_discovery_utility} is
that \Cref{alg:score-based-sel} outputs the optimal graph $\widehat
G_*$ with probability at least $1-\delta$, when $\widehat G_*$ is
``obvious,'' namely when there is no suboptimal graph with score
within $\frac{4\tau}{\epsilon}\log(2|\G|/\delta)$ of $S(\widehat
G_*,\D)$.

Finally, we discuss ways of bounding or evaluating the score sensitivity. One rigorous way to bound the score sensitivity is to make use of \emph{clipping}, thus bounding the contribution of any one data point. To illustrate this point, consider the Bayesian information criterion (BIC), one of the most common scoring criteria in the causal discovery literature. 
When the variables are modeled as Gaussian with variance $\sigma^2$, the BIC is defined as:
\begin{equation*}
S_{\text{BIC}}\left(G,\D\right) = - \min_\theta \frac{1}{n \sigma^2} \sum_{j=1}^d \sum_{k=1}^n \left(X_j^{(k)} - \sum_{s \in \mathbf{Pa}_j^{G}}\theta_s X_s^{(k)} \right)^2 - \sum_{j=1}^d\frac{|\mathbf{Pa}_j^G|}{n} \log n.
\end{equation*}
To achieve bounded
sensitivity, we can simply replace $\left(X_j^{(k)} - \sum_{s \in \mathbf{Pa}_j^{G}}\theta_s X_s^{(k)} \right)^2$ with its clipped version, $\min\left\{\left(X_j^{(k)} - \sum_{s \in \mathbf{Pa}_j^{G}}\theta_s X_s^{(k)} \right)^2,C\right\}$.
% \begin{equation*}
% S_{\text{BIC}}^C\left(G,\D\right) = - \min_\theta \frac{1}{n \sigma^2} \sum_{j=1}^d \sum_{k=1}^n \min\left\{\left(X_j^{(k)} - \sum_{s \in \mathbf{Pa}_j^{G}}\theta_s X_s^{(k)} \right)^2,C\right\} - \sum_{j=1}^d\frac{|\mathbf{Pa}_j^G|}{n} \log n.
% \end{equation*}
It is not hard to see that such a clipped
BIC score is $\frac{Cd}{n\sigma^2}$-sensitive in the worst case. The issue with clipping is that, on one hand, it makes the minimization problem in $S_{\text{BIC}}^C\left(G,\D\right)$ nonconvex and thus computationally intractable, and on the other, the sensitivity in practice might be far smaller than the worst-case bound.

For this reason, we recommend evaluating the sensitivity empirically. There are different ways this could be implemented; for example, one could generate $B$ bootstrap resamples of the data $\D^*_1,\dots,\D^*_B$ and compute $\hat\tau = \max_{G\in\G, i, b} |S(G,\D^*_b) - S(G,\D^{*,-i}_b)|$, where $\D^{*,-i}_b$ is $\D^*_b$ with point $i$ removed. Of course, this is a heuristic approach to evaluating sensitivity, but it allows full flexibility in choosing the score. We will show in our experiments that an empirical bound on the sensitivity does not violate the statistical validity of our proposal.

\section{Noisy Causal Discovery via Greedy Search}
\label{sec:ges}

We extend the randomization scheme in \Cref{alg:score-based-sel}
to causal discovery via \emph{greedy equivalence search}
(GES)~\citep{chickering2002optimal}, which is an efficient alternative to exact search when the latter is
prohibitive computationally.

\subsection{Background on GES}

GES is a procedure that greedily enlarges or reduces the estimated
graph so as to locally maximize a pre-specified score function. The
appeal of GES lies in the fact that it is consistent despite being a greedy search method \citep{chickering2002optimal}.

A core component of the classical GES algorithm is its score function, which
is required to be \textit{decomposable}, meaning that the score of the entire graph can be expressed
as a sum of ``subscores'' obtained by regressing each variable $X_i$
on its parents in $G$.

\begin{definition}[Decomposability]
A scoring criterion $S$ is \emph{decomposable} if there exists a map
$s$ such that, for any DAG $G$ and data set $\D$, we have
$S(G, \D) = \sum_{i=1}^d s(X_i, \mathbf{Pa}_i^G, \D)$.
\end{definition}
\noindent Throughout we will refer to the values $s(X_i, \mathbf{Pa}_i^G,
\D)$ as the \emph{local scores}.

GES greedily enlarges or reduces the selected graph by evaluating the
improvements obtained by either applying an edge insertion or an edge
deletion. Thus, crucial in executing GES are the
\emph{insertion score improvement} and \emph{deletion score
improvement}, respectively:
\begin{align}
\label{eq:score_gain1}
	\Delta S^+(e, G, \D) &\doteq S(G \cup e, \D) - S(G, \D);\\
\label{eq:score_gain2}
	\Delta S^-(e, G, \D) &\doteq S(G \setminus e, \D) - S(G, \D),
\end{align}
where $G\cup e$ denotes the DAG resulting from adding edge $e$ to DAG
$G$ and $G\setminus e$ denotes the DAG resulting from removing edge
$e$ from $G$. Due to decomposability, the score change
implied by adding or removing an edge depends only on the local
structure of $G$ around edge $e$: if $e=X_i\rightarrow X_j$, we have
that $\Delta S^+(e, G, \D) = s(X_j,\mathbf{Pa}_j^{G}\cup X_i, \D) - s(X_j,\mathbf{Pa}_j^{G}, \D)$.
A similar identity holds for $\Delta S^-(e, G, \D)$.

Given a decomposable score, the classical GES algorithm works as
follows. Throughout the execution, GES maintains a CPDAG $\widehat G$.
In the first half of the execution, in each sequential round GES
considers all CPDAGs that could be obtained by applying a valid edge
insertion operator, which we refer to as ``$(+)$-operators,'' to $\widehat G$. For
all possible $(+)$-operators $e$, GES evaluates the score gain,
$\Delta S^+(e, \widehat G, \D)$. Note that we slightly
abuse notation since $\widehat G$ is a CPDAG and not a single DAG
and $e$ includes specifications in addition to an edge.  More formally,
the score gain of an insertion operator $\Delta S^+(e,
\widehat G,
\D)$ is computed as in \eqref{eq:score_gain1} for a specific DAG $G$
consistent with the CPDAG $\widehat G$ (see Corollary 16 in
\citet{chickering2002optimal} for details). Once all possible edge
insertions have been scored, GES finds the $(+)$-operator $e^*$ that
maximizes the gain, $e^* = \argmax_e\Delta S^+(e, \widehat G, \D)$. If
$\Delta S^+(e^*, G, \D) > 0$---meaning that applying operator $e^*$
improves upon the score of the current graph---the algorithm applies
$e^*$ to $\widehat G$ and repeats the same insertion operator
selection procedure. Otherwise, if a local maximum is reached, it
halts. After the local maximum is reached, GES performs an analogous
sequence of steps once again, only now considering edge removal
operators, which we refer to as ``$(-)$-operators,'' and the corresponding
score gains $\Delta S^-(e,\widehat G,\D)$. As before, this score gain
is evaluated for a specific DAG consistent with $\widehat G$ according
to Eq.~\eqref{eq:score_gain2}
\citep[see Corollary 18 in][for details]{chickering2002optimal}.

The randomization scheme of \textsc{noisy-ges} is agnostic to certain graph-theoretic aspects of GES, including what
constitutes a valid edge insertion or edge removal operator for a CPDAG
and whether GES maintains a single DAG or a CPDAG. These choices likewise do not affect the implied max-information bound. For this reason, we skip these details in the main text and review them in Appendix \ref{app:ges_appendix}. The irrelevance of these details also implies that
one can view GES intuitively as operating on the space of
DAGs, rather than CPDAGs, and $(+)$-operators (resp. $(-)$-operators) as being single-edge additions (resp. removals) that maintain the DAG structure.

\subsection{Noisy GES}

To enable valid statistical inference after causal discovery via GES,
we develop a differentially private variant of GES that relies on randomization. The GES algorithm utilizes the data in two basic
ways: by selecting the best-scoring operator and by checking whether
applying the corresponding operator leads to a score improvement.
Hence, in order to make GES differentially private, we compute
noisy scores and apply a randomized rule for stopping at a local
maximum. We use the Report Noisy Max
mechanism and the Above Threshold mechanism
\citep{dwork2014algorithmic} for the two objectives, respectively.

%One only needs to replace the search over edges with a search over valid insert and delete operators, formally defined in the Appendix.

Similarly to the case of exact search, we require that the local scores have low sensitivity.

\begin{definition}[Local score sensitivity]
A local score function $s$ is \emph{$\tau$-sensitive} if for all indices
$i\in[d], I\subseteq [d]$ and data sets $\D,\D'$ differing in at most one entry, we have $|s(X_i,X_I,\D) -
s(X_i,X_I,\D')| \leq \tau$.
\end{definition}

Note that local score sensitivity immediately implies a bound on the
sensitivity of $\Delta S^{\texttt{sgn}}$, for $\texttt{sgn} \in\{+,-\}$.
If $s$ is $\tau$-sensitive, we have $|\Delta S^{\texttt{sgn}}(e, G, \D) - \Delta S^{\texttt{sgn}}(e, G, \D')| \leq 2\tau$,
for $\texttt{sgn} \in\{+,-\}$. This bound holds for all edges $e$ and
graphs $G$. As in Section \ref{sec:noisy_selection}, we can evaluate the local score sensitivity empirically or deterministically bound it by clipping.

We state the \textsc{noisy-ges}
algorithm along with its privacy guarantees. We stress that \textsc{noisy-ges} is equally valid for greedy search over CPDAGs and
greedy search over DAGs.

\begin{algorithm}[t]
\SetAlgoLined
\SetKwInOut{Input}{input}
\Input{data set $\D$, maximum number of edges $E_{\max}$, score $S$ with local score sensitivity $\tau$, privacy parameters $\epsilon_{score}, \epsilon_{\mathrm{thresh}}$}
\textbf{output:} causal graph $\widehat G$\newline
Initialize $\widehat G$ to be an empty graph\newline
Run forward pass $\widehat G \leftarrow \text{GreedyPass}(\widehat G,\D,E_{\max},S,\tau,\epsilon_{\mathrm{score}},\epsilon_{\mathrm{thresh}},+)$\newline
Run backward pass $\widehat G \leftarrow \text{GreedyPass}(\widehat G, \D,E_{\max},S, \tau,\epsilon_{\mathrm{score}},\epsilon_{\mathrm{thresh}},-)$\newline
Return $\widehat G$
\caption{\textsc{noisy-ges}}
\label{alg:GES_general}
\end{algorithm}

\begin{algorithm}[ht]
\SetAlgoLined
\SetKwInOut{Input}{input}
\Input{initial graph $\widehat G_0$, data set $\D$, maximum number of edges $E_{\max}$, score $S$ with local score sensitivity~$\tau$, privacy parameters $\epsilon_{\mathrm{score}}, \epsilon_{\mathrm{thresh}}$, pass indicator $\texttt{sgn}\in\{+,-\}$}
\textbf{output:} estimated causal graph $\widehat G$\newline
Initialize $\widehat G\leftarrow \widehat G_0$\newline
Sample noisy threshold $\nu \sim \text{Lap}\left(\frac{4 \tau}{\epsilon_{\mathrm{thresh}}}\right)$\newline
\For{$t=1,2,\dots,E_{\max}$}{
\ Construct set $\mathcal{E}^{\texttt{sgn}}_t$ of valid (\texttt{sgn})-operators\newline
For all $e\in \mathcal{E}^{\texttt{sgn}}_t$, compute $\Delta S^{\texttt{sgn}}(e, \widehat{G}, \D)$ and sample $\xi_{t,e} \stackrel{\text{i.i.d.}}{\sim} \text{Lap}\left(\frac{4 \tau}{\epsilon_{\mathrm{score}}}\right)$\newline
Set $e^*_t = \argmax_{e\in \mathcal{E}^{\texttt{sgn}}_t} \Delta S^{\texttt{sgn}}(e, \widehat{G}, \D) + \xi_{t,e}$\newline
Sample $\eta_t \sim \text{Lap}\left(\frac{8 \tau}{\epsilon_{\mathrm{thresh}}}\right)$\newline
\uIf{$\Delta S^{\texttt{sgn}}(e^*_t, \widehat G, \D) + \eta_t > \nu$}
{
Apply operator $e_t^*$ to $\widehat G$
}
\Else{
break
}
}
Return $\widehat G$
\caption{GreedyPass}
\label{alg:GES_single_pass_general}
\end{algorithm}

\begin{lemma} %[Privacy of \textsc{noisy-ges}]
\label{lemma:GES_dp_guarantee}
The \textsc{noisy-ges} algorithm (\Cref{alg:GES_general}) is $(2\epsilon_{\mathrm{thresh}} + 2E_{\max}\epsilon_{\mathrm{score}})$-differentially private.
\end{lemma}

With Lemma \ref{lemma:GES_dp_guarantee} in hand, we can ensure valid inference
after causal discovery. We state an analogue of
\Cref{thm:score_fn} for \textsc{noisy-ges} which shows how to discount
the target miscoverage level in order to preserve validity after graph
discovery via greedy search. The result follows by putting together
\Cref{prop:max-info} and \Cref{lemma:GES_dp_guarantee}.

\begin{theorem} %[Inference after \textsc{noisy-ges}]
\label{thm:ges_inf}
Suppose that we select $\widehat G$ via noisy greedy equivalence search (\Cref{alg:GES_general}). Then, for any causal graph $G$, we have
$$\PP{\exists (i,j) \in \mathcal{I}_{G}: \beta^{(i\rightarrow j)}_{G} \not\in \mathrm{CI}_{G}^{(i\rightarrow j)}(\tilde \alpha),~\widehat G = G} \leq \alpha,$$
where
$\tilde \alpha = (\alpha - \gamma) \exp\left(-2n(\epsilon_{\mathrm{thresh}} + E_{\max}\epsilon_{\mathrm{score}})^2 - (\epsilon_{\mathrm{thresh}} + E_{\max}\epsilon_{\mathrm{score}})\sqrt{2n\log(1/\gamma)}\right)$,
for any $\gamma\in(0,\alpha)$. Consequently,
$$\PP{\exists (i,j) \in \mathcal{I}_{\widehat G}: \beta^{(i\rightarrow j)}_{\widehat G} \not\in \mathrm{CI}_{\widehat G}^{(i\rightarrow j)}(\tilde \alpha)} \leq \alpha.$$
\end{theorem}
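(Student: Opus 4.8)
The plan is to mirror the proof of \Cref{thm:score_fn}, substituting the differential-privacy budget of \textsc{noisy-ges} for that of \Cref{alg:score-based-sel} and then running the generic max-information-to-coverage reduction described in \Cref{sec:ada}. First I would invoke \Cref{prop:GES_dp_guarantee}: \Cref{alg:GES_general} is $\epsilon$-differentially private with $\epsilon = 2\epsilon_{\mathrm{thresh}} + 2E_{\max}\epsilon_{\mathrm{score}} = 2(\epsilon_{\mathrm{thresh}} + E_{\max}\epsilon_{\mathrm{score}})$. Feeding this $\epsilon$ into \Cref{prop:max-info} gives, for every $\gamma\in(0,1)$,
$$I^\gamma_\infty(\widehat G;\D) \le \frac{n}{2}\epsilon^2 + \epsilon\sqrt{n\log(2/\gamma)/2} = 2n(\epsilon_{\mathrm{thresh}}+E_{\max}\epsilon_{\mathrm{score}})^2 + (\epsilon_{\mathrm{thresh}}+E_{\max}\epsilon_{\mathrm{score}})\sqrt{2n\log(2/\gamma)},$$
which is exactly $-\log\frac{\tilde\alpha}{\alpha-\gamma}$ up to the routine rearrangement of the square-root term, so $\tilde\alpha = (\alpha-\gamma)\exp(-I^\gamma_\infty(\widehat G;\D)\text{ bound})$.

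Next I would define the ``bad'' subset of the joint space of graph/data pairs,
$$\mathrm{Err} = \left\{(G,\D):\ \exists (i,j)\in\mathcal{I}_G,\ \beta_G^{(i\rightarrow j)} \notin \mathrm{CI}_G^{(i\rightarrow j)}(\tilde\alpha;\D)\right\},$$
and, for the per-graph claim, its restriction $\mathrm{Err}\cap\{G = G_0\}$ for a fixed $G_0$. By the definition of $\gamma$-approximate max-information, with $\tilde\D$ an independent copy of $\D$,
$$\PP{(\widehat G,\D)\in\mathrm{Err}} \le \exp\!\big(I^\gamma_\infty(\widehat G;\D)\big)\,\E\!\left[\PPst{(\widehat G,\tilde\D)\in\mathrm{Err}}{\widehat G}\right] + \gamma.$$
Because $\tilde\D$ is independent of $\widehat G$, on the event $\widehat G = G$ the inner probability equals $\PP{(G,\tilde\D)\in\mathrm{Err}}$, i.e.\ the miscoverage probability for a genuinely pre-specified graph, which by the uniform-over-$\mathcal{I}_G$ guarantee \eqref{eq:naive_ints} (with $\tilde\alpha$ in place of $\alpha$) is at most $\tilde\alpha$; hence the whole conditional expectation is at most $\tilde\alpha$. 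Substituting the choice of $\tilde\alpha$ and the max-information bound above yields $\PP{(\widehat G,\D)\in\mathrm{Err}} \le (\alpha-\gamma)+\gamma = \alpha$, which is the ``Consequently'' statement. Running the identical computation with $\mathrm{Err}\cap\{G=G_0\}$, and using $\PP{\widehat G = G_0,\ (G_0,\tilde\D)\in\mathrm{Err}} = \PP{\widehat G=G_0}\,\PP{(G_0,\tilde\D)\in\mathrm{Err}} \le \tilde\alpha$, gives the per-graph bound $\PP{\exists(i,j)\in\mathcal{I}_{G_0}:\ \beta_{G_0}^{(i\rightarrow j)}\notin\mathrm{CI}_{G_0}^{(i\rightarrow j)}(\tilde\alpha),\ \widehat G = G_0} \le \alpha$.

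I do not expect a genuine obstacle at this level: essentially all the content is carried by \Cref{prop:GES_dp_guarantee} — whose proof is the real work, composing Report Noisy Max over the $(+)$- and $(-)$-operator sets with two AboveThreshold instances and applying basic composition — together with \Cref{prop:max-info}. The only points that need care here are bookkeeping: (i) treating $\mathrm{Err}$ as a measurable subset of the \emph{joint} $(\widehat G,\D)$ space so that the randomness of the target $\beta_{\widehat G}^{(i\rightarrow j)}$ is absorbed into the event rather than left dangling; (ii) applying \eqref{eq:naive_ints} in the form that is uniform over all $(i,j)\in\mathcal{I}_G$ for each fixed $G$, which is precisely what allows the graph-dependent query set $\mathcal{I}_{\widehat G}$ to pass through the reduction; and (iii) matching constants, i.e.\ checking that $\epsilon = 2(\epsilon_{\mathrm{thresh}}+E_{\max}\epsilon_{\mathrm{score}})$ turns $\tfrac{n}{2}\epsilon^2$ into $2n(\epsilon_{\mathrm{thresh}}+E_{\max}\epsilon_{\mathrm{score}})^2$ and the cross term into the stated square-root expression.
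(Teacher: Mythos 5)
Your proposal is correct and follows exactly the route the paper intends: the paper's own proof of this theorem is just the remark that it follows from \Cref{prop:max-info} and \Cref{prop:GES_dp_guarantee} ``analogously to \Cref{thm:score_fn},'' and your argument is a faithful, in fact more carefully written-out, version of that reduction (including the $+\gamma$ term and the factorization $\PP{\widehat G = G_0,\,(G_0,\tilde\D)\in\mathrm{Err}} = \PP{\widehat G = G_0}\,\PP{(G_0,\tilde\D)\in\mathrm{Err}}$, both of which the paper's Theorem~\ref{thm:score_fn} proof elides). One small point: substituting $\epsilon = 2(\epsilon_{\mathrm{thresh}}+E_{\max}\epsilon_{\mathrm{score}})$ into \Cref{prop:max-info} gives $\sqrt{2n\log(2/\gamma)}$ in the exponent, as you compute, whereas the theorem statement has $\sqrt{2n\log(1/\gamma)}$ --- these do not match ``exactly up to routine rearrangement''; your $\log(2/\gamma)$ is what the cited propositions actually justify, and the $\log(1/\gamma)$ in the statement appears to be a dropped constant, so you should prove (and state) the slightly more conservative $\tilde\alpha$ rather than paper over the mismatch.
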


Notice that setting $\epsilon_{\mathrm{score}},\epsilon_{\mathrm{thresh}} \propto \frac{1}{\sqrt{n}}$ (and $\gamma$ to be a small constant fraction of $\alpha$) implies an essentially constant discount factor, i.e., a constant ratio between $\alpha$ and $\tilde \alpha$. This will be our default scaling for $\epsilon_{\mathrm{score}}$ and $\epsilon_{\mathrm{thresh}}$.

\section{Empirical Studies}
\label{sec:experimental}

We complement our theoretical findings with experiments.
% We compare the standard, non-noisy GES method with our noisy GES (Algorithm \ref{alg:GES_general}). For the scoring criterion, we use the clipped BIC score (Eq.~\eqref{eq:clipped_bic}) and focus on multivariate Gaussian observations.
%  For the experiments involving our method we: 1) introduce privacy parameters $\epsilon_{\mathrm{score}}, \epsilon_{\mathrm{thresh}}$ and add the noising in the corresponding procedures, as in Alg. \ref{alg:GES_general}\footnote{Due to the explicit correspondence between the reference implementation and the original GES pseudocode, the implementation of Priv-GES is the direct ``pythonization'' of Alg.~\ref{alg:GES_general}.}; and 2) 
% use the clipped BIC score (Eq.~\ref{eq:clipped_bic}) introduced in \Cref{subsec:consistency} which we optimize via gradient descent.
First, in Section \ref{sec:exp_validity} we evaluate the severity of the error incurred by uncorrected inference after causal discovery and compare to the error of our methods, \textsc{noisy-select} and \textsc{noisy-ges}. We find that the double-dipping phenomenon that motivates our work indeed leads to invalid inference, worsening in low-sample and high-dimensional regimes, and that our methods effectively solve this. Then, in Section \ref{sec:exp_quality} we compare the quality of the graph discovered by our randomized methods with that of the graph found by standard causal discovery with data splitting. We apply our methods with a robustified version of the BIC score that we informally refer to as the ``Huber score.'' It simply replaces the squared loss with the Huber loss for increased robustness. For GES we use the usual BIC score. In Appendix \ref{subsec:huber_validity} we show that our findings are qualitatively the same if we apply GES with the robustified score.

\subsection{Validity}\label{sec:exp_validity}

We quantify the severity of the error incurred by uncorrected inference after causal discovery by evaluating the probability of miscoverage of a causal estimand. In particular, we use the same data both to estimate the causal graph $\widehat G$ either via exact selection of the score-maximizing graph or GES \emph{and} to compute infer the effect $\widehat \beta_{\widehat G}^{(i\rightarrow j)}$. For exact selection, we take the graph with the maximum score among the true one and nine additional variations created by first removing an existing edge with probability $p_\mathrm{remove}$ and then adding each possible new edge $i\rightarrow j$ with probability $p_\mathrm{add}$. We select the causal target $i\rightarrow j$ with uniform probability over all edges in $\widehat G$ and use a standard z-interval to produce a $95\%$ confidence interval for the effect $\beta_{\widehat G}^{(i\rightarrow j)}$. We investigate two models for generating the true underlying graph.

\paragraph{Empty graph.} In the first model, our goal is to show that pure noise can be misconstrued into an effect without proper correction.
We draw $n$ independent samples from a $d$-dimensional standard Gaussian distribution, $\mathcal{N}(0, I)$; this corresponds to an empty graph being the true underlying model.
We repeat the experiment $100$ times to estimate the probability of miscoverage of the population-level parameter $\beta^{(i\rightarrow j)}_{\widehat G}$, which in this case is simply zero. This probability is equivalent to the probability of falsely rejecting the null hypothesis that there is no effect between $X_i$ and $X_j$ at significance level $95\%$.
If the estimated graph is empty, no interval is computed and we automatically count those trials as correctly covering the target. In Figure  \ref{fig:validity_empty_graph_select} we plot the probability of error for varying sample size $n$ and number of variables~$d$ for exact selection and \textsc{noisy-select} between ten graph variants created by setting $p_\mathrm{remove}=0$ and $p_\mathrm{add} = 0.01$. In Figure \ref{fig:validity_plots_empty} we show a similar comparison for classical GES and \textsc{noisy-ges}. For both exact selection and classical GES, we see that the error probability exceeds the target error across the board, the violation getting worse as $n$ decreases and as $d$ increases in the case of GES. For our method, we see that the error is controlled at the nominal level. We see that the higher setting of $\epsilon$ appears to be more conservative. This may be because for higher $\epsilon$ the algorithm is more likely to correctly identify the empty graph, in which case no error is made. In Appendix~\ref{subsec:huber_validity} we show the behavior of classical GES when using the Huber score and observe similar behavior to the one in Figure \ref{fig:validity_plots_empty}, showing that the validity is not improved simply by using a robust score.

\begin{figure}[t]
     \centering
     \includegraphics[width=0.32\textwidth]{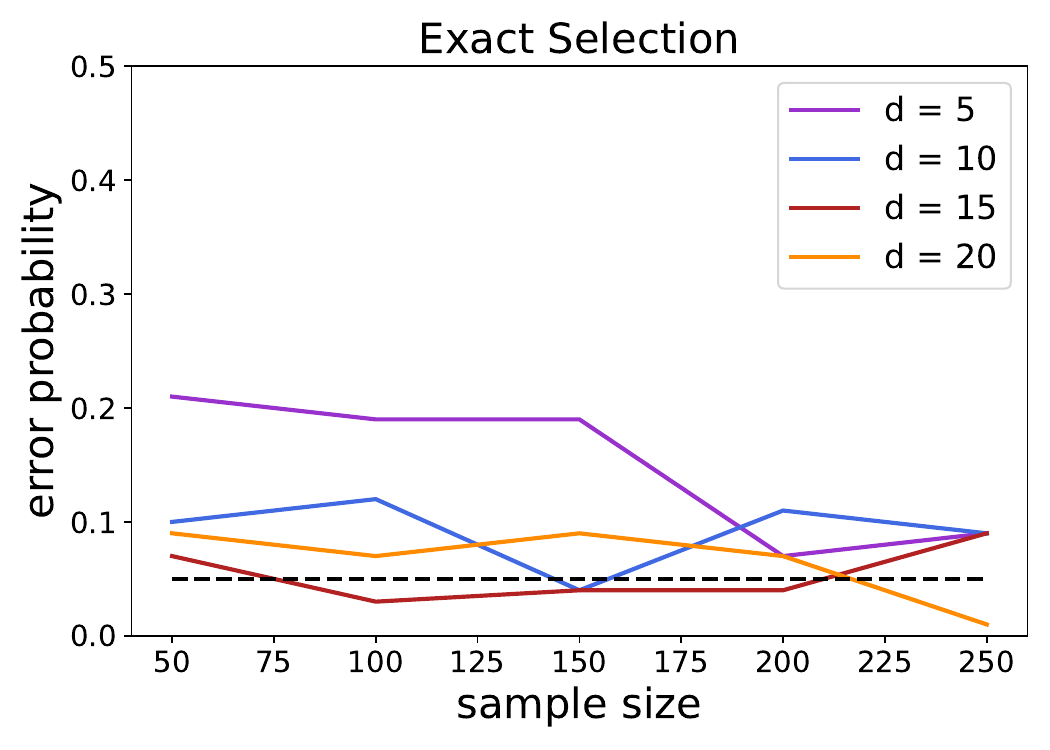}
     \includegraphics[width=0.32\textwidth]{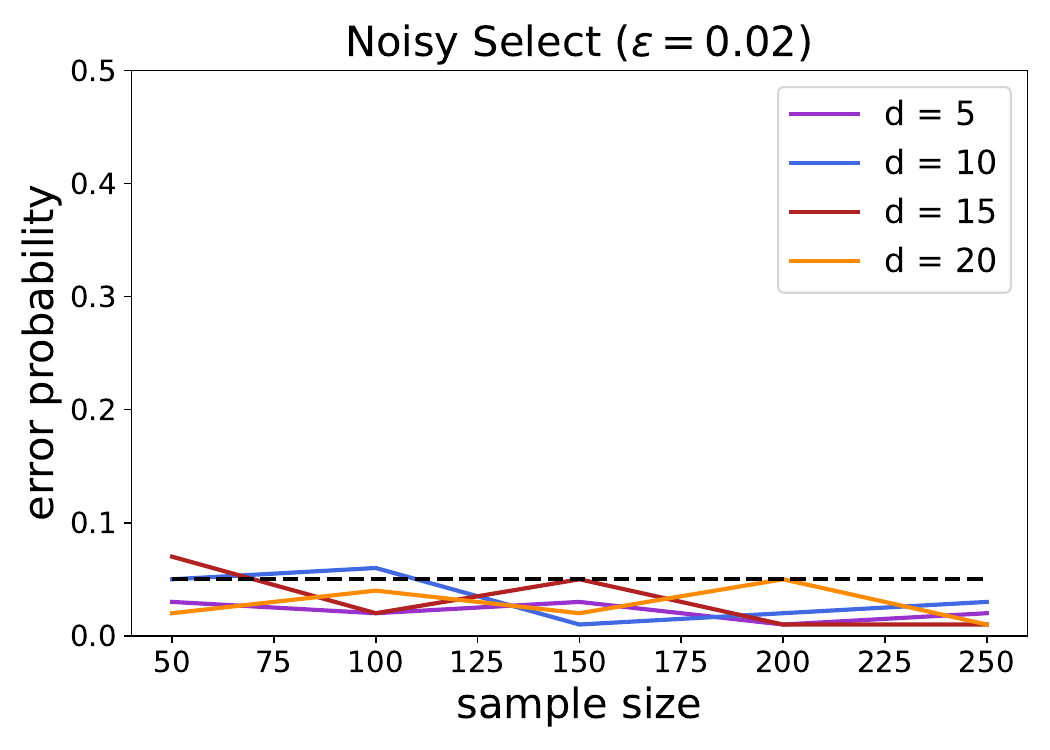}
     \includegraphics[width=0.32\textwidth]{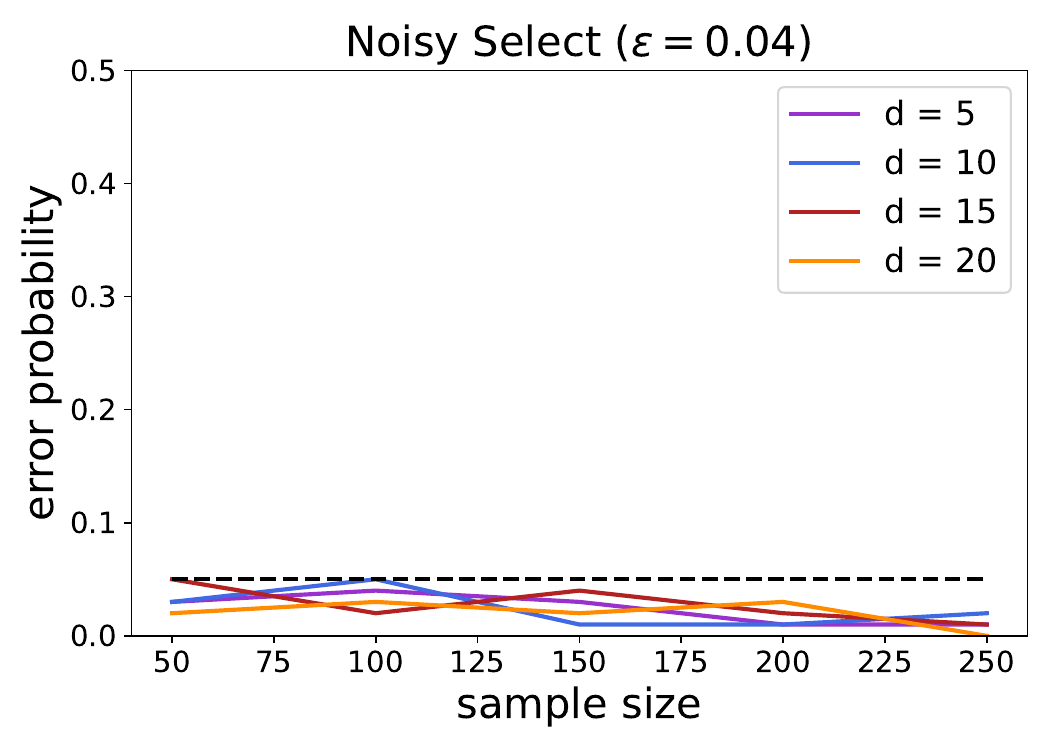}
    \caption{Probability of error for varying $n$ and $d$ in empty graph for exact selection (left), \textsc{noisy-select} with $\epsilon=0.02$ (middle), and $\epsilon=0.04$ (right). }
        \label{fig:validity_empty_graph_select}
\end{figure}

\begin{figure}[t]
     \centering
     \includegraphics[width=0.32\textwidth]{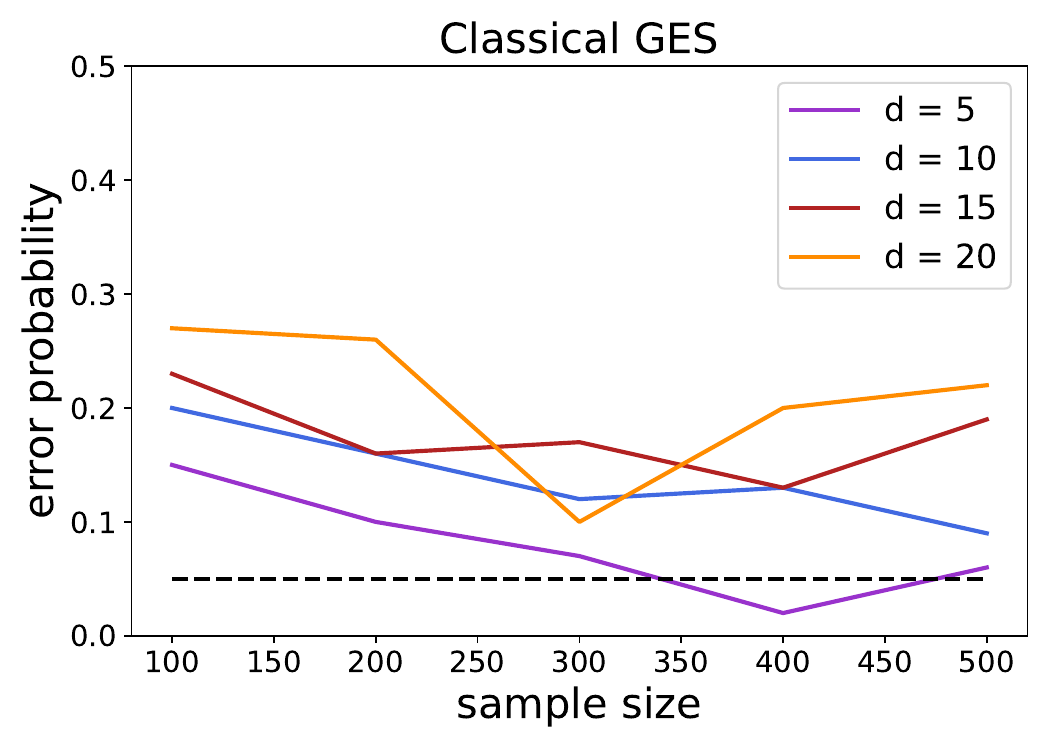}
    \includegraphics[width=0.32\textwidth]{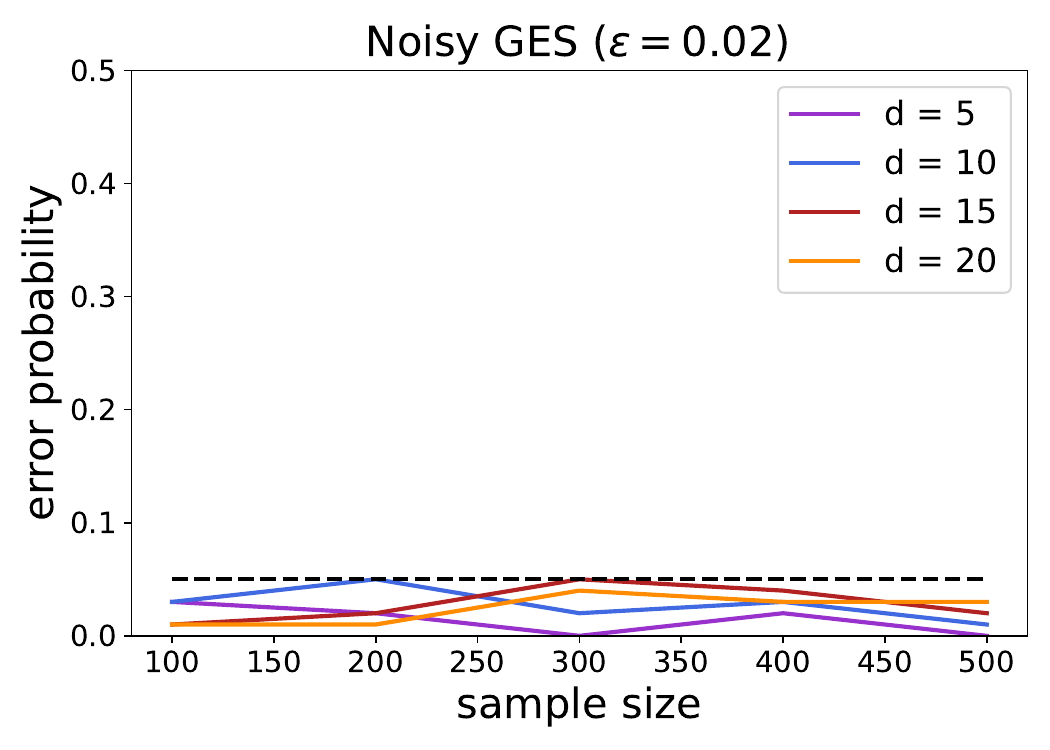}
    \includegraphics[width=0.32\textwidth]{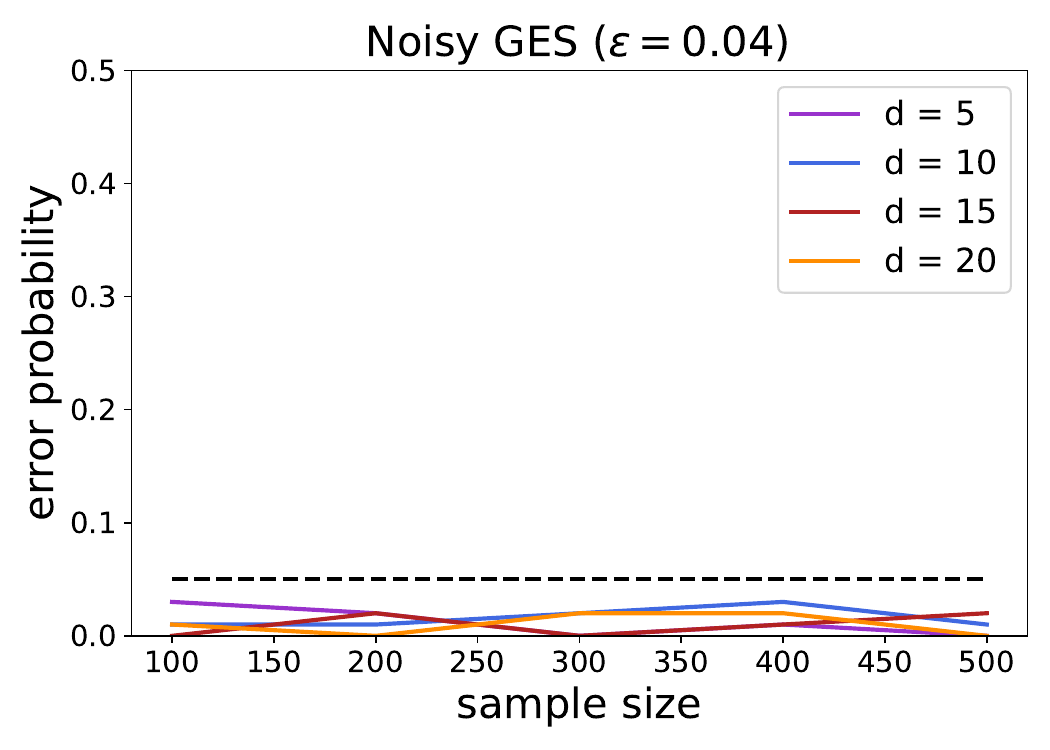}
        \caption{Probability of error for varying $n$ and $d$ in empty graph for classical GES (left), \textsc{noisy-ges} with $\epsilon=0.02$ (middle), and $\epsilon=0.04$ (right).}
        \label{fig:validity_plots_empty}
\end{figure}

\paragraph{Sparse random graph.} 
We next consider a more challenging setting where the underlying DAG is sparse, but there exist truly significant relationships between variables. Formally, we generate an Erd\H{o}s-R\'enyi graph with $d$ nodes and average degree 1, and orient the edges according to a random ordering (while preserving the DAG structure), creating a connectivity matrix $\mathbf{W}$. The entries of $\mathbf{W}$ are either zero or a value sampled uniformly between 2 and 4, and we draw samples as:
$$X^{(k)} \stackrel{\mathrm{i.i.d.}}{\sim} \mathcal{N}\left(\mathbf{0}, ((\mathbf{I}_d - \mathbf{W})(\mathbf{I}_d - \mathbf{W}^\top))^{-1}\right). $$
In this case, the target regression coefficients \eqref{eq:causal_effect} are no longer zero.
% , and we estimate them by finding the respective OLS solution on a fresh sample of size $10^6$. 
We provide analogous comparisons to Figure \ref{fig:validity_empty_graph_select} and \ref{fig:validity_plots_empty}  in Figure~\ref{fig:validity_random_graph_select} and Figure~\ref{fig:validity_plots_random}. Note that this time we evaluate validity with respect to the projected effect $\beta_{\widehat{G}}$ (simulated using $10^6$ new data points). Since the graph is non-empty, we set $p_\mathrm{remove} = 2.5/|E|$ and $p_\mathrm{add} = 0.01$. For both exact selection and classical GES, we observe a qualitatively similar trend as in the empty graph setting, albeit to a less extreme extent. For \textsc{noisy-select} and \textsc{noisy-ges}, we see that for a non-negligible sample size the error is controlled at the desired level. Again, in Appendix~\ref{subsec:huber_validity} we show that classical GES with the Huber score similarly violates the coverage requirement.

\begin{figure}[h]
     \centering
     \includegraphics[width=0.32\textwidth]{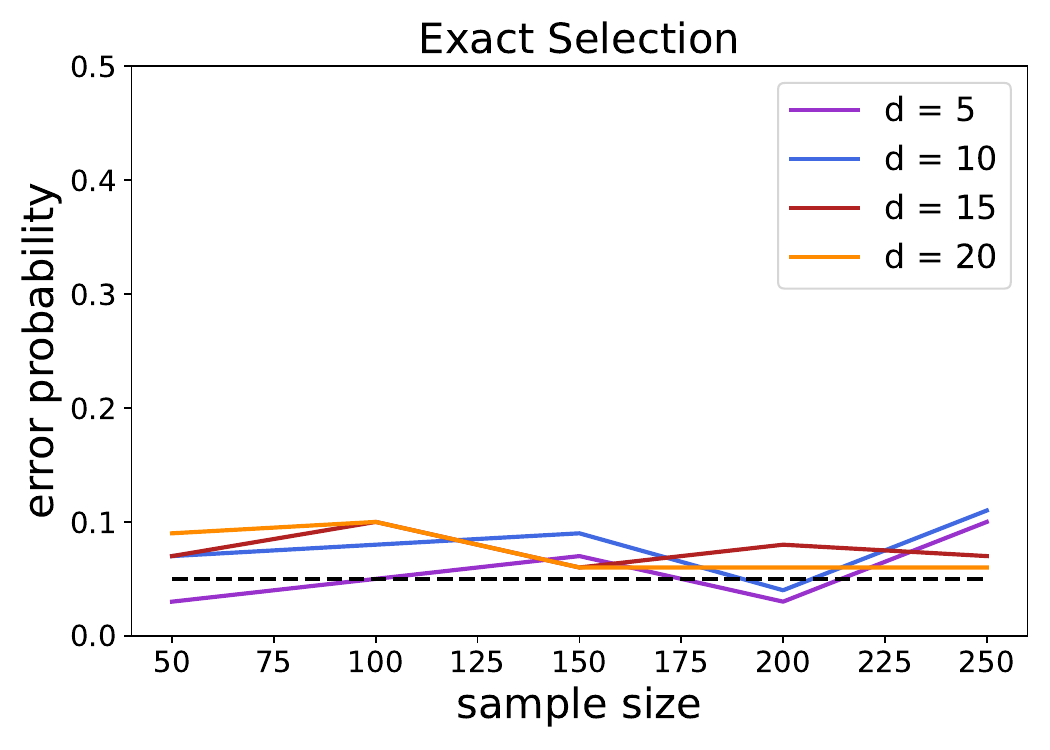}
     \includegraphics[width=0.32\textwidth]{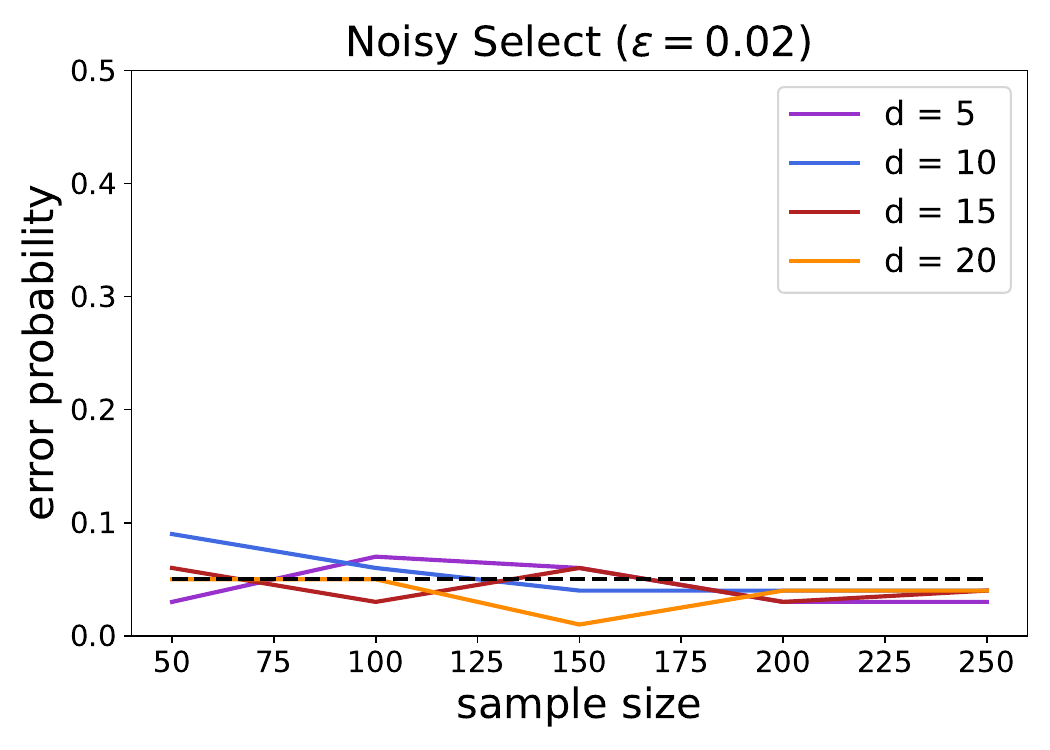}
     \includegraphics[width=0.32\textwidth]{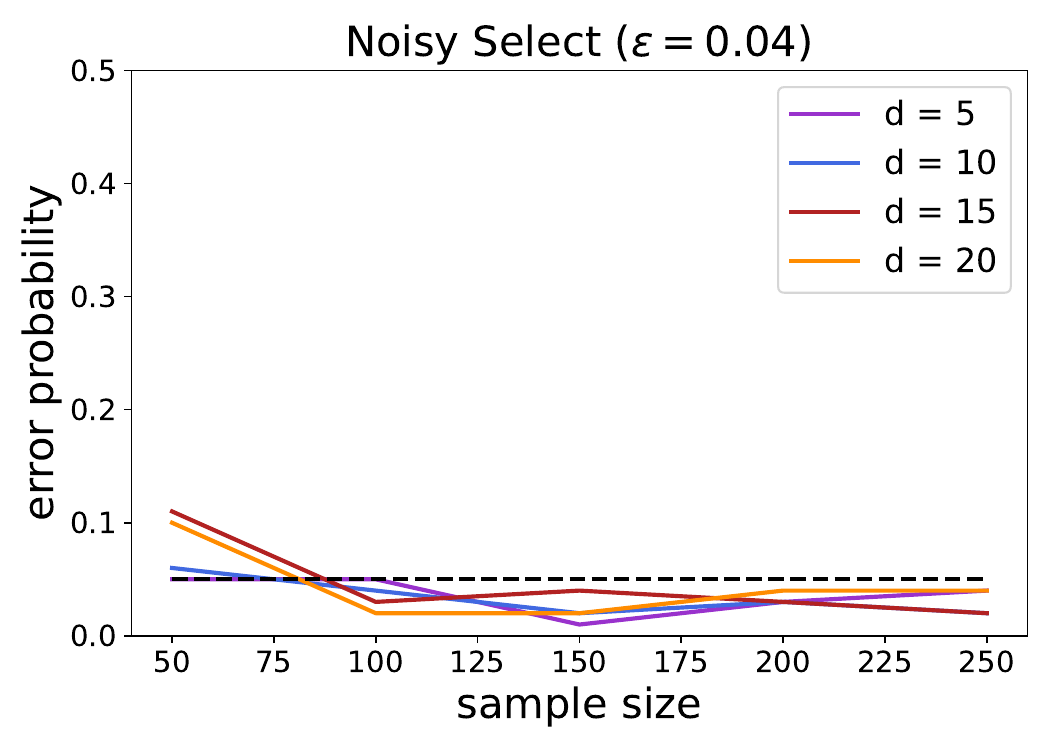}

    \caption{Probability of error for varying $n$ and $d$ in random graph for exact selection (left), \textsc{noisy-select} with $\epsilon=0.02$ (middle), and $\epsilon=0.04$ (right). }
        \label{fig:validity_random_graph_select}
\end{figure}

\begin{figure}[h]
     \centering
     \includegraphics[width=0.32\textwidth]{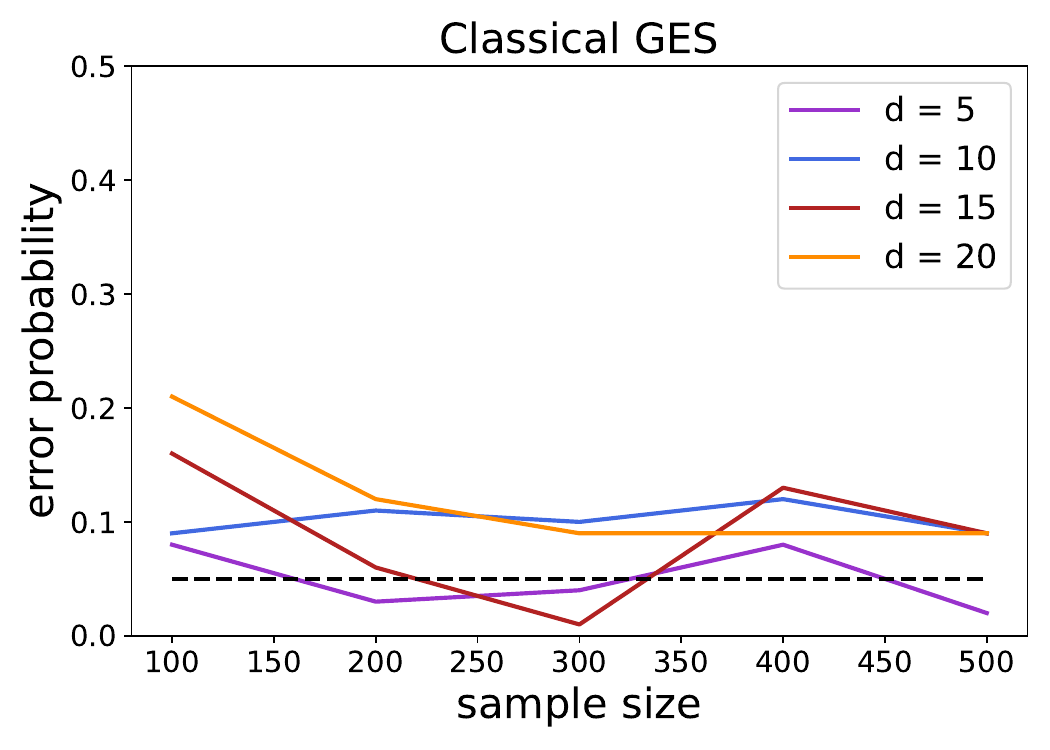}
     \includegraphics[width=0.32\textwidth]{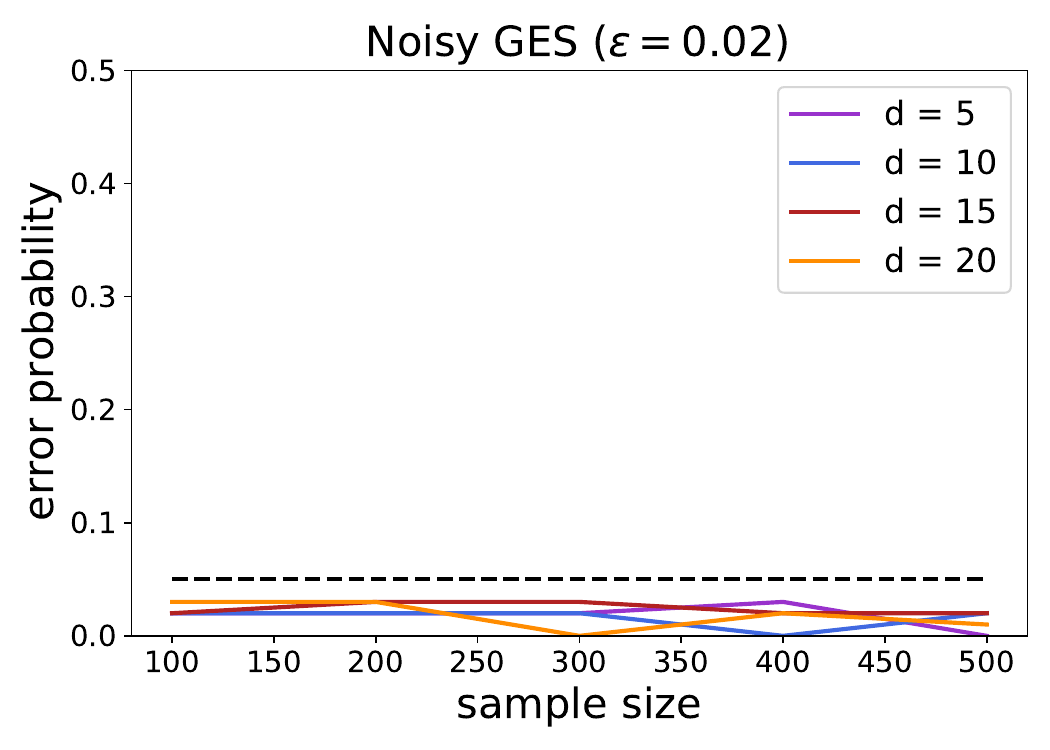}
     \includegraphics[width=0.32\textwidth]{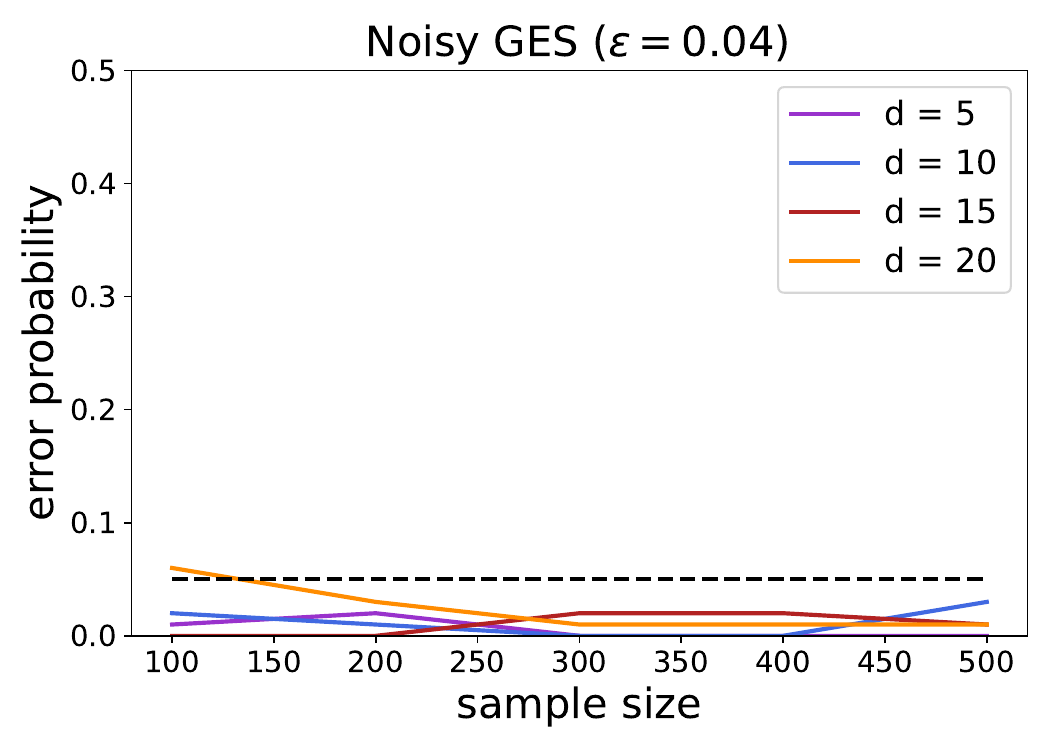}

    \caption{Probability of error for varying $n$ and $d$ in random graph for classical GES (left) and \textsc{noisy-ges} with $\epsilon=0.02$ (middle), and $\epsilon=0.04$ (right).}
        \label{fig:validity_plots_random}
\end{figure}

\subsection{Graph Quality}
\label{sec:exp_quality}

We now compare our randomized methods---\textsc{noisy-select} and \textsc{noisy-ges}---with standard, noiseless selection strategies combined with data splitting. Recall that both approaches provide a valid correction for inference after causal discovery. Our randomized methods use the \emph{whole} data set to learn a graph and perform subsequent inference, but they inject noise into the causal discovery step and use a more conservative error level for inference. Data splitting, on the other hand, splits the data into two independent chunks, estimating the causal graph on one and performing inference on the other. 

More thoroughly, in data splitting we use, up to rounding error, $(1-p)n$ data points to learn the graph $\widehat G$ and the remaining $pn$ data points to do inference, for some splitting fraction $p\in(0,1)$. The parameter $p$ interpolates between two extremes, one in which all data is used for causal discovery ($p=0$) and the other in which all data is reserved for inference ($p=1$). In our framework, the privacy parameters interpolate between these two extremes in a similar fashion (when the max-information is infinite and equal to zero, respectively).

In the following subsections, we compare the two approaches across two metrics: (i) structural Hamming distance (SHD) to the graph found by the classical, noiseless method on the \emph{whole} data, and (ii) confidence interval width. We vary the privacy parameter and compare to data splitting with a low, medium, and high splitting fraction, $p\in\{0.05, 0.5, 0.95\}$.

\subsubsection{Exact Search}

We evaluate the performance of \textsc{noisy-select} (Algorithm~\ref{alg:score-based-sel}) on both synthetic and real data. We note that in the real-data experiments the ground-truth graph is provided by the data curator and may not be perfectly ``true.'' For each data set, we create nine additional variations of the ground-truth graph by first removing each existing edge with probability $p_\mathrm{remove}$ and then adding each possible new edge $i \rightarrow j$ with probability $p_\mathrm{add}$. We then score the ten variations and choose the best one, either using \textsc{noisy-select} or exact, noiseless selection with data splitting. We report the average SHD to the exact maximum using the \emph{whole} data. Below we give the specifics of the experimental setups.

\paragraph{Synthetic data.} First we consider a random graph setting. We fix $d=15$ and $n=100$ and take $p_\mathrm{remove} = 2.5/|E|$, $p_\mathrm{add} = 0.01$. We vary $\epsilon$ between $0.001$ and $0.32$ and plot the SHD to the graph found by GES on the full data set and the interval widths. We plot the average results over $100$ trials in Figure~\ref{fig:stable_select_synthetic}. We observe that \textsc{noisy-select} strongly outperforms the lowest split fraction, while having comparable interval widths for a large fraction of the $\epsilon$ settings.

\begin{figure}[t]
     \centering
     \includegraphics[width=0.4\textwidth]{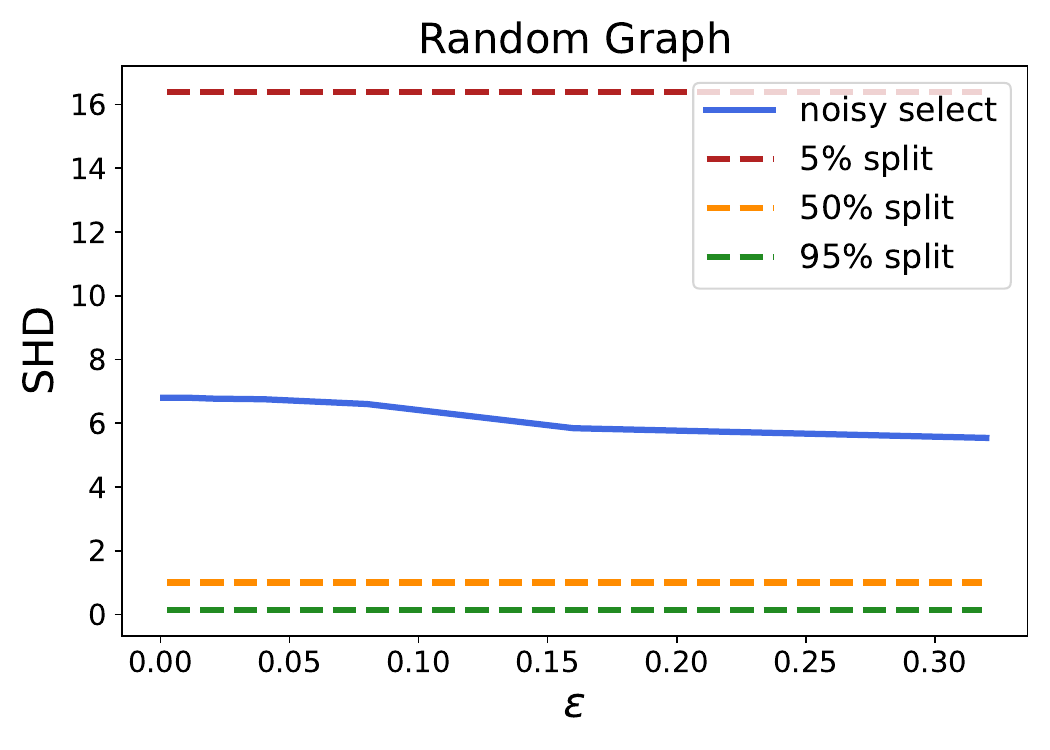}\hspace{0.5cm}
    \includegraphics[width=0.4\textwidth]{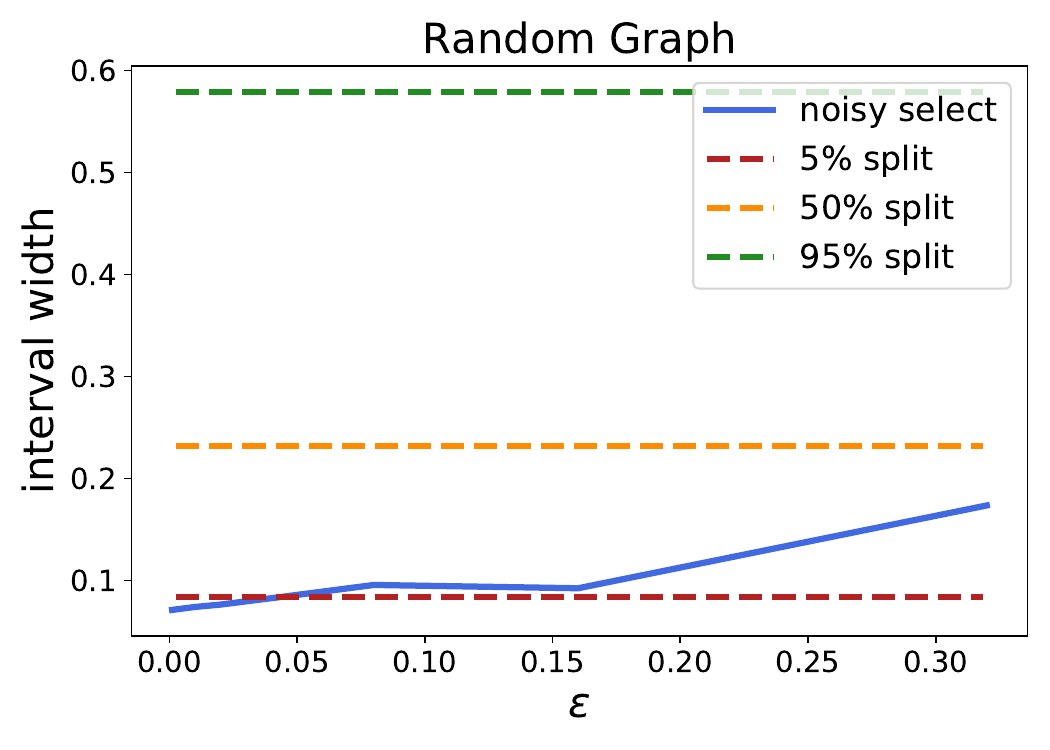}
    \caption{Comparison of \textsc{noisy-select} with varying $\epsilon$ and three data splitting baselines in terms of SHD (left) and interval widths (right) for a random graph.}
    \label{fig:stable_select_synthetic}
\end{figure}

\paragraph{Auto-MPG data.} For our first real-data experiment, we use the Auto-MPG data \citep{quinlan1993combining} contained in the Tuebingen database \citep{mooij2016distinguishing}. According to \cite{quinlan1993combining}, ``the data concerns city-cycle fuel consumption in miles per gallon.'' The data consists of $d=5$ variables (displacement, MPG, horsepower, weight and acceleration) and we take $n=100$ for our experiment. We normalize the data and set $p_\mathrm{remove} = 2/|E|, p_\mathrm{add} = 2/d^2$. We vary $\epsilon$ between $0.01$ and $0.1$ and plot averages over $100$ trials in Figure~\ref{fig:auto_mpg}. We observe that \textsc{noisy-select} interpolates between the $5\%$ and $50\%$ splitting baselines both in terms of SHD and interval widths.

\begin{figure}[b]
\centering
\includegraphics[width=0.4\textwidth]{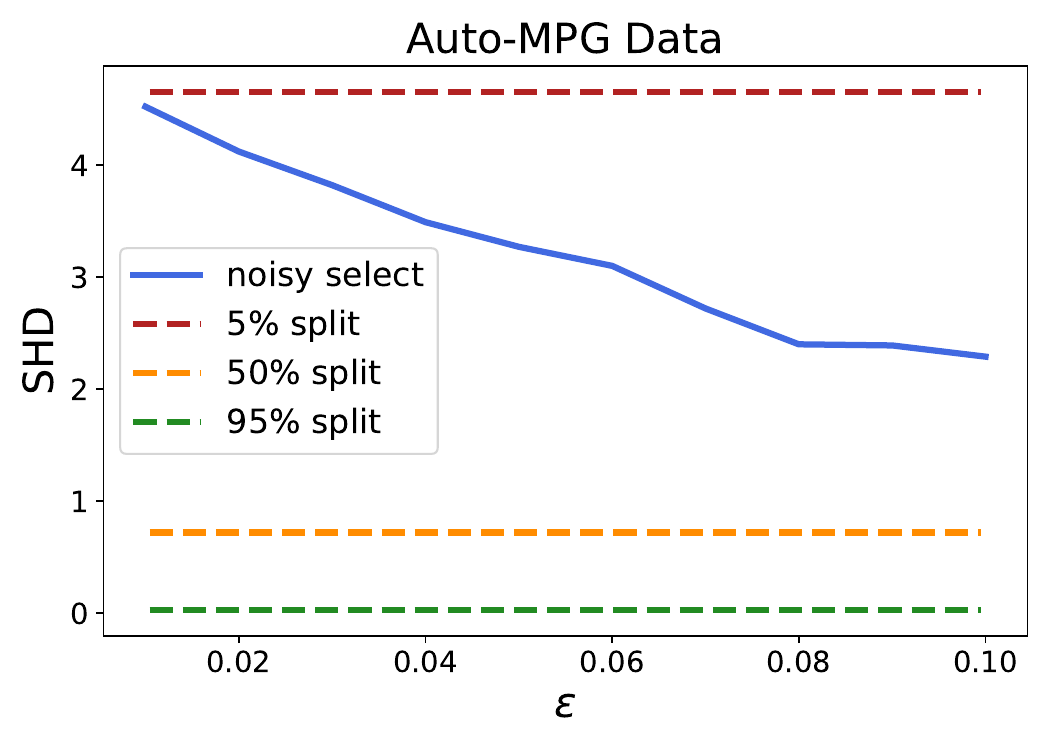}
\hspace{0.5cm}
\includegraphics[width=0.4\textwidth]{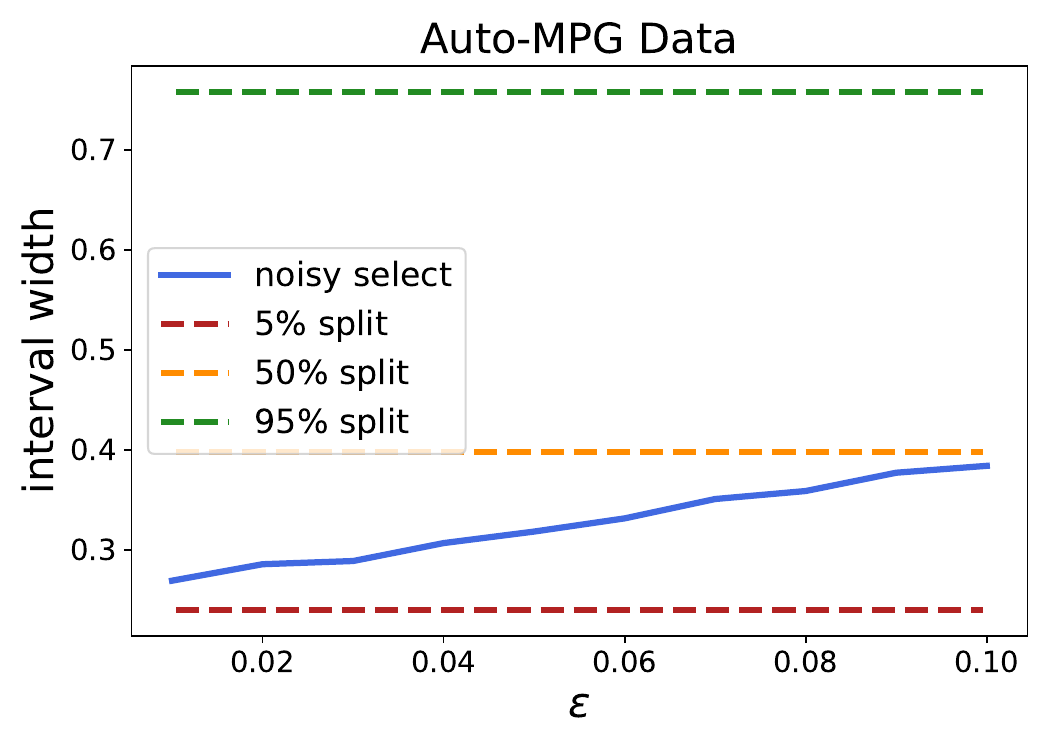}
\caption{Comparison of \textsc{noisy-select} with varying $\epsilon$ and three data splitting baselines in terms of SHD (left) and interval widths (right) on the Auto-MPG data \citep{quinlan1993combining, mooij2016distinguishing}.}\label{fig:auto_mpg}\end{figure}

\paragraph{Flow cytometry data.} For our second real-data experiment, we use flow cytometry data from \cite{sachs2005causal}. The data set has $d=11$ variables and we again take $n=100$ for our experiment. We normalize the data to have unit variance. We take $p_\mathrm{remove} = 5/|E|, p_\mathrm{add} = 5/d^2$. We vary $\epsilon$ between $0.005$ and $0.08$ and plot average structural Hamming distance and resulting average confidence interval widths over $100$ trials in Figure~\ref{fig:sachs_huber}. We observe that \textsc{noisy-select} interpolates between the interval widths of $5\%$ and $50\%$ splits while also achieving relative SHD smaller than using the smaller split.

\begin{figure}[t]
     \centering
         \includegraphics[width=0.4\textwidth]{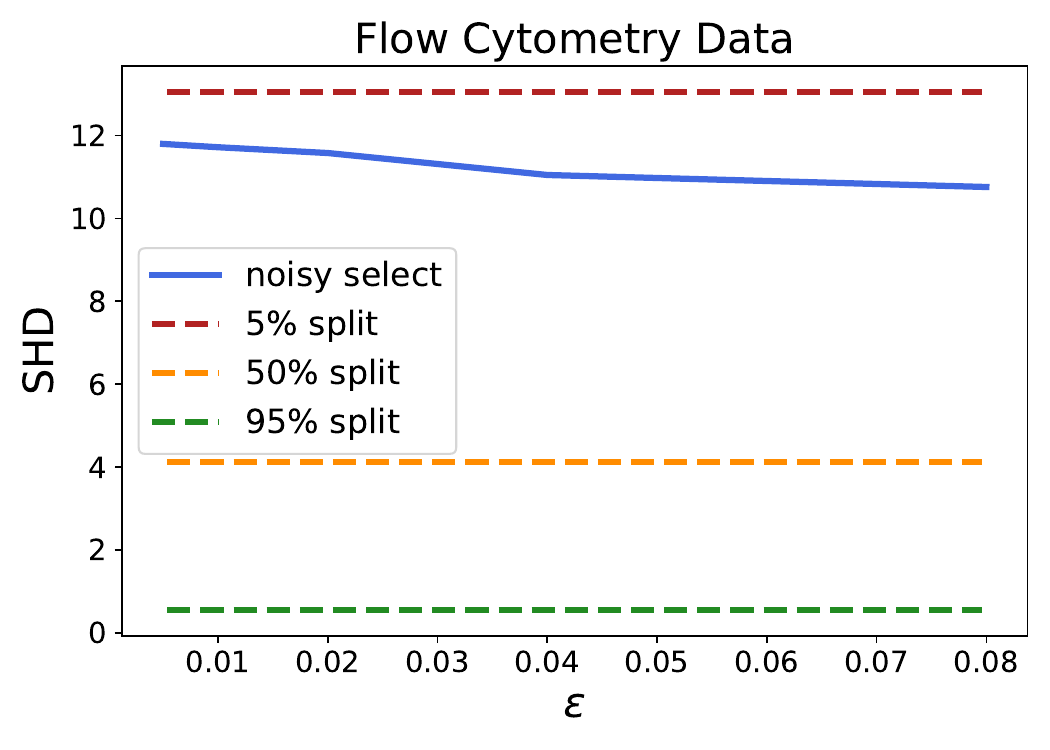}\hspace{0.5cm}
         \includegraphics[width=0.4\textwidth]{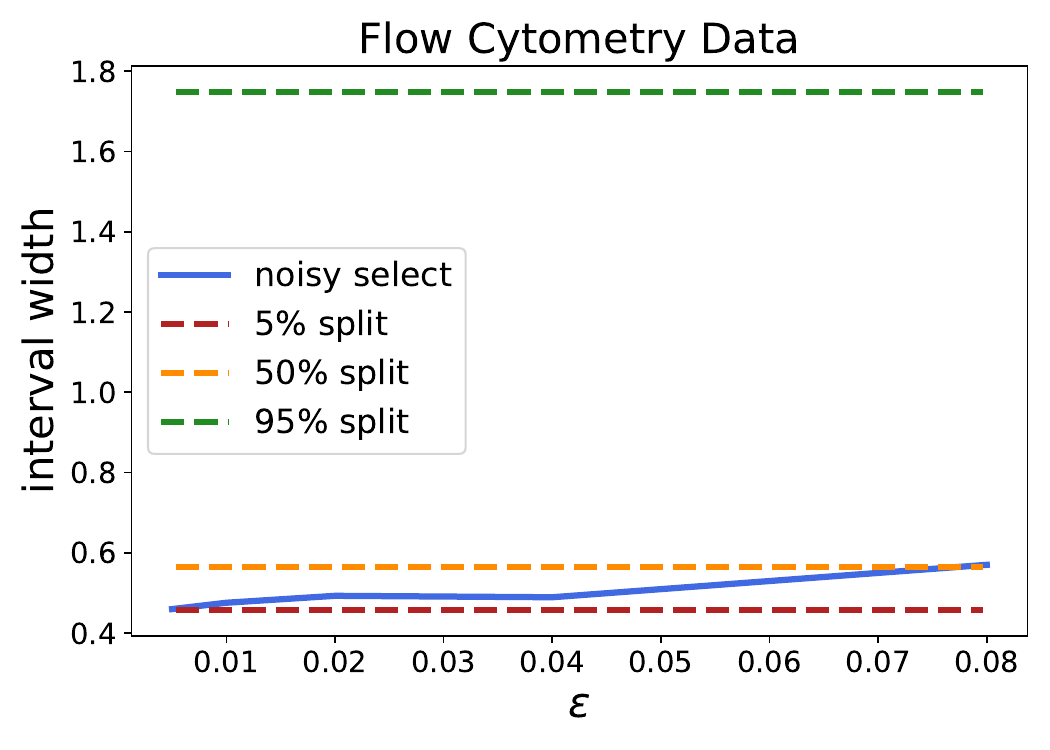}
        \caption{Comparison of \textsc{noisy-select} with varying $\epsilon$ and three data-splitting baselines in terms of SHD (left) and interval widths (right) on the flow cytometry data from \cite{sachs2005causal}.}
        \label{fig:sachs_huber}
\end{figure}

\subsubsection{Greedy Search}

Finally, we investigate the behavior of greedy search, comparing the quality of the graph found via data splitting to the quality of the graph output by \textsc{noisy-ges} and the corresponding confidence interval widths.
% Because of the frail behavior of GES itself on real data, for greedy search 
We consider a graph-generating model from the prior sections---the Erd\H{o}s-R\'enyi random graph. We take $d = 15$, set the edge weights to value between 2 and 4 and the average degree to 1 as before. We let 
% We choose this as it does asymptotically go to infinity, satisfying the requirement of Proposition 6, while also only a small amount of overhead noise. 
% We set the privacy rates as  where we average over $100$ trials (generated as above). 
 $E_{\max} = 5$ and vary the cumulative $\epsilon$ from Lemma~\ref{lemma:GES_dp_guarantee} between $0.01$ to $0.32$.

In Figure \ref{fig:split_comparison_ges} we compare the quality of the graph found via GES with three data splitting baselines and the graph found via \textsc{noisy-ges},  in terms of the structural Hamming distance (SHD) to the graph found by using GES with the same score and full data. We see that our method has better quality at the expense of wider intervals as we increase $\epsilon$ and that it can achieve behavior similar to a multitude of data split choices depending on the noise parameter selected.

\begin{figure}[b]
     \centering
         \includegraphics[width=0.4\textwidth]{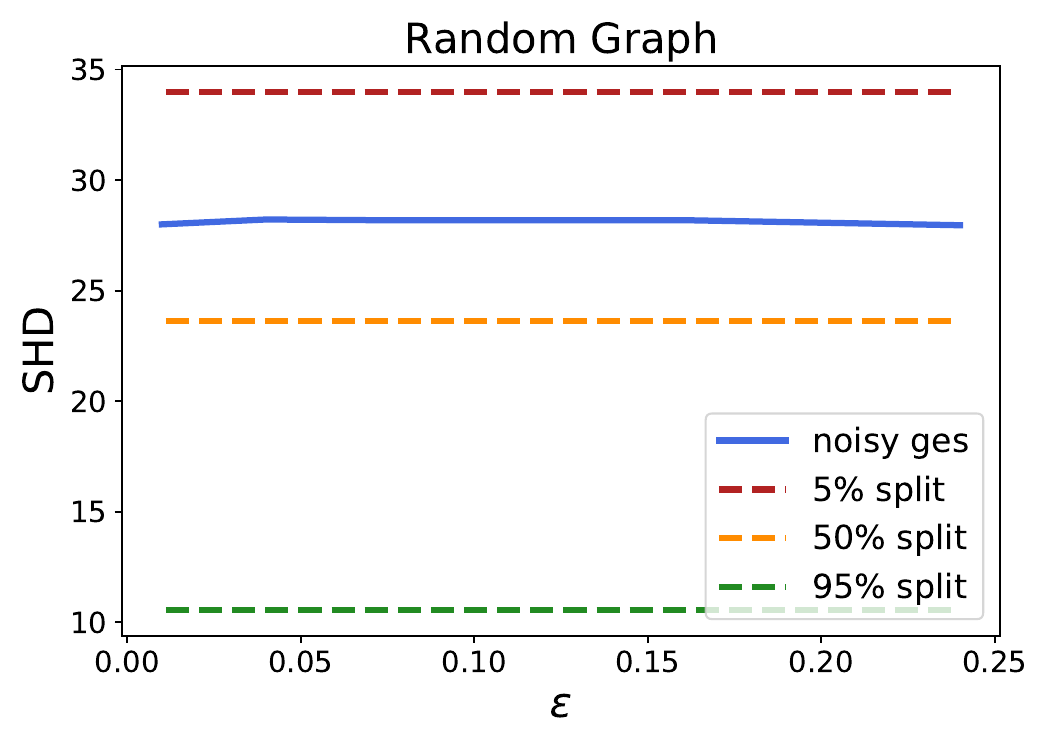}\hspace{0.5cm}
    \includegraphics[width=0.4\textwidth]{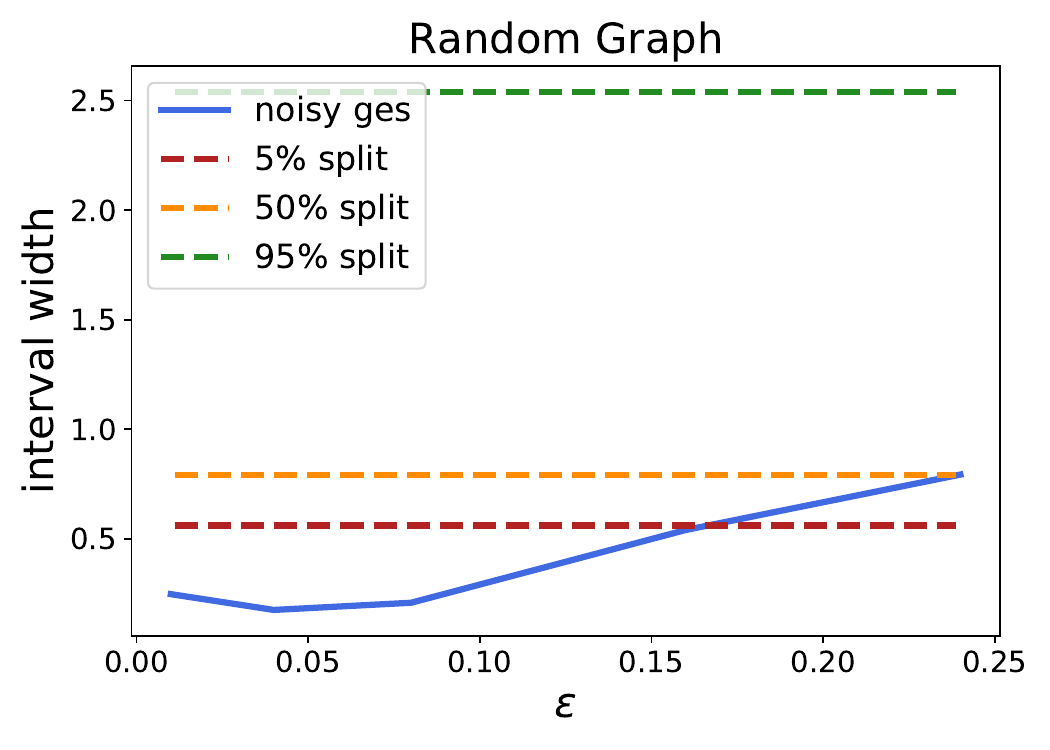}
     \caption{Comparison of \textsc{noisy-ges} with varying cumulative $\epsilon$ and classical GES with three data splitting choices in terms of SHD (left) and interval widths (right) for a random graph.}
        \label{fig:split_comparison_ges}
\end{figure}

\section{Discussion}
\label{sec:discussion}

We have presented tools for rigorous uncertainty quantification for causal estimands after score-based causal discovery. We discuss several extensions of our work and future directions.

\subsection{Causal Discovery with Interventional Data}
\label{sec:interventional}

We focused on providing valid inferences after causal graph discovery from observational data. However, our tools can be readily extended to settings 
where the data comes from multiple interventional distributions, as opposed to a single observational 
distribution.
Indeed, the principles underlying our correction do not fundamentally rely on the data being
i.i.d., and are applicable whenever the data points are merely independent---as in the
case of data collected from a set of independent interventions. We
sketch this more general setup and explain how
\textsc{noisy-ges} can be adapted to obtain a
randomized version of \emph{greedy interventional equivalence search}
(GIES)~\citep{hauser2012characterization}, a counterpart of
GES that operates on interventional data.

Suppose the data set $\D$ consists of
$n$ independent draws, $\{(X^{(i)}, \mathcal{T}^{(i)})\}_{i=1}^n$,
where $X^{(i)}$ denotes a data point from the intervention
described by $\mathcal{T}^{(i)}$. Here, $\mathcal{T}^{(i)}$ specifies
which variables are intervened on and from which distribution their
values are sampled; see \cite{hauser2012characterization} for a formal
description of how interventions can be encoded in
$\mathcal{T}^{(i)}$.

The result of \Cref{prop:max-info}, which translates differential
privacy to a bound on the max-information, only requires that the data
points are independent; the same max-information bound is thus true for
the interventional setting. Therefore, given a
differentially private version of GIES, we can straightforwardly
obtain a finite-sample correction to classical confidence intervals,
as in \Cref{thm:ges_inf}.

%More precisely, the proof of \Cref{prop:max-info} only relies on McDiarmid's inequality, which is true for arbitrary independent random variables.

To finalize the argument, we note that the noise addition strategy in
\Cref{alg:GES_general} immediately ensures differential privacy of GIES. The reason is
that GES and GIES perform computations on the data in essentially the
same way; the only differences between GES and GIES are
graph-theoretic and do not alter the basic computations applied to the
data (in particular, GES maintains a CPDAG while GIES maintains a
so-called interventional essential graph, and the set of
valid insertion and deletion operators is different for the two
algorithms). As a result, the randomization mechanism
in \Cref{alg:GES_general} carries over directly to GIES without any modifications; it has the same privacy guarantee and
leads to the same procedure for valid downstream statistical
inference.

\subsection{Inference under Misspecification}

We provide rigorous inferential guarantees on ``projection'' parameters, such as the least-squares regression coefficients in Eq.~\eqref{eq:causal_effect}, in a manner akin to existing literature on post-selection inference. Naturally, the projection parameter may be far from the true causal effect if the discovered graph is incorrect (interestingly, though, it may coincide with the true effect even if the graph is incorrect, such as in the empty graph setting in Section \ref{sec:experimental}). Importantly, this misspecification of the inferential target is not an artifact of algorithmic causal discovery---even if the graph comes from domain knowledge, it may be misspecified and the same issue arises. In fact, one may argue that this problem is inevitable; real-world mechanisms are too complex to be represented accurately in detail, and experts are often biased toward choosing simple working models. For this reason, it is necessary to not only have concepts and tools for valid inference under correctly specified settings, but also for misspecified ones. We see this as one of this work's contributions: introducing notions of valid statistical inference under misspecified causal modeling.

Nevertheless, going forward it would be valuable to understand how graph misspecification propagates into misspecification of the inferential target. We begin this investigation with additional experiments in Section \ref{sec:true_effect}, where we evaluate the validity of our methods with respect to the \emph{true} causal effect, focusing on the same simulation settings as in Section~\ref{sec:experimental}. Relatedly, it would be valuable to gain insight into which graph structures and forms of graph misspecification are benign for specification of the inferential target and which are severe. We conjecture that sparse causal structures should be robust to misspecification of the target arising from the projection perspective (which is consistent with our experimental results in Section \ref{sec:true_effect}).

\subsection{Other Causal Discovery Methods}

Finally, we believe that our tools can be extended to other causal discovery methods, in particular constraint-based, continuous-optimization-based, and hybrid approaches. The main technical challenge is to ensure that the respective methods are differentially private; the subsequent application of max-information bounds and implications to statistical validity would remain the same as in our work. 
We believe in particular that existing algorithms for differentially private optimization \citep{abadi2016deep} could be applied toward making continuous-optimization-based methods for causal discovery private.

\section*{Acknowledgements}

We thank Dan Malinsky, Ben Recht, and Bin Yu for several inspiring discussions and helpful feedback.
This work was supported in part by the Vannevar Bush Faculty Fellowship program
under grant number N00014-21-1-2941. Y. W. was supported in part by the Office of Naval Research under
grant number N00014-23-1-2590, the National Science Foundation under
Grant No. 2231174, No. 2310831, No. 2428059, and a Michigan Institute
for Data Science Propelling Original Data Science (PODS) grant.

\bibliographystyle{agsm}

\bibliography{jasa/POSI}

\newpage

\appendix

\section{Greedy Equivalence Search: Background}
\label{app:ges_appendix}

In this section we provide the details behind the greedy pass subroutine (Algorithm \ref{alg:GES_single_pass_general}) that is used in GES. In particular, we review the definitions of valid $(\texttt{sgn})$-operators that appear in \citet{chickering2002optimal}, clarify what it means to apply a given operator to the current CPDAG, and explain how the score gains $\Delta S^{\texttt{sgn}} (e,\widehat G, \D)$ are computed. As before, we use $\widehat G$ to denote the CPDAG maintained by GES.

Before we define $(\texttt{sgn})$-operators, we briefly review some graph-theoretic preliminaries. We say two nodes $X_a, X_b$ are \textit{neighbors} in a CPDAG $\widehat G$ if they are connected by an undirected edge, and \textit{adjacent} if they are connected by any edge (directed or undirected). We also call a path from $X_a$ to $X_b$ in a CPDAG \textit{semi-directed} if each edge along it is either undirected or directed away from $X_a$.

\begin{definition}
%[Definition 12 from \cite{chickering2002optimal}]
For non-adjacent $X_a$ and $X_b$ in $\widehat G$, and a subset $\mathbf{T}$ of $X_b$'s neighbors that are not adjacent to $X_a$, the $\mathrm{Insert}(X_a, X_b, \mathbf{T})$ operator is defined as the procedure that modifies $\widehat G$ by:
\begin{enumerate}
    \item inserting edge $X_a\rightarrow X_b$;
    \item for each $T \in \mathbf{T}$, converting $T-X_b$ to $T\rightarrow X_b$.
\end{enumerate}
\end{definition}

\begin{definition}
%[Definition 13 from \cite{chickering2002optimal}]
For $X_a$ and $X_b$ in $\widehat G$ connected as $X_a-X_b$ or $X_a\rightarrow X_b$, and a subset $\mathbf{T}$ of $X_b$'s neighbors that are adjacent to $X$, the $\mathrm{Delete}(X_a, X_b, \mathbf{T})$ operator is defined as the procedure that modifies $\widehat G$ by:
\begin{enumerate}
    \item deleting the edge between $X_a$ and $X_b$, 
    \item for each $T \in \mathbf{T}$, converting $X_b - T$ to $X_b \rightarrow T$ and $X_a-T$ to $X_a\rightarrow T$.
\end{enumerate}
\end{definition}

We use ``$(+)$-operator'' (resp. ``$(-)$-operator'') as a shorthand for the $\mathrm{Insert}$ operator (resp. the $\mathrm{Delete}$ operator).

Now that we have a definition of $(\texttt{sgn})$-operators, we need to define which operators are \emph{valid} to apply to the current graph. For example, if we were greedily updating only a single DAG and not a CPDAG, we would only consider edge additions that maintain the DAG structure. We define an analogous form of validity for CPDAGs, which requires a bit more care. Let $\mathbf{NA}_{X_b,X_a}$ be the neighbors of $X_b$ that are adjacent to $X_a$.

\begin{definition}
\label{def:valid_+}
We say that $\mathrm{Insert}(X_a, X_b, \mathbf{T})$ is \textit{valid} if:
\begin{enumerate}
    \item $\mathbf{NA}_{X_b, X_a} \cup \mathbf{T}$ is a clique,
    \item Every semi-directed path from $X_b$ to $X_a$ contains a node in $\mathbf{NA}_{X_b, X_a} \cup T$.
\end{enumerate}
\end{definition}

\begin{definition}
\label{def:valid_-}
We say that $\mathrm{Delete}(X_a, X_b, \mathbf{T})$ is \textit{valid} if $\mathbf{NA}_{X_b, X_a} \setminus \mathbf{T}$ is a clique.
\end{definition}

For a valid $(\texttt{sgn})$-operator, Chickering also defines how to properly score the gain due to applying it. 
%Slightly overloading the definition of $\Delta S^{\texttt{sgn}}$, we have:
In particular, the gain due to executing $\mathrm{Insert}(X_a, X_b, \mathbf{T})$ is defined as: 
\begin{equation}
\label{eq:score_gain_CPDAG1}
\Delta S^+((X_a, X_b, \mathbf{T}), \widehat G, \D) = s(X_a, \mathbf{NA}_{X_b, X_a} \cup \mathbf{T} \cup \mathbf{Pa}_{X_b} \cup X_a, \D) - s(X_b, \mathbf{NA}_{X_b, X_a} \cup \mathbf{T} \cup \mathbf{Pa}_{X_b} \cup X_a, \D).
\end{equation}
This expression is essentially an application of the identity shown in Section \ref{sec:ges}, $\Delta S^+(e, G, \D) = s(X_j,\mathbf{Pa}_j^{G}\cup X_i, \D) - s(X_j,\mathbf{Pa}_j^{G}, \D)$, for a specific DAG $G$ consistent with the CPDAG $\widehat G$ and edge $e = (X_a \rightarrow X_b)$.
Similarly, the score gain due to executing $\mathrm{Delete}(X_a, X_b, \mathbf{T})$ is defined as: 
\begin{equation}
\label{eq:score_gain_CPDAG2}	
\Delta S^{-} ((X_a, X_b, \mathbf{T}), \widehat G, \D) = s(X_b, \{\mathbf{NA}_{X_b, X_a} \setminus \mathbf{T}\} \cup \{\mathbf{Pa}_{X_b} \setminus X_a\}, \D) - s(X_b, \{\mathbf{NA}_{X_b, X_a} \setminus \mathbf{T}\} \cup \mathbf{Pa}_{X_b}, \D).
\end{equation}

Having laid out this preamble, we can now state more precisely the greedy pass subroutine (Algorithm~\ref{alg:GES_single_pass_general}) of noisy GES, which we do in Algorithm~\ref{alg:GES_single_pass_full}.

\begin{algorithm}[H]
\SetAlgoLined
\SetKwInOut{Input}{input}
\Input{initial graph $\widehat G_0$, data set $\D$, maximum number of edges $E_{\max}$, score $S$ with local score sensitivity~$\tau$, privacy parameters $\epsilon_{\mathrm{score}}, \epsilon_{\mathrm{thresh}}$, pass indicator $\texttt{sgn}\in\{+,-\}$}
\textbf{output:} estimated causal graph $\widehat G$\newline
Initialize $\widehat G\leftarrow \widehat G_0$\newline
Sample noisy threshold $\nu \sim \text{Lap}\left(\frac{4 \tau}{\epsilon_{\mathrm{thresh}}}\right)$\newline
\For{$t=1,2,\dots,E_{\max}$}{
\uIf{$\texttt{sgn} = +$}
{
Construct set $\mathcal{E}^+_t$ of all valid $\mathrm{Insert}(X_a,X_b,\mathbf{T})$ operators (Def.~\ref{def:valid_+})
}
\ElseIf{$\texttt{sgn} = -$}{
Construct set $\mathcal{E}^-_t$ of all valid $\mathrm{Delete}(X_a,X_b,\mathbf{T})$ operators (Def.~\ref{def:valid_-})
}
\ For all $e\in \mathcal{E}^{\texttt{sgn}}_t$, compute $\Delta S^{\texttt{sgn}}(e, \widehat{G}, \D)$ (according to \Cref{eq:score_gain_CPDAG1} or \eqref{eq:score_gain_CPDAG2}) and sample $\xi_{t,e} \stackrel{\text{i.i.d.}}{\sim} \text{Lap}\left(\frac{4 \tau}{\epsilon_{\mathrm{score}}}\right)$\newline
Set $e^*_t = \argmax_{e\in \mathcal{E}^{\texttt{sgn}}_t} \Delta S^{\texttt{sgn}}(e, \widehat{G}, \D) + \xi_{t,e}$\newline
Sample $\eta_t \sim \text{Lap}\left(\frac{8 \tau}{\epsilon_{\mathrm{thresh}}}\right)$\newline
\uIf{$\Delta S^{\texttt{sgn}}(e^*_t, \widehat G, \D) + \eta_t > \nu$}
{
Apply operator $e_t^*$ to $\widehat G$
}
\Else{
break
}
}
Return $\widehat G$
\caption{GreedyPass}
\label{alg:GES_single_pass_full}
\end{algorithm}

\section{Noisy Causal Discovery: Proofs}\label{sec:noisy_discovery_proofs}

\subsection{Proof of Lemma \ref{lemma:dp_noisy_sel}}

The proposition is an application of the privacy guarantees of the Report Noisy Max mechanism in differential privacy~\citep[see, e.g.,][Chapter 3.3]{dwork2014algorithmic}. In addition, the privacy analysis of Algorithm \ref{alg:GES_general} strictly subsumes the privacy analysis of Algorithm \ref{alg:score-based-sel}.

\subsection{Proof of Theorem \ref{thm:score_fn}}

By Proposition \ref{prop:max-info}, we can bound the max-information between $\widehat G$ and $\D$:
$$\mathcal{I}_\infty^{\gamma}(\widehat G;\D) \leq \frac{n}{2}\epsilon^2 + \epsilon\sqrt{n\log(2/\gamma)/2}.$$
The definition of max-information, in turn, implies that 
\begin{align*}
&\PP{\exists (i,j)\in\mathcal{I}_{G}:\beta_G^{(i\rightarrow j)} \not\in \mathrm{CI}_G^{(i\rightarrow j)}(\tilde\alpha), \widehat G = G}\\
&\quad \leq \exp\left(\mathcal{I}_\infty^\gamma(\widehat G;\D)\right) \PP{\exists (i,j)\in\mathcal{I}_{G}:\beta_G^{(i\rightarrow j)} \not\in \mathrm{CI}_G^{(i\rightarrow j)}(\tilde\alpha; \tilde\D), \widehat G = G} + \gamma\\
&\quad \leq \exp\left(\frac{n}{2}\epsilon^2 + \epsilon\sqrt{n\log(2/\beta)/2}\right)\tilde \alpha +\gamma\\
&\quad = \alpha.
\end{align*}
Marginalizing over all graphs $G$ yields the final theorem statement.

\subsection{Proof of Proposition \ref{prop:noisy_discovery_utility}}

Fix $\delta\in(0,1)$. Then, for any graph $G\in\mathcal{G}$ with $S(G,\D) \leq S(\widehat G_*,\D) - \frac{4\tau}{\epsilon}\log(2/\delta)$, noisy graph discovery outputs $G$ with probability at most $\delta$.
 
Suppose that \textsc{noisy-select} outputs a graph $G$ which is at least $\frac{4\tau}{\epsilon}\log(2/\delta)$ suboptimal, i.e. $S( G,\D) \leq S(\widehat G_*,\D) - \frac{4\tau}{\epsilon}\log(2/\delta)$. Then, this means that at least one of the following must be true: $\xi_{G} \geq \frac{2\tau}{\epsilon}\log(2/\delta)$ or $\xi_{\widehat G_*} \leq -\frac{2\tau}{\epsilon}\log(2/\delta)$. Using the CDF of the Laplace distribution together with a union bound, we have that 
$$\PP{\widehat G = G} \leq \PP{\xi_{G} \geq \frac{2\tau}{\epsilon}\log(2/\delta) \cup \xi_{\widehat G_*} \leq -\frac{2\tau}{\epsilon}\log(2/\delta)} \leq 2\PP{\xi_{G} \geq \frac{2\tau}{\epsilon}\log(2/\delta)} = \delta.$$

% \subsection{Proof of Proposition \ref{prop:bic_properties}}

% We prove that the score is $\frac{Cd}{n\sigma^2}$-sensitive. Let $\D = \{X^{(k)}\}_{k=1}^n$ and $\D' = \{X^{'(k)}\}_{k=1}^n$ denote two data sets that differ in one entry. Denote
%  \begin{align*}
% \theta_\D &= \argmin_\theta L(\theta,\D) := \argmin_\theta \frac{1}{n\sigma^2} \sum_{j=1}^d\sum_{k=1}^n  \min\left\{\left(X_j^{(k)} - \sum_{s\in \mathbf{Pa}_j^{G}} \theta_s X_s^{(k)}\right)^2, C\right\},\\
% \theta_{\D'} &= \argmin_\theta L(\theta,\D') := \argmin_\theta \frac{1}{n\sigma^2} \sum_{j=1}^d \sum_{k=1}^n  \min\left\{\left(X_j^{'(k)} - \sum_{s\in \mathbf{Pa}_j^{G}} \theta_s X_s^{'(k)}\right)^2, C\right\}.
%  \end{align*}
% We argue that $|L(\theta_\D,\D) - L(\theta_{\D'},\D') | \leq \frac{Cd}{n\sigma^2}$. First, for all $\theta$, $|L(\theta,\D) - L(\theta, \D') | \leq \frac{Cd}{n\sigma^2}$ since the corresponding sums only differ in the terms corresponding to one data point. Combining this fact with the optimality condition for $\theta_\D$, we get
% $$L(\theta_\D,\D) \leq L(\theta_{ \D'},\D) \leq L(\theta_{ \D'}, \D') + \frac{Cd}{n\sigma^2}.$$ 
% Analogously we obtain that $L(\theta_{\D'}, \D') \leq L(\theta_\D,\D) + \frac{Cd}{n\sigma^2}$. Since the clipped BIC score is equal to $L(\theta_{\D},\D)$ with an additive data-independent regularizer, then the clipped BIC score must also have sensitivity $\frac{Cd}{n\sigma^2}$.

\section{Noisy Greedy Equivalence Search: Proofs}

\subsection{Differential Privacy Preliminaries}

\begin{lemma}[Closure under post-processing~\citep{dwork2006calibrating}]
\label{lemma:dp_postprocessing}
Let $\A(\cdot)$ be an $\epsilon$-differentially private algorithm and let $\mathcal B$ be an arbitrary, possibly randomized map. Then, $\mathcal B\circ \A(\cdot)$ is $\epsilon$-differentially private.	
\end{lemma}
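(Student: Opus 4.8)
The plan is to derive the claim directly from the definition of $\epsilon$-differential privacy, handling the (possible) internal randomness of $\B$ by conditioning on it. Fix two data sets $\D$ and $\D'$ that differ in at most one entry, and fix a measurable set $\O$ in the output space of $\B$; the goal is $\PP{\B(\A(\D))\in\O} \le e^\epsilon\,\PP{\B(\A(\D'))\in\O}$. I would first dispatch the deterministic case: if $\B$ is a deterministic measurable map, set $\O' \defn \B^{-1}(\O) = \{t : \B(t)\in\O\}$, which is measurable. Then the event $\{\B(\A(\D))\in\O\}$ is exactly $\{\A(\D)\in\O'\}$, so applying the $\epsilon$-DP guarantee of $\A$ to the set $\O'$ gives $\PP{\B(\A(\D))\in\O} = \PP{\A(\D)\in\O'} \le e^\epsilon\,\PP{\A(\D')\in\O'} = e^\epsilon\,\PP{\B(\A(\D'))\in\O}$.

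For a general randomized $\B$, I would represent it as a deterministic measurable function of its argument together with an independent source of randomness: let $R$ be a random variable, independent of the internal randomness of $\A$ and of the data, and let $\B_0$ be a deterministic measurable map with $\B(t) \stackrel{d}{=} \B_0(t,R)$. For each fixed value $R=r$, the map $\B_0(\cdot,r)$ is deterministic, so the previous step yields $\PP{\B_0(\A(\D),r)\in\O} \le e^\epsilon\,\PP{\B_0(\A(\D'),r)\in\O}$. Since $R$ is independent of $\A$, integrating this pointwise inequality over the law of $R$ (Fubini) preserves it, giving $\PP{\B(\A(\D))\in\O} = \E\big[\PP{\B_0(\A(\D),R)\in\O \mid R}\big] \le e^\epsilon\,\E\big[\PP{\B_0(\A(\D'),R)\in\O \mid R}\big] = e^\epsilon\,\PP{\B(\A(\D'))\in\O}$. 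As $\D$, $\D'$, and $\O$ were arbitrary, $\B\circ\A$ is $\epsilon$-differentially private.

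I do not expect any substantive obstacle here; this is the classical post-processing argument. The only points requiring (routine) care are measurability of the preimage $\B^{-1}(\O)$ and the legitimacy of representing a randomized algorithm via a deterministic map driven by an independent random seed—both standard—so the whole argument is essentially a one-line reduction to the definition once the deterministic case is observed.
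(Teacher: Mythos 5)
Your proof is correct and is the classical post-processing argument; the paper itself does not prove this lemma but simply cites it from the differential privacy literature, and your two-step reduction (deterministic case via preimages, then randomized $\B$ via an independent seed and Fubini) is exactly the standard proof of that cited result. No gaps.
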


\begin{lemma}[Adaptive composition~\citep{dwork2006calibrating}]
\label{lemma:dp_composition}
For $t\in[k]$, let $\A_t(\cdot,a_1,a_2,\dots,a_{t-1})$ be $\epsilon_t$-differentially private for all fixed $a_1,\dots,a_{t-1}$. Then, the algorithm $\A_{\mathrm{comp}}$ which executes $\A_1,\dots,\A_k$ in sequence and outputs $a_1 = \A_1(\D), a_2 = \A_2(\D,a_1),\dots,a_k = \A_k(\D,a_1,\dots,a_{k-1})$ is $(\sum_{t=1}^k \epsilon_t)$-differentially private.
\end{lemma}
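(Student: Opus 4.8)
The plan is to prove the claim by induction on the number $k$ of composed mechanisms, exploiting the chain-rule factorization of the joint output law of $\A_{\mathrm{comp}}$. The base case $k=1$ is immediate from the definition of $\epsilon_1$-differential privacy. For the inductive step I would view $\A_{\mathrm{comp}}$ as first drawing $a_1 = \A_1(\D)$ and then running the composition of $\A_2,\dots,\A_k$ with $a_1$ frozen into the first argument slot; since each $\A_t(\cdot,a_1,a_2,\dots,a_{t-1})$ with $t\geq 2$ is still $\epsilon_t$-differentially private for every fixed $a_1,a_2,\dots,a_{t-1}$, the inductive hypothesis (applied to the $k-1$ mechanisms $\A_2,\dots,\A_k$) says that for every fixed $a_1$ this tail composition is $\left(\sum_{t=2}^k \epsilon_t\right)$-differentially private in its data argument.

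Concretely, fix neighboring data sets $\D,\D'$ and a measurable set $\mathcal{O}$ in the product output space. For $a_1$ in the range of $\A_1$, write $\mathcal{O}_{a_1} := \{(a_2,\dots,a_k) : (a_1,a_2,\dots,a_k)\in\mathcal{O}\}$ for the section and put $g(a_1) := \PP{(\A_2,\dots,\A_k)(\D',a_1) \in \mathcal{O}_{a_1}} \in [0,1]$. Expressing $\PP{\A_{\mathrm{comp}}(\D)\in\mathcal{O}}$ as the integral over $a_1\sim\A_1(\D)$ of $\PP{(\A_2,\dots,\A_k)(\D,a_1)\in\mathcal{O}_{a_1}}$ (a disintegration of the Markov-kernel composition) and applying the inductive hypothesis pointwise in $a_1$ yields $\PP{\A_{\mathrm{comp}}(\D)\in\mathcal{O}} \leq e^{\sum_{t=2}^k \epsilon_t}\,\E_{a_1\sim\A_1(\D)}[g(a_1)]$. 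What then remains is to replace the base measure $\A_1(\D)$ by $\A_1(\D')$ at the cost of a factor $e^{\epsilon_1}$.

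That last substitution is the only step I expect to require genuine care, since differential privacy of $\A_1$ is stated for events rather than for integrals of functions. To bound $\E_{a_1\sim\A_1(\D)}[g(a_1)]$ by $e^{\epsilon_1}\,\E_{a_1\sim\A_1(\D')}[g(a_1)]$ for the bounded measurable $g$, I would use the layer-cake identity $g(a_1) = \int_0^1 \mathbf{1}\{g(a_1) > s\}\,ds$, apply the $\epsilon_1$-DP inequality to each super-level event $\{g > s\}$, and exchange the order of integration via Fubini. Multiplying the two bounds gives $\PP{\A_{\mathrm{comp}}(\D)\in\mathcal{O}} \leq e^{\sum_{t=1}^k \epsilon_t}\,\PP{\A_{\mathrm{comp}}(\D')\in\mathcal{O}}$, which is precisely $\left(\sum_{t=1}^k\epsilon_t\right)$-differential privacy. (For discrete output spaces the whole argument collapses to the pointwise ratio identity $\frac{\PP{\A_{\mathrm{comp}}(\D)=(a_1,\dots,a_k)}}{\PP{\A_{\mathrm{comp}}(\D')=(a_1,\dots,a_k)}} = \prod_{t=1}^k \frac{\PP{\A_t(\D,a_1,\dots,a_{t-1})=a_t}}{\PP{\A_t(\D',a_1,\dots,a_{t-1})=a_t}} \leq e^{\sum_{t=1}^k\epsilon_t}$, where the $t$-th factor is controlled by the DP of $\A_t$ at the realized — hence fixed — prefix; the measure-theoretic version just replaces this ratio of point masses by a ratio of densities against a dominating product measure and sums by integrals.) Throughout I will assume the output spaces are standard Borel, so that the sections $\mathcal{O}_{a_1}$ are measurable and $g$ is a measurable function, which is the standard regularity under which the chain-rule factorization is justified.
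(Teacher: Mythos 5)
Your proof is correct. The paper does not actually prove this lemma---it is stated as a known result and attributed to the differential privacy literature---so there is no in-paper argument to compare against; your induction on $k$ with the chain-rule disintegration is the standard way to establish it. You have also correctly identified and handled the one step that genuinely requires care in the non-discrete setting: upgrading the $\epsilon_1$-DP guarantee of $\A_1$ from events to expectations of the bounded measurable function $g(a_1)=\PP{(\A_2,\dots,\A_k)(\D',a_1)\in\mathcal{O}_{a_1}}$ via the layer-cake identity and Fubini, under the standard Borel assumption that makes the sections $\mathcal{O}_{a_1}$ and the map $a_1\mapsto g(a_1)$ measurable. The parenthetical discrete-case density-ratio identity is the argument most textbooks give; your measure-theoretic version is the honest general statement, and nothing in it fails.
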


\subsection{Proof of Lemma \ref{lemma:GES_dp_guarantee}}

As mentioned earlier, the proof relies on the analysis of two differentially private mechanisms: Report Noisy Max and Above Threshold \citep{dwork2014algorithmic}. To facilitate the proof, in Algorithm \ref{alg:GES_single_pass_decoupled} we provide an equivalent reformulation of Algorithm \ref{alg:GES_single_pass_general} that allows decoupling the analyses of these two mechanisms.

\begin{algorithm}[H]
\SetAlgoLined
\SetKwInOut{Input}{input}
\Input{initial graph $\widehat G_0$, data set $\D$, maximum number of edges $E_{\max}$, score $S$ with local score sensitivity $\tau$, privacy parameters $\epsilon_{\mathrm{score}}, \epsilon_{\mathrm{thresh}}$, pass indicator $\texttt{sgn}\in\{+,-\}$}
\textbf{output:} estimated causal graph $\widehat G$ \newline
Initialize $\widehat G \leftarrow \widehat G_0$\newline
Get potential operators $\mathcal{E} \leftarrow \text{ProposeOperators}(\widehat G, \D, E_{\max}, S, \tau, \epsilon_{\mathrm{score}}, \texttt{sgn})$ \newline
Get selected operator subset $\mathcal{E}^* \leftarrow \text{SelectOperators}(\widehat G, \D, S, \tau, \epsilon_{\mathrm{thresh}}, \texttt{sgn}, \mathcal E)$ \newline
\For{$t = 1, \ldots, |\mathcal E^*|$}{Apply $e_t^*$ to $\widehat G$}
Return $\widehat G$
\caption{Decoupled GreedyPass}
\label{alg:GES_single_pass_decoupled}
\end{algorithm}

\begin{algorithm}[H]
\SetAlgoLined
\SetKwInOut{Input}{input}
\Input{initial graph $\widehat G_0$, data set $\D$, maximum number of edges $E_{\max}$, score $S$ with local score sensitivity $\tau$, privacy parameter $\epsilon_{\mathrm{score}}$, pass indicator $\texttt{sgn} \in \{+, -\}$}
\textbf{output:} proposed set of operators $\mathcal E$\newline
Initialize $\widehat G \leftarrow \widehat G_0$\newline
Initialize $\mathcal E \leftarrow \emptyset$\newline
\For{$t=1,2,\dots,E_{\max}$}{
\ Construct set $\mathcal{E}^{\mathrm{sgn}}_t$ of valid $(\texttt{sgn})$-operators \newline
For all $e\in \mathcal{E}_t^{\mathrm{sgn}}$, compute $\Delta S^{\mathrm{sgn}}(e, \widehat G, \D)$ and sample $\xi_{t,e} \stackrel{\text{i.i.d.}}{\sim} \text{Lap}\left(\frac{4 \tau}{\epsilon_{\mathrm{score}}}\right)$\newline
Set $e_t = \argmax_{e\in \mathcal{E}_t^{\mathrm{sgn}}} \Delta S^{\mathrm{sgn}}(e, \widehat G, \D) + \xi_{t,e}$\newline
Add operator $e_t$ to $\mathcal E$ \newline
Apply operator $e_t$ to $\widehat G$
}
Return $\mathcal E = (e_1,\dots,e_{E_{\max}})$
\caption{ProposeOperators}
\label{alg:propose_operators}
\end{algorithm}

\begin{algorithm}[H]
\SetAlgoLined
\SetKwInOut{Input}{input}
\Input{initial graph $\widehat G_0$, data set $\D$, score $S$ with local score sensitivity $\tau$, privacy parameter $\epsilon_{\mathrm{thresh}}$, pass indicator $\texttt{sgn} \in \{+, -\}$, set of proposed operators $\mathcal E$}
\textbf{output:} set of operators $\mathcal E^*$\newline
Sample noisy threshold $\nu \sim \text{Lap}\left(\frac{4 \tau}{\epsilon_{\mathrm{thresh}}}\right)$\newline
Initialize $\mathcal E^* \leftarrow \emptyset$\newline
Initialize $\widehat G \leftarrow \widehat G_0$\newline
\For{$t=1,2,\dots,|\mathcal E|$}{
Sample $\eta_t \sim \text{Lap}\left(\frac{8 \tau}{\epsilon_{\mathrm{thresh}}}\right)$\newline
\uIf{$\Delta S^{\mathrm{sgn}}(e_t, \widehat G, \D) + \eta_t \geq \nu$}{
\ Add $e_t^*$ to $\mathcal E^*$\newline
Apply $e_t^*$ to $\widehat G$}
\Else{
break
}
}
Return $\mathcal{E}^* = (e_1^*,e_2^*,\dots)$
\caption{SelectOperators}
\label{alg:select_operators}
\end{algorithm}

We argue that the two subroutines composed in the greedy pass, namely ProposeOperators (Algorithm~\ref{alg:propose_operators}) and SelectOperators (Algorithm~\ref{alg:select_operators}), are differentially private. By the closure of differential privacy under post-processing (Lemma \ref{lemma:dp_postprocessing}), this will imply that Algorithm \ref{alg:GES_single_pass_decoupled}, which returns $\widehat G$, is also differentially private, since $\widehat G$ is merely a post-processing of the selected operators $\mathcal{E}^*$.

The privacy guarantee of Algorithm \ref{alg:propose_operators} is implied by the usual privacy guarantee of Report Noisy Max and composition of differential privacy. Note that the construction of the set $\mathcal{E}_t^{\mathrm{sgn}}$ at every time step is only a function of the current graph $\widehat G$ and not of the data; i.e., it is independent of the data conditioned on $\widehat G$. Formally, the key component is the following lemma:

\begin{lemma}\label{lemma:report_noisy_max}
For any $t\in[E_{\max}]$, selecting $e_t$ is $\epsilon_{\mathrm{score}}$-differentially private; that is, for any operator $e_0\in\mathcal{E}_t^{\texttt{sgn}}$, it holds that
$$\PPst{\argmax_{e\in \mathcal{E}_t^{\texttt{sgn}}} \Delta S^{\texttt{sgn}}(e, \widehat G, \D) + \xi_{t,e} = e_0}{\widehat G} \leq e^{\epsilon_{\mathrm{score}}}\PPst{\argmax_{e\in \mathcal{E}_t^{\texttt{sgn}}} \Delta S^{\texttt{sgn}}(e, \widehat G, \D') + \xi_{t,e} = e_0}{\widehat G},$$
for any current graph $\widehat{G}$ and any two data sets $\D, \D'$ that differ in at most one data point.
\end{lemma}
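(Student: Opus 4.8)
The plan is to recognize this as an instance of the Report Noisy Max mechanism and invoke its standard privacy analysis, with two bookkeeping observations specific to our setting. First, conditioned on the current graph $\widehat G$, the candidate set $\mathcal{E}_t^{\texttt{sgn}}$ of valid $(\texttt{sgn})$-operators is a deterministic object: its construction (Definitions~\ref{def:valid_+}--\ref{def:valid_-}) refers only to the graph-theoretic structure of $\widehat G$ and not to $\D$. Hence, after conditioning on $\widehat G$, the only source of randomness in the selection of $e_t$ is the fresh Laplace noise $\{\xi_{t,e}\}_{e\in\mathcal{E}_t^{\texttt{sgn}}}$, while each $\Delta S^{\texttt{sgn}}(e,\widehat G, \D)$ is a deterministic function of the data, evaluated over the \emph{same} index set for $\D$ and $\D'$. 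Second, by $\tau$-local score sensitivity together with Eq.~\eqref{eq:local_to_global_sensitivity}, for neighboring $\D,\D'$ we have $|\Delta S^{\texttt{sgn}}(e,\widehat G,\D) - \Delta S^{\texttt{sgn}}(e,\widehat G,\D')| \le 2\tau$ for every $e$, so the vector of score gains has $\ell_\infty$-sensitivity at most $2\tau$; the noise scale $b = \frac{4\tau}{\epsilon_{\mathrm{score}}}$ used in the algorithm is exactly twice this sensitivity divided by $\epsilon_{\mathrm{score}}$, which is the calibration Report Noisy Max requires.

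Granting these, I would carry out the standard coupling argument. Fix the target output $e_0 \in \mathcal{E}_t^{\texttt{sgn}}$ and condition also on the noise values $\xi_{-e_0} := \{\xi_{t,e}\}_{e\neq e_0}$. Define the threshold $\xi^* = \inf\{\, z : \Delta S^{\texttt{sgn}}(e_0,\widehat G, \D) + z \ge \Delta S^{\texttt{sgn}}(e,\widehat G,\D) + \xi_{t,e}\ \text{for all}\ e\neq e_0 \,\}$, so that, up to a measure-zero tie event, $e_0$ is selected on $\D$ precisely when $\xi_{t,e_0} \ge \xi^*$. The key deterministic claim is that if $\xi_{t,e_0} \ge \xi^* + 4\tau$ then $e_0$ is also selected on $\D'$: for any $e\neq e_0$, chaining $\Delta S^{\texttt{sgn}}(e_0,\widehat G,\D') \ge \Delta S^{\texttt{sgn}}(e_0,\widehat G,\D) - 2\tau$, the definition of $\xi^*$, and $\Delta S^{\texttt{sgn}}(e,\widehat G,\D) \ge \Delta S^{\texttt{sgn}}(e,\widehat G,\D') - 2\tau$ yields $\Delta S^{\texttt{sgn}}(e_0,\widehat G,\D') + \xi_{t,e_0} \ge \Delta S^{\texttt{sgn}}(e,\widehat G,\D') + \xi_{t,e}$. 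Finally, the pointwise density bound $f(z) \le e^{c/b} f(z+c)$ for $\mathrm{Lap}(b)$ with $c = 4\tau \ge 0$ (immediate from $|z+c|-|z|\le c$) gives $\PP{\xi_{t,e_0} \ge \xi^*} \le e^{4\tau/b}\,\PP{\xi_{t,e_0} \ge \xi^* + 4\tau}$, and $4\tau/b = \epsilon_{\mathrm{score}}$. Combining the three displays,
$$\PPst{e_0\ \text{selected on}\ \D}{\widehat G, \xi_{-e_0}} \le e^{\epsilon_{\mathrm{score}}}\, \PPst{e_0\ \text{selected on}\ \D'}{\widehat G, \xi_{-e_0}},$$
and integrating out $\xi_{-e_0}$, with $e_0\in\mathcal{E}_t^{\texttt{sgn}}$ arbitrary, gives the claimed inequality.

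The argument itself is routine once the preparatory observations are in place; the one point that genuinely warrants care is the conditioning on $\widehat G$. In round $t$, $\widehat G$ is a random variable that already depends on the data through the operators applied in rounds $1,\dots,t-1$, so it is essential that the lemma is both stated and used \emph{conditionally on $\widehat G$}: only then is $\mathcal{E}_t^{\texttt{sgn}}$ a fixed index set and the remaining randomness the independent noise $\{\xi_{t,e}\}$, which is exactly what makes the clean Report Noisy Max bound applicable. This conditional per-round guarantee is precisely what later feeds into adaptive composition (Lemma~\ref{lemma:dp_composition}) to establish the privacy of the full pass and, in turn, Proposition~\ref{prop:GES_dp_guarantee}. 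A minor technicality is that exact ties in the $\argmax$ have probability zero under the continuous Laplace law, so they may be ignored throughout.
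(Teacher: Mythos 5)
Your argument is correct and is essentially the same as the paper's: both run the standard Report Noisy Max analysis by conditioning on $\widehat G$ and on the noise $\{\xi_{t,e}\}_{e\neq e_0}$, defining the threshold $\xi^\star$ at which $e_0$ wins on $\D$, using the $2\tau$-sensitivity of the score gains to show that shifting $\xi_{t,e_0}$ by $4\tau$ forces $e_0$ to win on $\D'$ as well, and then applying the $\mathrm{Lap}(4\tau/\epsilon_{\mathrm{score}})$ density-ratio bound before integrating out the conditioning. Your additional remarks about why $\mathcal{E}_t^{\texttt{sgn}}$ is deterministic given $\widehat G$ and about ties being measure-zero match observations the paper makes as well.
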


\begin{proof}
Denote $r_e \doteq \Delta S^{\texttt{sgn}} (e, \widehat G, \D)$ and $r'_e \doteq \Delta S^{\mathrm{sgn}} (e, \widehat G, \D')$. For a fixed $e_0\in\mathcal E_t^{\texttt{sgn}}$, define
$$\xi^\star_{t,e_0} \doteq \min\{\xi : r_{e_0} + \xi > r_{e'} + \xi_{t,e'} \; \forall e' \neq e_0\}.$$
For fixed $\{\xi_{t,e'}\}_{e'\neq e_0}$, we have that $e_0$ will be the selected operator on $\D$ if and only if $\xi_{t,e_0} \geq \xi^\star_{t,e_0}$.

Further, by the bounded sensitivity of the local scores, we have that for all $e' \neq e_0$:
\begin{align*}
r_{e_0} + \xi^\star_{t,e_0} > r_{e'} + \xi_{t,e'}  \\
\Rightarrow {r}'_{e_0} + 2\tau + \xi^\star_{t,e_0} > {r}'_{e'} - 2\tau + \xi_{t,e'} \\
\Rightarrow r'_{e_0} + \left(4\tau + \xi^\star_{t,e_0}\right) > r_{e'} + \xi_{t,e'}.
\end{align*}
Therefore, as long as $\xi_{t,e_0} \geq 4\tau + \xi^\star_{t,e_0}$, the selection on $\D'$ will be $e_0$ as well. Using the form of the density of $\xi_{t,e_0} \sim \text{Lap}\left(\frac{4\tau}{\eps_{\mathrm{score}}}\right)$, we have that:
\begin{align*}
\PPst{\argmax_{e\in \mathcal{E}_t^{\texttt{sgn}}} r_{e}' + \xi_{t,e} = e_0}{\{\xi_{t,e'}\}_{e'\neq e_0},\,\widehat G} &\geq \PP{\xi_{t,e_0} \geq 4\tau + \xi_{t,e_0}^\star} \\
&= e^{-\epsilon_{\mathrm{score}}} \PP{\xi_{t,e_0} \geq \xi_{t,e_0}^\star} \\
&= \PPst{\argmax_{e\in \mathcal{E}_t^{\texttt{sgn}}} r_{e} + \xi_{t,e} = e_0}{\{\xi_{t,e'}\}_{e'\neq e_0},\,\widehat G}.
\end{align*}
By taking iterated expectations, we have
$$\PPst{\argmax_{e\in\mathcal E_t^{\texttt{sgn}}} \Delta S^{\texttt{sgn}}(e, \widehat G, \D) + \xi_{t,e} = e_0}{\widehat G} \leq e^{\epsilon_{\mathrm{score}}}\PPst{\argmax_{e\in\mathcal E_t^{\texttt{sgn}}} \Delta S^{\texttt{sgn}}(e, \widehat G, \D') + \xi_{t,e} = e_0}{\widehat G},$$
 for all data sets $\D,\D'$ differing in at most one data point, ensuring the desired privacy.
\end{proof}

This directly implies the following result:

\begin{lemma}[Privacy of ProposeOperators]\label{lemma:propose_operators_privacy} Algorithm \ref{alg:propose_operators} is $E_{\max}\eps_{\mathrm{score}}$-differentially private.
%\begin{itemize}
%    \item $(N\eps_1,0)$-differentially private, and
%    \item $(\frac{N}{2}\eps_1^2 + \sqrt{2N\log{(1/\delta)}} \eps_1, \delta)$-differentially private
%\end{itemize}
\end{lemma}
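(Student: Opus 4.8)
The plan is to recognize Algorithm~\ref{alg:propose_operators} as an adaptive composition of $E_{\max}$ copies of the single-round Report-Noisy-Max step analyzed in Lemma~\ref{lemma:report_noisy_max}, and then conclude via adaptive composition of differential privacy (Lemma~\ref{lemma:dp_composition}). First I would isolate how the data $\D$ enters the algorithm: in round $t$ the only data-dependent computation is forming the score gains $\Delta S^{\mathrm{sgn}}(e,\widehat G,\D)$ over $e\in\mathcal E_t^{\mathrm{sgn}}$ and taking the noisy $\argmax$ to produce $e_t$; by contrast, the construction of the candidate set $\mathcal E_t^{\mathrm{sgn}}$ and the update of $\widehat G$ by applying $e_t$ are deterministic functions of the current graph alone, hence data-independent given $\widehat G$. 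Moreover the current graph at the start of round $t$ is itself a deterministic function of the fixed input $\widehat G_0$ together with the previously selected operators $e_1,\dots,e_{t-1}$.

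Next I would define, for each $t$, the mechanism $\mathcal A_t(\D;\,e_1,\dots,e_{t-1})$ that reconstructs $\widehat G$ from $\widehat G_0$ and $e_1,\dots,e_{t-1}$, builds $\mathcal E_t^{\mathrm{sgn}}$, draws fresh $\xi_{t,e}\stackrel{\text{i.i.d.}}{\sim}\mathrm{Lap}(4\tau/\epsilon_{\mathrm{score}})$, and returns $e_t=\argmax_{e\in\mathcal E_t^{\mathrm{sgn}}}\Delta S^{\mathrm{sgn}}(e,\widehat G,\D)+\xi_{t,e}$. For any \emph{fixed} prefix $e_1,\dots,e_{t-1}$ the graph $\widehat G$ is fixed, so Lemma~\ref{lemma:report_noisy_max} applies directly and shows $\mathcal A_t(\cdot;\,e_1,\dots,e_{t-1})$ is $\epsilon_{\mathrm{score}}$-differentially private, for every such prefix. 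Since Algorithm~\ref{alg:propose_operators} is precisely the sequential execution $e_1=\mathcal A_1(\D)$, $e_2=\mathcal A_2(\D,e_1)$, $\dots$, $e_{E_{\max}}=\mathcal A_{E_{\max}}(\D,e_1,\dots,e_{E_{\max}-1})$ returning $(e_1,\dots,e_{E_{\max}})$, adaptive composition (Lemma~\ref{lemma:dp_composition}) with $\epsilon_t=\epsilon_{\mathrm{score}}$ gives that the overall algorithm is $\sum_{t=1}^{E_{\max}}\epsilon_{\mathrm{score}}=E_{\max}\epsilon_{\mathrm{score}}$-differentially private, which is the claim. (Returning only the final $\widehat G$ in lieu of the operator list would follow additionally by post-processing, Lemma~\ref{lemma:dp_postprocessing}, but the stated algorithm already outputs the list.)

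The main obstacle is bookkeeping rather than genuine mathematics: one must verify that conditioning on $\widehat G$, as in the statement of Lemma~\ref{lemma:report_noisy_max}, is equivalent to conditioning on the output prefix $(e_1,\dots,e_{t-1})$ that the adaptive-composition framework conditions on—this is exactly where the data-independence of the operator-set construction and of the graph update is used. A secondary point worth stating explicitly is that the noise vectors $\{\xi_{t,e}\}_{e}$ used in distinct rounds are mutually independent and independent of $\D$, so the per-round mechanisms are randomized with independent internal coins as required by Lemma~\ref{lemma:dp_composition}.
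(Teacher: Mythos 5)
Your proposal is correct and follows exactly the paper's argument: each round is an instance of Report Noisy Max that is $\epsilon_{\mathrm{score}}$-differentially private by Lemma~\ref{lemma:report_noisy_max} (since the candidate set and graph update are data-independent given the prefix of selected operators), and the adaptive composition rule (Lemma~\ref{lemma:dp_composition}) over $E_{\max}$ rounds yields the claim. The paper's proof is a one-line invocation of these two lemmas; your additional bookkeeping about conditioning on the prefix and the independence of the noise across rounds is exactly the detail being elided.
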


\begin{proof}
The result follows directly from Lemma \ref{lemma:report_noisy_max}, by applying the adaptive composition rule for differential privacy (Lemma \ref{lemma:dp_composition}) over $E_{\max}$ steps.
% by applying the rules of simple/advanced composition resulting from being able to bound $D_1(P||Q) \leq \epsilon^2/2$ in Lemma III.2. from \cite{boosting&diffpriv} according to Remark 3.4. in \cite{bun2016concentrated}.
\end{proof}

Now we isolate the second component of the greedy pass---checking if the operator's contribution is positive. To analyze this component independently of the selection of potential operators, we consider Algorithm \ref{alg:select_operators} which receives a set of proposed operators $\mathcal E$ and outputs only the first $E_{\max}^*\leq E_{\max}$ of them which pass the noisy threshold test. Note that $E_{\max}^*$ is random and data-dependent.

In what follows we use $\mathcal E^*(\D)$ and $\mathcal E^*(\D')$ to denote the output of Algorithm \ref{alg:select_operators} on two data sets $\D, \D'$ that differ in at most one data point.

\begin{lemma}[Privacy of SelectOperators]
\label{lemma:abovethresh}
For any input set of proposed edges $\mathcal E = (e_1,\dots,e_{E_{\max}})$, Algorithm \ref{alg:select_operators} is $\epsilon_{\mathrm{thresh}}$-differentially private; that is, for any $1\leq k \leq E_{\max}+1$:
$$\PP{\mathcal E^*(\D) = (e_j)_{j< k}} \leq e^{\eps_{\mathrm{thresh}}} \PP{\mathcal E^*(\D') = (e_j)_{j< k}}$$
given any two data sets $\D, \D'$ that differ in at most one data point.
\end{lemma}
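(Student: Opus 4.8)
The plan is to reproduce, in this setting, the privacy analysis of the AboveThreshold (sparse vector) mechanism from \citep{dwork2014algorithmic}; the only twist is that the relevant query sensitivity is $2\tau$ rather than $1$, by \eqref{eq:local_to_global_sensitivity}, which is exactly why the threshold noise has scale $4\tau/\epsilon_{\mathrm{thresh}}$ and the per-step noise has scale $8\tau/\epsilon_{\mathrm{thresh}}$.

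Fix $k\in\{1,\dots,E_{\max}+1\}$ and the target prefix $(e_1,\dots,e_{k-1})$. First I would observe that on the event $\{\mathcal{E}^*(\cdot)=(e_j)_{j<k}\}$ the sequence of graphs maintained inside SelectOperators is \emph{deterministic}: the graph used to evaluate the $t$-th threshold test is $G_t$, the graph obtained from $\widehat G_0$ by applying $e_1,\dots,e_{t-1}$, and this does not depend on whether the input data set is $\D$ or $\D'$ (recall the output is always a prefix of the input list $\mathcal{E}$, so the event simply says that $e_1,\dots,e_{k-1}$ passed and $e_k$ failed). Consequently $a_t \doteq \Delta S^{\texttt{sgn}}(e_t,G_t,\D)$ and $a'_t \doteq \Delta S^{\texttt{sgn}}(e_t,G_t,\D')$ are fixed real numbers with $|a_t-a'_t|\le 2\tau$, and on this event the only randomness lies in $\nu$ and $\eta_1,\dots,\eta_k$. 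For $k\le E_{\max}$ the event equals $\{a_t+\eta_t\ge\nu\ \text{for}\ t<k\}\cap\{a_k+\eta_k<\nu\}$, while for $k=E_{\max}+1$ it is $\{a_t+\eta_t\ge\nu\ \text{for all}\ t\le E_{\max}\}$ (ties occur with probability zero).

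Next I would condition on $\nu=v$ and use independence of the $\eta_t$'s to write, with $f$ the density of $\nu\sim\mathrm{Lap}(4\tau/\epsilon_{\mathrm{thresh}})$ and $h$ the density of $\eta_k\sim\mathrm{Lap}(8\tau/\epsilon_{\mathrm{thresh}})$,
$$\PP{\mathcal{E}^*(\D)=(e_j)_{j<k}} = \int f(v)\Big(\prod_{t<k}\PP{\eta_t\ge v-a_t}\Big)\PP{\eta_k< v-a_k}\,dv,$$
where the last factor is omitted when $k=E_{\max}+1$. Then I would substitute $v\mapsto v+2\tau$ and bound the three groups of factors separately: (i) $f(v+2\tau)\le e^{\epsilon_{\mathrm{thresh}}/2}f(v)$, since shifting $\mathrm{Lap}(4\tau/\epsilon_{\mathrm{thresh}})$ by $2\tau$ changes its density by at most $e^{\epsilon_{\mathrm{thresh}}/2}$; (ii) for each of the $k-1$ passing terms, $v+2\tau-a_t\ge v-a'_t$ because $a'_t\le a_t+2\tau$, so monotonicity of the Laplace tail gives $\PP{\eta_t\ge v+2\tau-a_t}\le\PP{\eta_t\ge v-a'_t}$ \emph{at no cost}; (iii) for the single failing term, $v+2\tau-a_k\le v-a'_k+4\tau$ because $a_k\ge a'_k-2\tau$, and shifting the CDF of $\eta_k\sim\mathrm{Lap}(8\tau/\epsilon_{\mathrm{thresh}})$ by $4\tau$ costs at most $e^{\epsilon_{\mathrm{thresh}}/2}$, so $\PP{\eta_k< v+2\tau-a_k}\le e^{\epsilon_{\mathrm{thresh}}/2}\PP{\eta_k< v-a'_k}$. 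Multiplying, the two surviving factors yield $\PP{\mathcal{E}^*(\D)=(e_j)_{j<k}}\le e^{\epsilon_{\mathrm{thresh}}}\PP{\mathcal{E}^*(\D')=(e_j)_{j<k}}$; when $k=E_{\max}+1$ there is no term of type (iii), so one pays only $e^{\epsilon_{\mathrm{thresh}}/2}\le e^{\epsilon_{\mathrm{thresh}}}$, which still suffices.

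The only delicate point—and the main obstacle—is the asymmetric bookkeeping between the $k-1$ passing queries and the single failing query: the shift applied to $\nu$ must be chosen to be exactly the sensitivity $2\tau$, so that it simultaneously absorbs \emph{all} passing-query terms for free while incurring a controlled $e^{\epsilon_{\mathrm{thresh}}/2}$ overhead on the threshold density (type (i)) and another $e^{\epsilon_{\mathrm{thresh}}/2}$ on the failing term (type (iii)); verifying that the noise-scale-to-sensitivity ratios line up to exactly $\epsilon_{\mathrm{thresh}}/2+\epsilon_{\mathrm{thresh}}/2=\epsilon_{\mathrm{thresh}}$ is the crux, and everything else is the standard sparse-vector argument. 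Once this lemma is in hand, Proposition~\ref{prop:GES_dp_guarantee} follows by combining it with the privacy of ProposeOperators (Lemma~\ref{lemma:propose_operators_privacy}) via post-processing (Lemma~\ref{lemma:dp_postprocessing}) and adaptive composition (Lemma~\ref{lemma:dp_composition}), and then composing the forward and backward passes.
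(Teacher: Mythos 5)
Your proof is correct and follows essentially the same route as the paper's: both are the standard AboveThreshold privacy analysis, writing the stopping event as ``$k-1$ passes then one failure'' with deterministic score gains $a_t$ (resp.\ $r_t$) of sensitivity $2\tau$, and splitting the budget as $e^{\epsilon_{\mathrm{thresh}}/2}$ from the threshold noise $\nu\sim\mathrm{Lap}(4\tau/\epsilon_{\mathrm{thresh}})$ times $e^{\epsilon_{\mathrm{thresh}}/2}$ from the single failing query's noise $\eta_k\sim\mathrm{Lap}(8\tau/\epsilon_{\mathrm{thresh}})$, with the passing queries absorbed for free by the shift of $\nu$. The only (immaterial) difference is bookkeeping: the paper conditions on $\eta_1,\dots,\eta_{k-1}$ and performs a joint, data-dependent change of variables in $(\eta_k,\nu)$ via $g(\D)=\min_{i<k}\{r_i+\eta_i\}$, whereas you condition on $\nu$ and shift it by the fixed amount $2\tau$, handling the passing terms by monotonicity of the Laplace tail and the failing term by a $4\tau$ shift of the CDF.
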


\begin{proof}
Fix $1\leq k\leq E_{\max}+1$ and consider $(e_1,\dots,e_k)$. Let $G_1, \ldots, G_k$ be the graphs resulting from the application of operators $e_t$ in sequence, starting from the initial graph $\widehat G_0$. Define $r_t = \Delta S^{\mathrm{sgn}}(e_t, G_{t-1}, \D)$ and $r'_t = \Delta S^{\mathrm{sgn}}(e_t, G_{t-1}, \D')$. Condition on $\eta_1, \ldots, \eta_{k-1}$ and define the following quantity that captures the minimal value of the noisy score gain up to time $k-1$:
$$g(\D) = \min_{i<k} \{r_i + \eta_i\},$$
and analogously for $\D'$:
$$g(\D') = \min_{i<k} \{r'_i + \eta_i\}.$$
Using these quantities we can directly express the probability of outputting exactly the first $k-1$ proposed operators. Breaking at the $k$-th step of the algorithm, we have:
\begin{align*}
\PP{\mathcal E^*(\D)=(e_j)_{j< k}}&= \PP{\nu \in (r_k + \eta_k, g(\D)]} \\
&= \int_{-\infty}^\infty \int_{-\infty}^\infty p_{\eta_k}(q) p_\nu(w) \mathbf{1}\{w\in (r_k + q, g(\D)]\} dq dw.
\end{align*}
With the change of variables $q' = q - g( \D) + g(\D') + r_k - r'_k$, $ w' = w + g(\D') - g({\D})$, we obtain
$$\mathbf{1}\{w\in (r_k + q, g(\D)]\} = \mathbf{1}\{w' + g(\D) - g(\D') \in ( q' + g(\D) - g(\D') + r'_k, g(\D)]\} = \mathbf{1}\{ w' \in ( r_k' + q', g(\D')]\}$$
and thus
\begin{align*}
&\PP{\mathcal E^*(\D)=(e_j)_{j< k}}\\
&= \int_{-\infty}^\infty \int_{-\infty}^\infty p_{\eta_k}( q' + g(\D) - g( \D') - r_k + r_k') p_\nu(w' - g( D') + g(\D)) \mathbf{1}\{ w'\in ( r_k' + q', g( \D')]\} d q' d w'.
\end{align*}
Observe that $r_t$ is $2\tau$-sensitive since the local scores are $\tau$-sensitive, and hence $g(\D)$ is $2\tau$-sensitive as well. This implies that $|q' - q| \leq 4\tau$, $|w' - w| \leq 2\tau$, so by the form of the Laplace density we have
$$p_{\eta_k}( q' + g(\D) - g( \D') - r_k +  r_k') \leq e^{\eps_{\mathrm{thresh}} /2}p_{\eta_k}(q'),~~ p_{\nu}( w' - g( D') + g(\D)) \leq e^{\eps_{\mathrm{thresh}} /2}p_{\nu}( w').$$
Putting everything together, we have:
\begin{align*}
\PP{\mathcal E^*(\D)=(e_j)_{j<k}}
&\leq  \int_{-\infty}^\infty \int_{-\infty}^\infty e^{\eps_{\mathrm{thresh}} /2}p_{\eta_k}( q') p_\nu( w') e^{\eps_{\mathrm{thresh}}/2} \mathbf{1}\{ w' \in ( r_k' +  q', g(\D')]\} d q' d w' \\
&= e^{\eps_{\mathrm{thresh}}} \PP{\mathcal E^*(\D')=(e_j)_{j<k}},
\end{align*}
which is the desired guarantee.
\end{proof}

Finally, we combine the guarantees of Lemma \ref{lemma:report_noisy_max} and Lemma \ref{lemma:abovethresh} to infer the privacy parameter of Decoupled GreedyPass (Algorithm \ref{alg:GES_single_pass_decoupled}), which is equivalent to GreedyPass from Algorithm \ref{alg:GES_general}. The following statement follows from a direct application of privacy composition (i.e., Lemma \ref{lemma:dp_composition}).

%The privacy of Algorithm \ref{alg:GES_single_pass_decoupled} follows directly from the rule of composition.

\begin{lemma}[Privacy of Decoupled GreedyPass]\label{lemma:decoupled_ges_priv}
Algorithm \ref{alg:GES_single_pass_decoupled} is $\eps_{\mathrm{thresh}} + E_{\max}\eps_{\mathrm{score}}$-differentially private.
%\begin{itemize}
%    \item $(\eps_2 + N\eps_1, 0)$-differentially private, and
%    \item $(\eps_2 + \frac{N}{2}\eps_1^2 + \sqrt{2 N \log{1/\delta}} \eps_1, \delta)$
%\end{itemize}
\end{lemma}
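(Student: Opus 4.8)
The plan is to read Algorithm \ref{alg:GES_single_pass_decoupled} as the sequential composition of two data-dependent subroutines followed by a data-independent post-processing step, and then to invoke the adaptive composition and post-processing lemmas for differential privacy. Concretely, with the initial graph $\widehat G_0$ and all other non-data arguments held fixed, Decoupled GreedyPass first calls $\mathrm{ProposeOperators}$ to obtain a tuple of operators $\mathcal E$, then calls $\mathrm{SelectOperators}$ with $\mathcal E$ passed in as an auxiliary input to obtain a prefix $\mathcal E^*$ of $\mathcal E$, and finally applies the operators in $\mathcal E^*$ one by one to $\widehat G_0$ to produce $\widehat G$.

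First I would record the two component guarantees. Viewed as a map $\D \mapsto \mathcal E$, Algorithm \ref{alg:propose_operators} is $E_{\max}\eps_{\mathrm{score}}$-differentially private by Lemma \ref{lemma:propose_operators_privacy}. Viewed, for each fixed tuple $\mathcal E$, as a map $\D \mapsto \mathcal E^*$, Algorithm \ref{alg:select_operators} is $\eps_{\mathrm{thresh}}$-differentially private by Lemma \ref{lemma:abovethresh}. Then I would apply the adaptive composition rule (Lemma \ref{lemma:dp_composition}) with $k=2$, taking $\A_1 = \mathrm{ProposeOperators}$ and $\A_2(\cdot,\mathcal E) = \mathrm{SelectOperators}(\cdot;\mathcal E)$: the composite algorithm that outputs the pair $(\mathcal E,\mathcal E^*)$ is $(\eps_{\mathrm{thresh}} + E_{\max}\eps_{\mathrm{score}})$-differentially private. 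Since $\widehat G$ is obtained from $(\mathcal E^*,\widehat G_0)$ by a fixed sequence of operator applications that does not touch the data, $\widehat G$ is a post-processing of $(\mathcal E,\mathcal E^*)$, so closure under post-processing (Lemma \ref{lemma:dp_postprocessing}) gives that $\D \mapsto \widehat G$ is $(\eps_{\mathrm{thresh}} + E_{\max}\eps_{\mathrm{score}})$-differentially private, which is the claim.

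The main point requiring care — and essentially the only nontrivial step — is the uniform-in-$\mathcal E$ quantifier needed to chain the two subroutines: adaptive composition applies only because Lemma \ref{lemma:abovethresh} guarantees $\eps_{\mathrm{thresh}}$-privacy of $\mathrm{SelectOperators}$ for \emph{every} fixed proposed tuple $\mathcal E$ (not merely for the specific random $\mathcal E$ returned by $\mathrm{ProposeOperators}$), so that $\mathcal E$ may legitimately be treated as adversarially chosen side information produced from the data. One should also note that $\widehat G_0$ is a fixed argument within a single pass; the chaining of the forward and backward passes, and hence the extra factor of two, is handled separately in Proposition \ref{prop:GES_dp_guarantee}. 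Everything else is bookkeeping once the two component privacy lemmas are in hand.
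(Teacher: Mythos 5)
Your proof is correct and follows essentially the same route as the paper: adaptive composition (Lemma~\ref{lemma:dp_composition}) of ProposeOperators ($E_{\max}\eps_{\mathrm{score}}$-DP by Lemma~\ref{lemma:propose_operators_privacy}) with SelectOperators ($\eps_{\mathrm{thresh}}$-DP for every fixed $\mathcal E$ by Lemma~\ref{lemma:abovethresh}), followed by closure under post-processing (Lemma~\ref{lemma:dp_postprocessing}) to pass from $(\mathcal E,\mathcal E^*)$ to $\widehat G$. Your remark about needing the privacy of SelectOperators uniformly over all fixed proposed tuples $\mathcal E$ is exactly the point that makes the adaptive composition legitimate, and is implicit in the paper's one-line justification.
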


%To conclude we make the following observation:
%
%\begin{observation}
%Conditional on the sampled randomness, the output graphs obtained by running Algorithm \ref{alg:GES_single_pass} and Algorithm \ref{alg:GES_single_pass_decoupled} coincide.
%\end{observation}

\begin{proof}[Proof of Lemma \ref{lemma:GES_dp_guarantee}] Since the GES algorithm (Algorithm \ref{alg:GES_general}) consists of two executions of GreedyPass, which is equivalent to the Decoupled GreedyPass, we can apply Lemma \ref{lemma:decoupled_ges_priv} and Lemma \ref{lemma:dp_composition} to conclude that GES is $(2\epsilon_{\mathrm{thresh}} + 2E_{\max} \epsilon_{\mathrm{score}})$-differentially private.
\end{proof}

\section{Further Empirical Studies and Details}\label{app:experiments}

\subsection{Additional Experimental Details}

\paragraph{Estimation of Sensitivity.} For \textsc{noisy-select} we estimate the sensitivity at every step by computing the maximum score change between the original dataset and a copy which has one entry replaced by another randomly selected one from the same dataset. We take the maximum over the graph variants that the method chooses between. For \textsc{noisy-ges} we perform a similar procedure but at every step we take the maximum score change over ten randomly chosen possible edge additions/deletions.

\paragraph{Huber score.} Throughout our experiments we use a variation of the BIC score where we replace the squared loss with the Huber loss for increased robustness. Informally, we refer to this score as the ``Huber score.'' More precisely, for a fixed $\delta>0$, we use
\begin{equation*}
S_{\text{Huber}}\left(G,\D\right) = - \min_\theta \frac{1}{n \sigma^2} \sum_{j=1}^d \sum_{k=1}^n L_\delta\left(X_j^{(k)} - \sum_{s \in \mathbf{Pa}_j^{G}}\theta_s X_s^{(k)} \right) - \sum_{j=1}^d\frac{|\mathbf{Pa}_j^G|}{n} \log n.
\end{equation*}
where $L_\delta(a) = \begin{cases} \frac{1}{2}a^2 &\text{if } a \leq \delta, \\\delta \cdot (|a|-\frac{1}{2}\delta) &\text{otherwise.}\end{cases}$\\

We set the $\delta$ parameter to $0.5$ in all experiments except for the Auto-MPG one where we use $\delta =0.25$. 
 
\subsection{Validity Experiments for Classical GES with Huber Score}\label{subsec:huber_validity}

In Figure~\ref{fig:validity_plots_huber} below we show the miscoverage probability in the settings from Section~\ref{sec:exp_validity} but this time using the Huber score for GES as well. This shows that simply using a robust score does not help ameliorate the miscoverage caused by ``double dipping.'' 
\begin{figure}[t]
     \centering
     \includegraphics[width=0.4\textwidth]{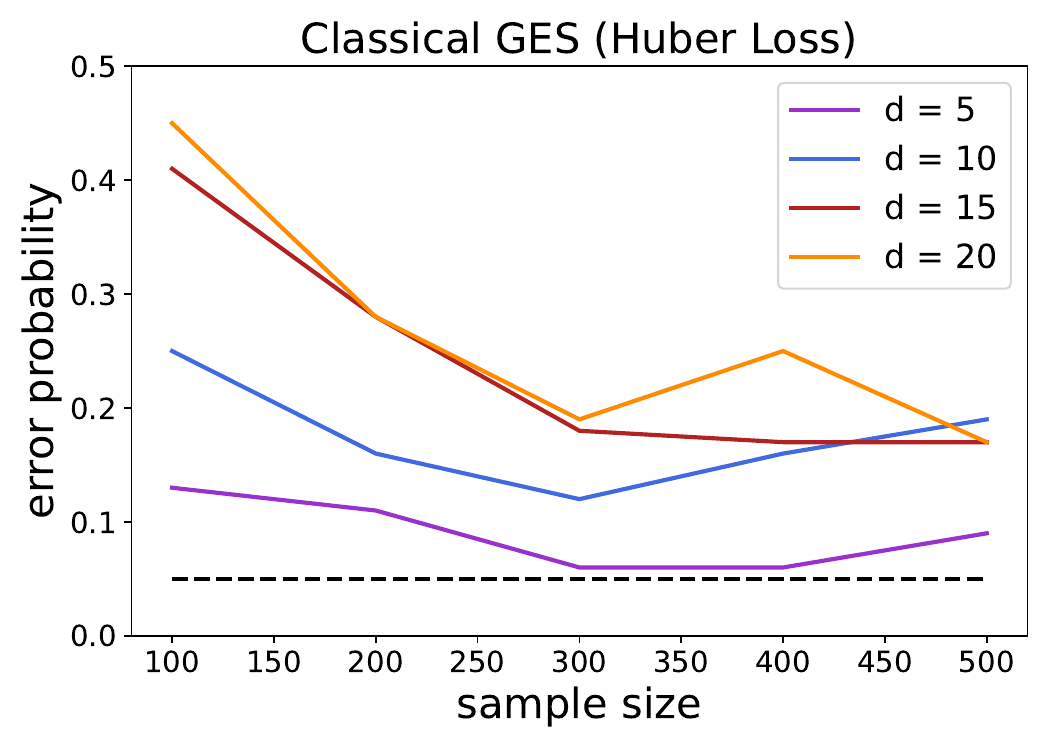}
    \hspace{0.5cm}\includegraphics[width=0.4\textwidth]{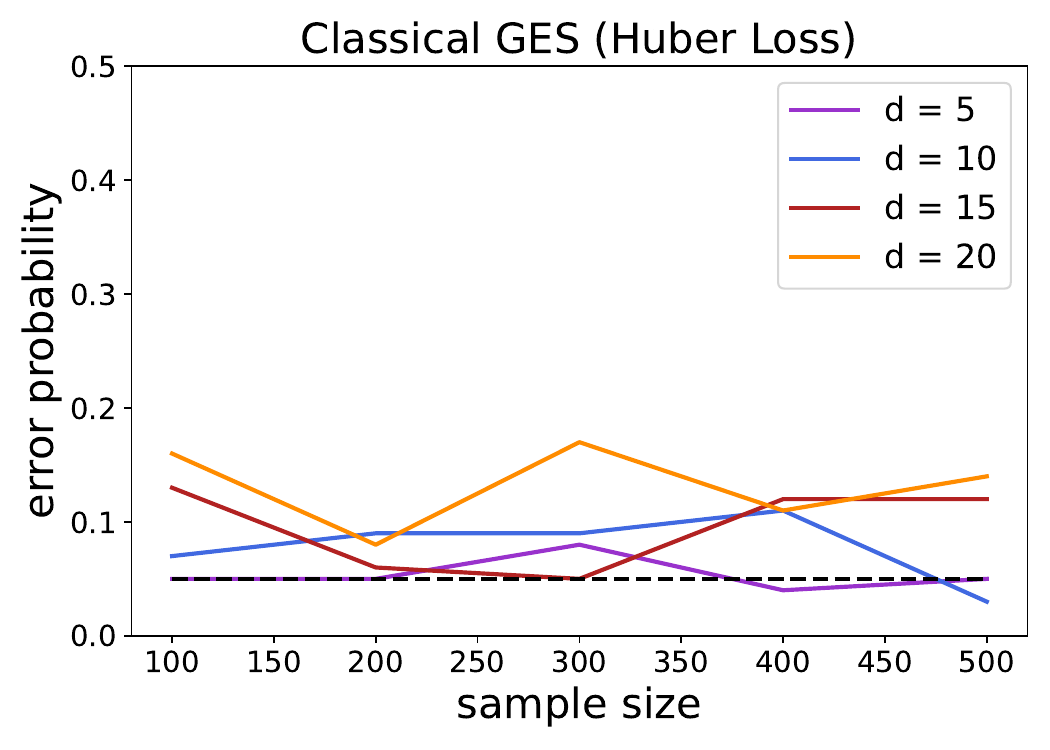}
        \caption{Probability of error for varying $n$ and $d$ of classical GES with Huber score in the empty graph (left) and random graph (right).}
        \label{fig:validity_plots_huber}
\end{figure}

\subsection{Validity with Respect to the True Causal Effect}
\label{sec:true_effect}

We perform additional experiments where we evaluate the validity of \textsc{noisy-ges} with respect to the true causal effect. We focus on the sparse random graph setting from Section~\ref{sec:exp_validity}, this time taking the true causal effect as the ground-truth value of the estimand, rather than the projection parameter. Recall that in the empty graph setting the plots in Section~\ref{sec:exp_validity} implicitly measured error with respect to the true causal effect.  We set $\alpha=0.05$ and vary $\epsilon\in\{0.02, 0.04\}$ and $d\in\{5,10,15,20\}$. We average the error over $100$ trials.

In Figure \ref{fig:true_effect_select} we plot the error of exact selection and \textsc{noisy-select}. We observe that exact selection violates the type I error guarantee for all $d$ and all sample sizes; the error is above $0.05$ almost everywhere. The \textsc{noisy-select} method exhibits an inflated error rate for smaller sample sizes, but drops the error below $\alpha$ for larger sample sizes. In Figure \ref{fig:true_effect_ges} we plot the error of classical GES and \textsc{noisy-ges}. Classical GES exhibits a significantly higher inflation of error than \textsc{noisy-ges}, showing the benefits of a post-selection correction. Moreover, the error of \textsc{noisy-ges} is not inflated uncontrollably: it essentially does not exceed $0.15$, supporting our conjecture that in sparse graphs projection effects are likely to be close to true causal effects.

\begin{figure}[t]
     \centering
     \includegraphics[width=0.32\textwidth]{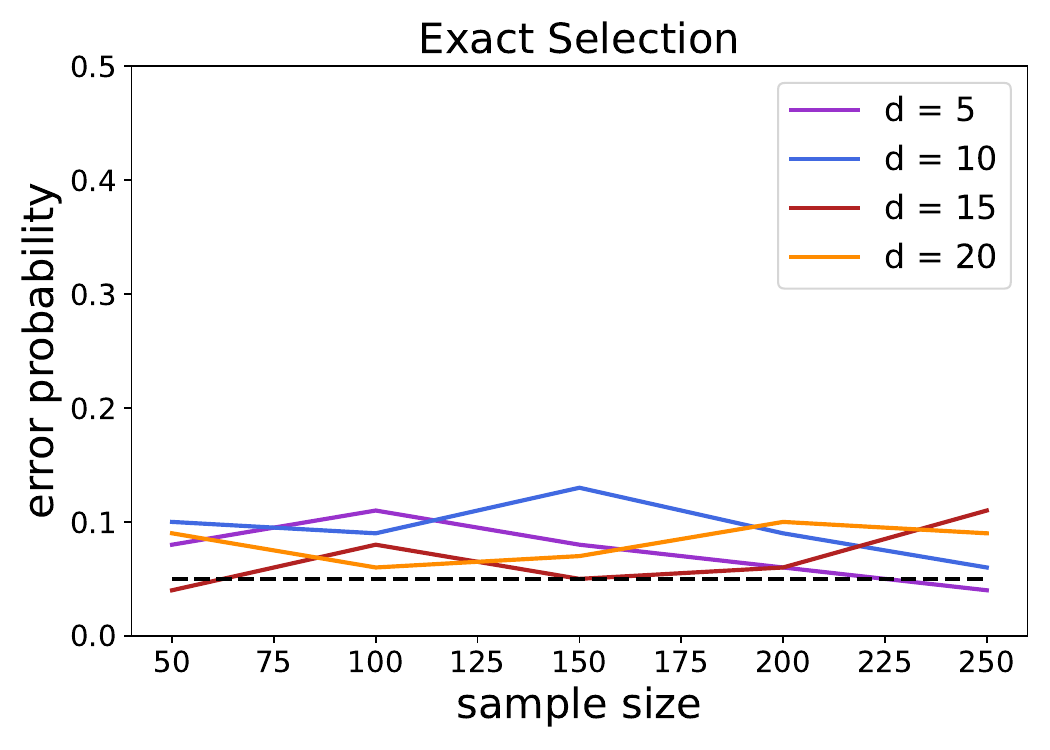}
     \includegraphics[width=0.32\textwidth]{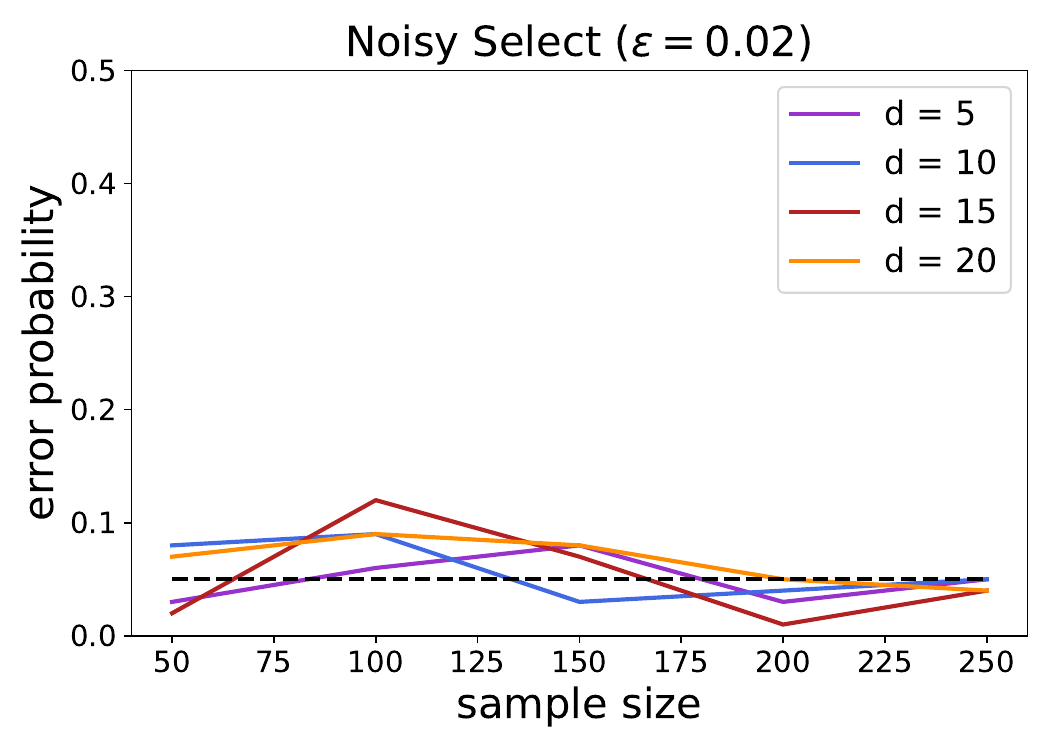}
    \includegraphics[width=0.32\textwidth]{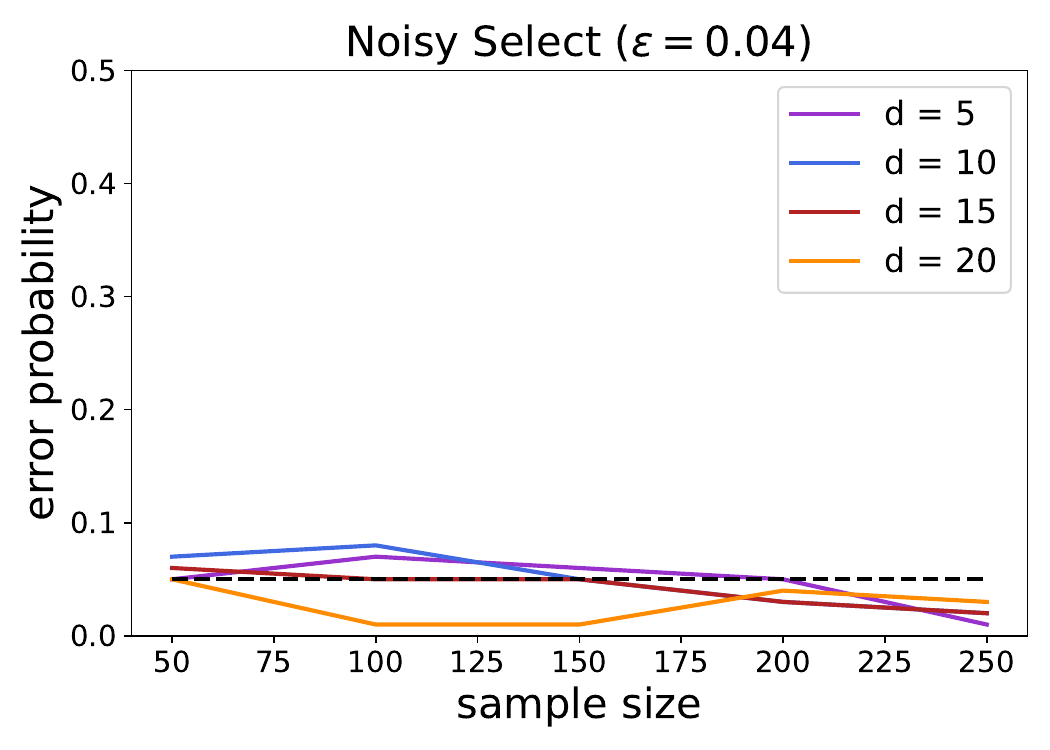}
        \caption{Probability of error after causal discovery via exact selection and \textsc{noisy-select} for varying $\epsilon$, $n$, and $d$. We measure error with respect to the true causal effect, rather than the projection effect.}
        \label{fig:true_effect_select}
\end{figure}

\begin{figure}[t]
     \centering
     \includegraphics[width=0.32\textwidth]{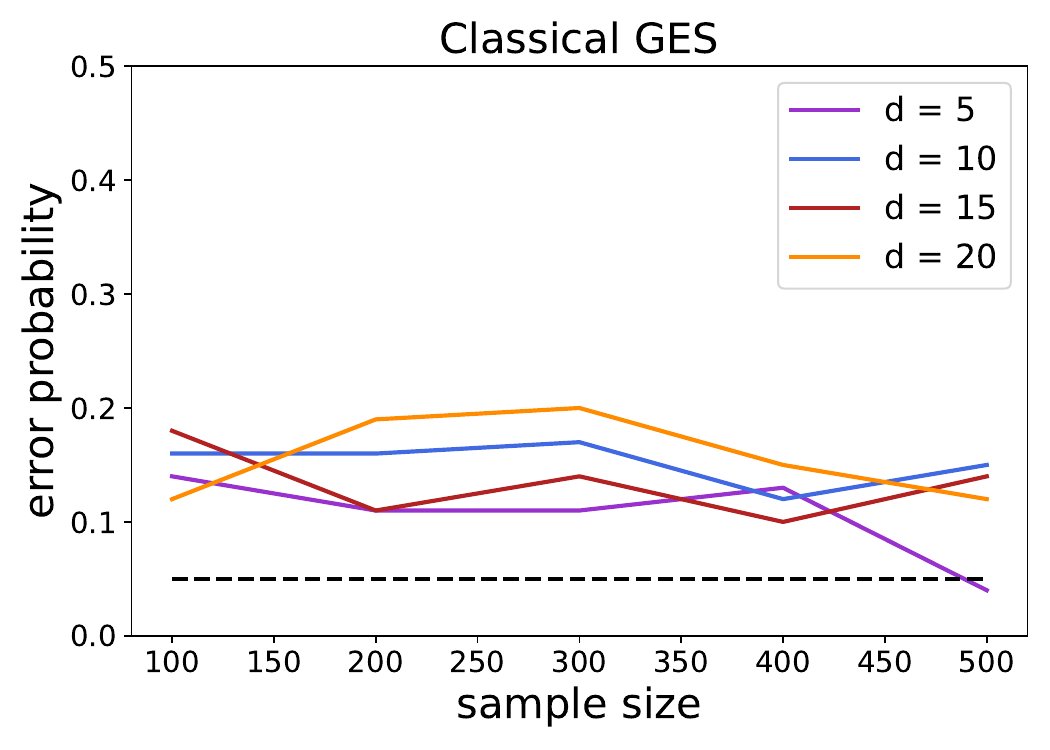}
     \includegraphics[width=0.32\textwidth]{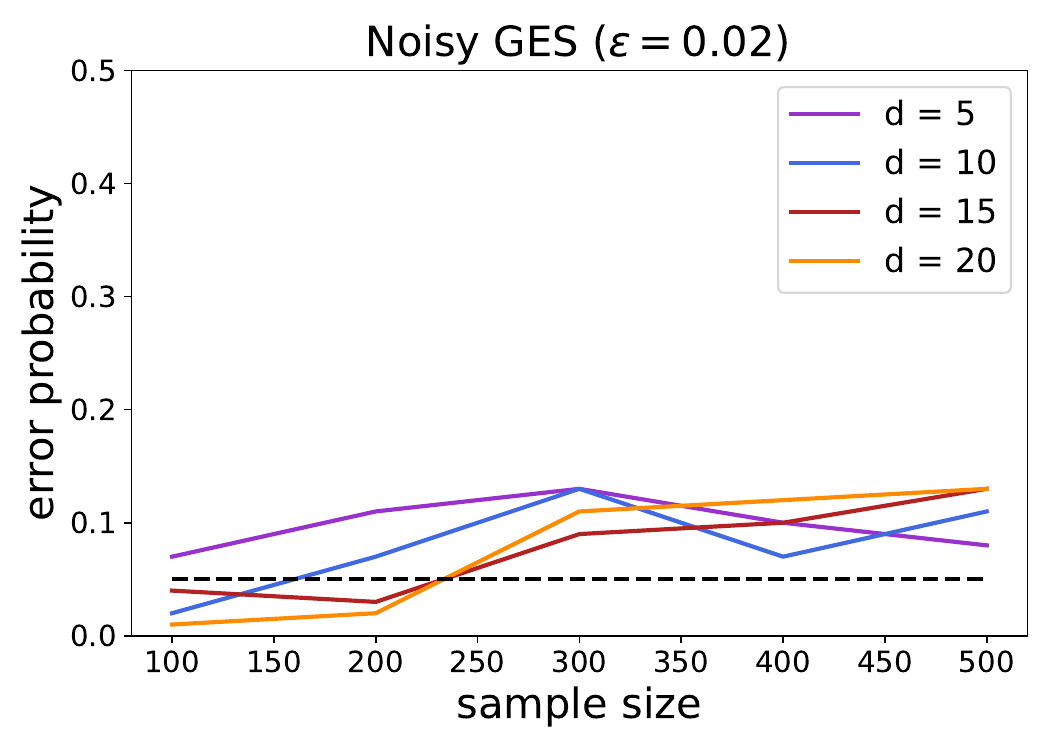}
    \includegraphics[width=0.32\textwidth]{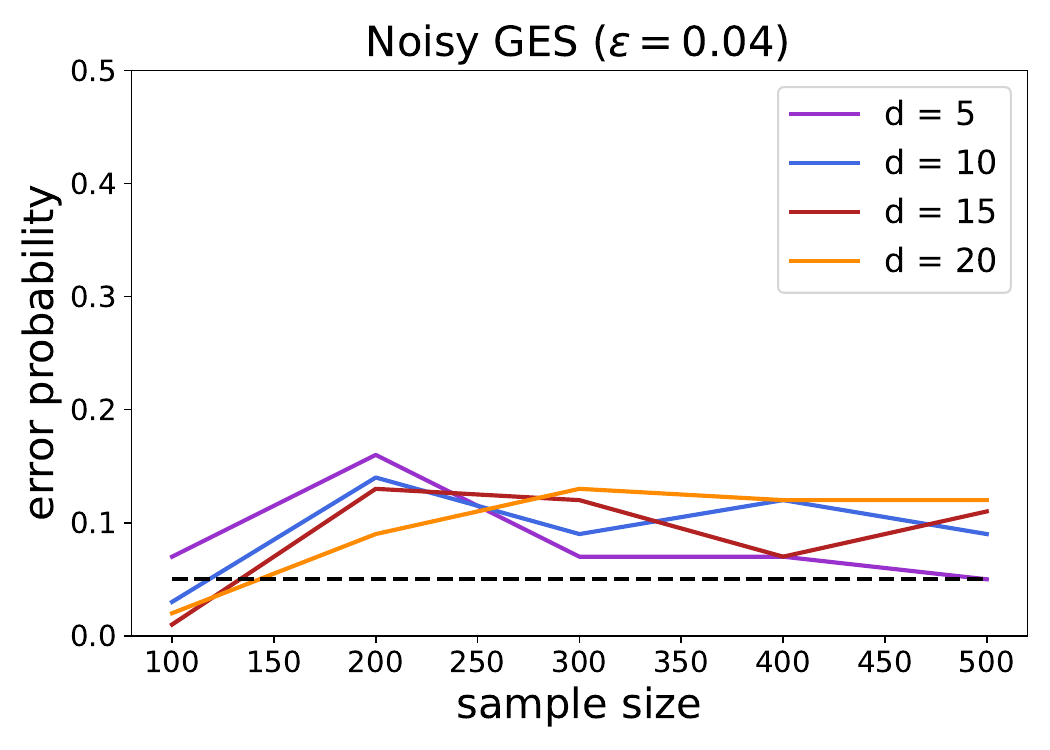}
        \caption{Probability of error after causal discovery via \textsc{noisy-ges} for varying $\epsilon$, $n$, and $d$. We measure error with respect to the true causal effect, rather than the projection effect.}
        \label{fig:true_effect_ges}
\end{figure}

\end{document}